\documentclass[11pt]{article}

\usepackage[T1]{fontenc}
\usepackage[margin=1in]{geometry}
\usepackage{microtype}
\usepackage{amsmath,amssymb,amsthm,mathtools}
\usepackage{tikz}
\usetikzlibrary{arrows.meta,positioning}
\usepackage{enumitem}
\usepackage{cite}
\usepackage{authblk}
\usepackage[hidelinks]{hyperref}
\usepackage[capitalise,nameinlink,noabbrev]{cleveref}

\allowdisplaybreaks[2]
\setlist[itemize]{leftmargin=*,topsep=3pt,itemsep=2pt}
\setlist[enumerate]{leftmargin=*,topsep=3pt,itemsep=2pt}

\newtheorem{theorem}{Theorem}[section]
\newtheorem{lemma}[theorem]{Lemma}
\newtheorem{proposition}[theorem]{Proposition}
\newtheorem{corollary}[theorem]{Corollary}

\newtheorem*{informaltheorem}{Informal Theorem}
\theoremstyle{definition}
\newtheorem{definition}[theorem]{Definition}
\theoremstyle{remark}
\newtheorem{remark}[theorem]{Remark}

\newcommand{\C}{\mathbb C}
\newcommand{\Q}{\mathbb Q}
\newcommand{\bits}{\{0,1\}}
\newcommand{\cF}{\mathcal F}
\newcommand{\cG}{\mathcal G}
\newcommand{\cH}{\mathcal H}
\newcommand{\cP}{\mathcal P}
\newcommand{\cR}{\mathcal R}
\newcommand{\MG}{\operatorname{MG}}
\newcommand{\Hol}{\operatorname{Holant}}
\newcommand{\PerfMatch}{\operatorname{PerfMatch}}
\newcommand{\Pf}{\operatorname{Pf}}
\newcommand{\supp}{\operatorname{supp}}
\newcommand{\arity}{\operatorname{arity}}
\newcommand{\barity}{\operatorname{barity}}
\newcommand{\trdeg}{\operatorname{trdeg}}
\newcommand{\dist}{\operatorname{dist}}
\newcommand{\Mat}{\operatorname{Mat}}
\newcommand{\bw}{\operatorname{bw}}
\newcommand{\ind}{\operatorname{ind}}
\newcommand{\one}{\mathbf 1}
\newcommand{\trans}{\mathsf T}
\newcommand{\Contr}{\operatorname{Contr}}
\newcommand{\row}{\operatorname{row}}
\newcommand{\Ann}{\operatorname{Ann}}
\newcommand{\Span}{\operatorname{span}}
\newcommand{\rank}{\operatorname{rank}}

\newcommand{\sgn}{\operatorname{sgn}}

\title{When Matchgate Base Collapse Fails:\\
A Qutrit Trichotomy and Unbounded Exact Width}
\author[1]{Chenghua Liu}
\author[2]{Boning Meng}

\affil[1]{Institute of Software, Chinese Academy of Sciences, Beijing, China}
\affil[2]{University of Regensburg, Regensburg, Germany}
\affil[ ]{\texttt{liuch.russell@gmail.com}, \texttt{mengboning2013@gmail.com}}
\date{}

\begin{document}

\maketitle

\begin{abstract}
Holographic algorithms solve planar counting problems by encoding each value
of a finite domain into several Boolean matchgate wires.  Base collapse asks
whether every exact representation can be compressed to a number of wires per
edge bounded only by the domain size.  Chen (STOC 2016) and Xia (STOC 2016)
established broad positive collapse theorems under full-rank and related
structural hypotheses.  Together, these works left open whether a domain-only
collapse bound survives when several locally deficient signatures must share
one encoding.  We resolve this open problem negatively, already on a
three-state (qutrit) domain.  For every $a\ge4$, we construct a five-label
integer-valued qutrit language of maximum arity $a$ with an exact rational
matchgate presentation, yet every exactly equivalent label-preserving
presentation requires common width $\Omega(2^{a/2}/a)$, even if its domain,
base, tensors, and complex weights may all change.  The construction avoids
the usual local degeneracies, so the obstruction is genuinely simultaneous.
Algebraic geometry then drives an exhaustive trichotomy explaining the
boundary of collapse: the gadget closure either separates into rays, is
Gaussian-mobile and collapses to width at most three, or is confined to a
rigid plane-plus-ray flag containing our unbounded family.  Pure-spinor
geometry forces compression in the mobile case, while dimension bounds for
matchgate varieties and Zariski avoidance place exact integer data outside
every low-width algebraic image.  Thus one geometric framework explains both
why collapse occurs and why it can fail without bound.
\end{abstract}


\section*{Acknowledgments}
OpenAI Codex's principal substantive contribution to the mathematical
development was to help identify, introduce, and formulate the
algebraic-geometric viewpoint used in the proofs.  In particular, it suggested
treating the Zariski closures of low-width star-table images as
$\Q$-defined algebraic varieties, using finite Zariski avoidance to select an
integer obstruction table, and applying pure-spinor incidence and
common-annihilator geometry to obtain low-rank matchgate covers.  The authors
checked, revised, and integrated these suggestions into the final arguments.

\section{Introduction}
\label{sec:introduction}

Holographic algorithms turn finite-domain planar counting problems into
Boolean networks of planar matchgates, where the global sum is evaluated by
Pfaffians and weighted perfect matchings.  Matchgates first arose in
Valiant's study of classically simulable quantum circuits and later became
the computational engine of holographic algorithms
\cite{Valiant2002,Valiant2006,Valiant2008}.  In broad Boolean counting-CSP
settings they account precisely for the tractability created by planarity
\cite{CaiLuXia2017}.  Understanding their representations is therefore part
of understanding why this form of exact cancellation works at all.

The change of representation is mediated by a \emph{base}.  A width-$t$
base assigns every original domain value a vector in a $2^t$-dimensional
Boolean space: operationally, each domain edge is replaced by $t$ Boolean
matchgate wires.  Crucially, the same encoding must work at every occurrence
of every label.  Extra wires may therefore be mere coordinate redundancy,
or they may store compatibility information that is visible only when many
labelled tensors are represented simultaneously.

The central collapse question is whether every matchgate-realizable labelled
language on a $q$-element domain admits a common presentation whose width is
bounded by a function of $q$ alone.  We answer it negatively: already for
$q=3$, minimum exact common width is unbounded.  This resolves the
uniform-width aspect of the multi-signature problem left open after Cai and
Fu's full-rank-generator theorem~\cite{CaiFu2014} and Chen's all-domain
full-rank theorem~\cite{Chen2016}.  Xia supplied
further collapse certificates but explicitly isolated this gap: base
directions unused by one signature may still be needed by another, and the
attainable common collapse size was unknown
\cite[end of Section~5]{Xia2016}.

If a domain-only bound existed, arbitrarily large ambient matchgate spaces
would add no exact expressive power at fixed domain size, and the search for
holographic algorithms could be restricted to bounded-width bases.  Our
result shows instead that simultaneous compatibility can carry irreducibly
large information.  The question is therefore about the intrinsic size of a
shared representation, not merely about shortening one displayed base
matrix.

Matchgate signatures also have two complementary descriptions.  They are
boundary tensors of planar perfect-matching gadgets, and, after Boolean
indices are read as occupation modes, they are Pfaffian or
fermionic-Gaussian tensors, subject to the usual parity conventions
\cite{Bravyi2009,JozsaMiyake2008}.  Base collapse can accordingly be viewed
as a structured common-mode compression problem for an exact Gaussian
tensor language.  Ordinary linear algebra alone does not answer it: a
qutrit fits inside two Boolean bits, but an entire labelled qutrit language
need not become matchgate under one small common encoding.

\subsection{The collapse question and our answer}

We measure representation size by \emph{minimum exact common width}: the
fewest Boolean wires per domain edge among all matchgate presentations that
preserve the finite label families, their arities and port orders, and the
exact value of every planar labelled instance.  A competing presentation may
change the finite domain, the common base, every coordinate tensor, and all
complex weights.  It may not replace a label by a gadget or merely preserve
polynomial-time computability.  The precise notions appear in
\cref{sec:equivalence-definition}.

\begin{informaltheorem}[Unbounded exact width on a qutrit domain]
For every $k\ge2$ there is a five-label integer-valued qutrit language
$\cF_k\mid\{B_k\}$ whose unique right label $B_k$ has arity $k+2$.  It has
an exact rational matchgate presentation, but its minimum exact common width
is
\[
  \Omega\!\left(\frac{2^{k/2}}{k}\right).
\]
The displayed presentation can simultaneously be chosen with a full-row-rank
base, support-essential left and right families, rank exactly two at every
port of every nonunary primitive, and a connected two-center gadget whose
boundary tensor is nonfactorable across the two centers.
\end{informaltheorem}

There are only five labels: the three unary qutrit pins, binary disequality
on states $0,1$, and one right label $B_k$ of arity $k+2$.  Thus the
obstruction is not a proliferation of unrelated primitives.  Moreover every
left label is symmetric in its ports, so each singleton left subproblem is
compatible with the type of symmetry used in Xia's singleton
collapse~\cite[Theorem~5.3]{Xia2016}.  The
large width is genuinely simultaneous: it is the requirement that the whole
labelled language share one exact representation that becomes costly.

The result rules out any width bound depending only on the domain size in
this unrestricted model.  It is stronger than a lower bound against
compressing the displayed base, because it survives arbitrary exact
re-presentation.  It does not assert computational hardness: the high-arity
right label grows with $k$, and our equivalence notion is stricter than gadget
or polynomial-time equivalence.  The full statement and quantitative
inequality are given in \cref{thm:main}, and the domain-only consequence is
\cref{cor:no-domain-bound}.

\subsection{Where can non-collapse occur?}

The negative result leaves a structural question: why can width hide in a
three-state language at all?  Consider actual connected planar gadgets whose
dangling edges are incident with left vertices; their internal vertices may
use labels from either side.  At an exposed port, the \emph{mode support} is
the span of all qutrit vectors obtainable there as the other ports vary.  A
port is deficient when this span misses at least one qutrit direction, so a
nonzero support is either a one-dimensional \emph{ray} or a two-dimensional
\emph{plane}.  Bi-essentiality says that the mode supports of the primitive
left labels, and separately those of the induced right labels, collectively
see all three qutrit directions.

After the base is applied, matchgate parity singles out two endpoint rays
$P^+,P^-$ inside every realized plane $P$; these endpoints need not
themselves occur as gadget supports.  The exceptional \emph{monomial flag}
has all realized planes equal to one plane $P$, contains a transverse realized
ray $L$, and confines every realized ray to $P^+,P^-$, or $L$.

\begin{informaltheorem}[Support-geometric trichotomy]
For a full-row-rank, bi-essential, port-deficient qutrit presentation,
exactly one of the following occurs.
\begin{enumerate}[label=\textup{(\Roman*)}]
  \item \emph{Ray-separable.}  No plane is realized, and every nonzero
  positive-arity all-left boundary tensor is a pure tensor, that is, it
  factors algebraically into unary vectors.
  \item \emph{Gaussian-mobile.}  A plane is realized, but the supports are
  not confined to one flag.  The whole labelled language has an exact common
  presentation of width at most three; if two distinct planes occur, the
  width is at most two.
  \item \emph{One-plane flag.}  All realized planes equal $P$ and all
  realized rays lie among $P^+,P^-,L$.  Every nonzero positive-arity
  all-left boundary tensor is a Boolean matchgate core on $P$ times pure-ray
  factors; a nullary boundary tensor is a scalar.
\end{enumerate}
The third branch contains a family whose minimum exact widths are unbounded.
\end{informaltheorem}

This is the significance of the trichotomy.  Once a plane-bearing support
configuration escapes every monomial flag---through a second plane or
additional transverse rays---Gaussian geometry produces one bounded-width
encoding shared by every label.  The flag still does not force large width for each individual
language, and the ray-only branch is not yet a width classification.  The
branches are properties of a displayed presentation, whereas the collapse
obtained in branch~\textup{(II)} is a new presentation of the same labelled
language.  Our unbounded family is constructed inside branch~\textup{(III)}.
There the flag rigidifies the all-left closure without controlling the
residual right table.  The label $B_k$ can store a $2^k$-entry residual
table, which we choose outside every relevant low-width image.
The precise statement is \cref{thm:structural-trichotomy}.

\subsection{Proof architecture and technical ideas}

The negative construction and the positive boundary use five main ideas.
\begin{enumerate}
  \item \emph{Composable-support geometry for base collapse.}
  Mode supports cannot increase when planar gadgets are composed.  Primitive
  port deficiency therefore constrains an entire gadget closure by a small
  incidence geometry of realized rays and planes.

  \item \emph{Pure-spinor hulls as common-cover certificates.}
  The base sends realized rays to pure matchgate rays and realized planes to
  rank-two Gaussian row spaces.  Two distinct planes, or two distinct pure
  rays outside one realized plane, have a low-codimension common annihilator.
  This produces a rank-four or rank-eight matchgate cover---one Gaussian row
  space containing the whole base row space---and hence collapse.

  \item \emph{All-arity flag-normal-form decoding.}
  In the one-plane branch, rank-one ports can be stripped off and the
  remaining plane-supported tensor can be decoded through one fixed pair of
  parity endpoints.  This yields a Boolean matchgate core times pure-ray
  factors uniformly at every arity.

  \item \emph{Universal block-Pfaffian interpolation.}
  Any everywhere-nonzero Boolean table can be interpolated on two repetition
  codewords per block by a single exact matchgate.  A multiplicative
  M\"obius decomposition turns the table entries into independent Pfaffian
  controls and supplies the large-width presentation.

  \item \emph{A representation-independent star-table dimension
  obstruction.}
  The variety of $s$-wire matchgate signatures has dimension at most
  $1+\binom{s}{2}$.  Closed labelled stars extract the same $2^k$-entry
  table from every exactly equivalent presentation, regardless of its new
  domain or base.  Zariski avoidance then chooses an integer table outside
  every lower-width star-table image and gives the quantitative bound.
\end{enumerate}

The fourth idea realizes arbitrary information; the fifth proves that this
information cannot be hidden in fewer common modes.  The first three explain
why the construction must be placed in the fixed flag and why nearby support
geometries collapse.  The two proof lanes and their locations in the paper
are summarized below.

\begin{figure}[htbp]
  \centering
  \begin{tikzpicture}[
      >={Latex[length=2mm]},
      lanehead/.style={font=\bfseries\footnotesize,align=right,text width=2.35cm},
      roadstep/.style={draw,rounded corners=2pt,align=center,font=\footnotesize,
        text width=2.15cm,minimum height=1.15cm,inner sep=3pt},
      arr/.style={->,line width=.45pt}
    ]
    \node[lanehead] at (-1.45,1.15)
      {Positive /\\classification\\ (\cref{sec:positive-boundary})};
    \node[roadstep] (p1) at (1.15,1.15)
      {support monotonicity};
    \node[roadstep] (p2) at (3.90,1.15)
      {pure-spinor incidence\\and hulls};
    \node[roadstep] (p3) at (6.65,1.15)
      {ray separation,\\common cover,\\or flag normal form};
    \draw[arr] (p1)--(p2);
    \draw[arr] (p2)--(p3);

    \node[lanehead] at (-1.45,-1.15) {Negative /\\construction};
    \node[roadstep] (n1) at (1.15,-1.15)
      {dimension\\budget\\\cref{sec:matchgate-varieties}};
    \node[roadstep] (n2) at (3.90,-1.15)
      {integer obstruction\\table\\\cref{sec:obstruction-tables}};
    \node[roadstep] (n3) at (6.65,-1.15)
      {block-Pfaffian\\interpolation\\\cref{sec:pfaffian-interpolation}};
    \node[roadstep] (n4) at (9.40,-1.15)
      {qutrit flag\\presentation\\\cref{sec:domain-three-construction}};
    \node[roadstep] (n5) at (12.15,-1.15)
      {labelled-star\\extraction\\\cref{sec:arbitrary-equivalence}};
    \draw[arr] (n1)--(n2);
    \draw[arr] (n2)--(n3);
    \draw[arr] (n3)--(n4);
    \draw[arr] (n4)--(n5);
  \end{tikzpicture}
  \caption{Conceptual roadmap.  The upper lane classifies support geometry;
    the lower lane constructs information and proves that no exactly
    equivalent presentation can compress it below the stated width.}
  \label{fig:conceptual-roadmap}
\end{figure}

\subsection{Relation to previous work}

The collapse line begins with Cai and Lu's Boolean-domain results
\cite{CaiLu2008,CaiLu2009}.  Fu and Yang treated simultaneous realizability
over rank-two bases~\cite{FuYang2014}.  Later work studied blockwise-symmetric
signatures and holographic algorithms over general domains
\cite{FuYangYin2019,FuCai2023}.

The closest positive results delimit our regime.  The broad formulation of
Cai and Fu's domain-three collapse is used under their standing
full-rank-generator hypothesis; in the degenerate alternative their
conclusion is a direct computational simplification, rather than a
domain-only bound on minimum exact common width~\cite{CaiFu2014}.  Chen's
all-domain theorem likewise covers the regime of a full-rank original
signature~\cite{Chen2016}.  Xia supplied several further sufficient
certificates: a full-rank realizer, a bounded-rank common matchgate cover, or
a singleton symmetric left family~\cite{Xia2016}.  Our displayed
presentations lie outside these regimes: their bases have full row rank and
the two induced families are support-essential, but every port of every
nonunary qutrit primitive has rank exactly two, and the simultaneous left
family is not a singleton.

Xia also identified the precise compatibility gap that motivates our work.
His singleton proof may discard a part of the base unused by one signature,
but that part can remain necessary for another; after a two-signature example
outside his hypotheses, he wrote that no characterization of matchgate
covers was known and that the achievable collapse size was unknown
\cite[end of Section~5]{Xia2016}.  Under exact same-instance equivalence,
\cref{cor:no-domain-bound} shows that the attainable width cannot be bounded
as a function of the domain size alone.  This answers the uniform-width
aspect of Xia's question, not the characterization of matchgate covers or the
optimal collapse size of each individual language.

Our structural result also identifies the boundary within this port-deficient
qutrit regime.  Support mobility yields an exact common presentation of width
at most three---and at most two when two distinct planes occur---whereas the
one-plane flag contains languages of unbounded minimum exact common width.

Nontrivial negative results for holographic algorithms did exist before this
work.  Cai and Choudhary used matchgate identities to rule out holographic
templates for natural problem generalizations~\cite{CaiChoudhary2007}, and
Fu, Yang, and Yin proved nonrealizability of higher-domain equality under
holographic transformations~\cite{FuYangYin2019}.  Those results exclude a
specified realization.  Here every language is already matchgate-realizable;
the question is how small any exact common realization can be.  To our
knowledge, \cref{thm:main} gives the first unbounded lower bound on minimum
common matchgate-base width under exact labelled equivalence, even over a
fixed three-element domain.

Algebraic geometry is likewise not new to matchgate theory.  Matchgate
identities define algebraic varieties, basis manifolds have been used for
simultaneous realizability, and spinor/Pfaffian geometry has supplied a
conceptual language for holographic algorithms
\cite{CaiChoudharyLu2009,CaiLu2011,LandsbergMortonNorine2013,Bravyi2009}.
Bradley also used dimension counts to prove lower bounds on the graph size
needed to realize standard signatures~\cite{Bradley2009}; that resource is
the size of an individual matchgate, rather than the width of a common base.
The new role here is to turn that geometry into a lower bound robust under
arbitrary exact re-presentation: pure-spinor incidence controls the collapse
side, while star-table varieties and Zariski avoidance control the
non-collapse side.  In this sense algebraic geometry is a central method of
the paper, though not a first use of the subject in holographic algorithms.

\subsection{Organization}

\Cref{sec:statement-preliminaries} gives the minimum terminology needed to
read the formal statements in \cref{sec:main-results}.
\Cref{sec:technical-preliminaries} then develops the proof machinery.
Sections \ref{sec:matchgate-varieties}--\ref{sec:arbitrary-equivalence}
build the obstruction and prove the lower bound;
\cref{sec:positive-boundary} proves the trichotomy, and
\cref{sec:discussion} collects the open directions.

\section{Statement-Level Preliminaries}
\label{sec:statement-preliminaries}

This section isolates only the notions needed to parse the main results.
The complete coordinate conventions, ordered planar operations, and
proof-oriented definitions are developed in
\cref{sec:technical-preliminaries}.  A reader interested first in the
statements may therefore proceed directly from this section to
\cref{sec:main-results}.

\subsection{Signatures, planar instances, and local rank}
\label{sec:statement-signatures}

Let $D$ be a finite nonempty domain, take
$\mathbb N=\{1,2,\ldots\}$, and put $D_3=\{0,1,2\}$.  Write $e_a$ for the
coordinate vector of $a\in D$ and $\delta_a$ for the corresponding unary
pin.  A signature
of arity $a$ is a tensor $F\colon D^a\to\C$ whose ordered arguments are its
\emph{ports}.  All families are finite and labelled: two labels remain
distinct even if their coordinate tensors agree.  A bipartite language
$\cF\mid\cG$ places labels from $\cF$ and $\cG$ on the two sides of a
planar embedded bipartite graph.  Its value is the sum over all edge
assignments of the product of the incident tensor entries.  Exact labelled
equivalence below preserves this value on the same embedded, ordered,
labelled graph.

At port $j$, the \emph{mode flattening} $\Mat_j(F)$ is the matrix with the
$j$th coordinate as row index and all other coordinates as column index.
Its image $S_j(F)$ is the \emph{mode support}, and its rank is the
\emph{mode rank}.  A port is \emph{deficient} when this rank is smaller
than $|D|$, and a signature is \emph{non-full-rank} when every port is
deficient.  Thus a qutrit port is deficient exactly when its rank is at
most two.  A family is \emph{support-essential} when the sum of all of its
mode supports is the full domain space (three-dimensional for a qutrit); a bipartite
language is \emph{bi-essential} when this holds on both sides.  A nonzero
tensor is \emph{completely decomposable}, or \emph{pure}, when it is a
tensor product of unary vectors.  Its \emph{Schmidt rank} across a partition
of the ports is the rank of the corresponding two-block flattening.  These
notions and the left/right variance convention are formalized in
\cref{sec:tensor-preliminaries,sec:holant-preliminaries}.

\subsection{Common matchgate presentations}
\label{sec:statement-presentations}

An exact $s$-wire matchgate signature is the boundary tensor of a finite
weighted planar perfect-matching gadget with $s$ ordered Boolean ports; the
set of such tensors is denoted $\MG_s$.  A width-$t$ base on $D$ is a
matrix
\[
  M\in\C^{|D|\times2^t}.
\]
It replaces every domain edge by one ordered block of $t$ Boolean wires.
For a left signature $F$ of arity $a$, the Boolean lift is
$FM^{\otimes a}$.  A right domain signature is written
$G=M^{\otimes b}H$, where $H$ is its Boolean preimage and $b$ is its block
arity.  A \emph{valid common matchgate presentation} $(\cF,\cH,M)$ requires
every $FM^{\otimes a}$ and every $H$ to be an exact matchgate in the
prescribed block order, with the same base $M$ used for all labels.  Its
induced domain language is $\cF\mid M\cH$.  The presentation is rational
when all displayed coordinates and realizing edge weights may be chosen in
$\Q$.  Full definitions and the ordered contraction convention appear in
\cref{sec:matchgate-preliminaries,sec:base-presentation}.

An $r$-input, $t$-output exact matchgate matrix is the matrix obtained by
viewing $r$ ordered Boolean ports of an exact matchgate as inputs and $t$ as
outputs, with the fixed planar output reversal.  A \emph{rank-$2^r$ common
matchgate cover} of $M$ is such a matrix $P$ satisfying
\[
  \row(M)\subseteq\row(P).
\]
A common cover can be decoded to a width-$r$ presentation.  The matrix,
cover, and decoding conventions are given in
\cref{sec:matchgate-matrix-preliminaries}.

\subsection{Exact labelled equivalence and minimum width}
\label{sec:equivalence-definition}

\begin{definition}[Exact labelled equivalence]
\label[definition]{def:exact-labelled-equivalence}
Let $\cF\mid\cG$ be a finite-domain planar language over $D$, and let
$\cP\mid\cR$ be one over an arbitrary finite domain $E$.  They are
\emph{exactly equivalent} if there are arity-preserving bijections
between the two left label sets and between the two right label sets such
that every planar embedded $\cF\mid\cG$ instance has exactly the same
complex value after replacing each label by its counterpart on the same
embedded graph.  Incidences, marked first ports, and linearized cyclic port
orders are unchanged.
\end{definition}

This is the planar, explicitly port-ordered version of Xia's
\emph{result equivalence}, which likewise uses label bijections and equality
on the same instance graph~\cite[Section~2.3]{Xia2016}.  This notion does
not permit replacing a label by a gadget, changing an
arity, preserving values only up to a scalar, or merely retaining
polynomial-time computability.  It does permit an alternative presentation
to change the finite domain, base, coordinate tensors, and exact complex
weights.  Thus it is a language-level notion, whereas ranks, supports, and
flags associated with one displayed triple are presentation-relative.

\begin{definition}[Minimum exact common width]
\label[definition]{def:base-width}
For a finite-domain language $\cF\mid\cG$ admitting at least one valid
presentation, define $\bw(\cF\mid\cG)$ to be the minimum integer $r\ge0$
for which some valid common width-$r$ presentation over some finite
nonempty domain induces a labelled language exactly equivalent to
$\cF\mid\cG$.
\end{definition}

No rank condition is imposed on an alternative base or its domain
signatures.  At width zero the Boolean block alphabet is the singleton
$\bits^0=\{\epsilon\}$, so declared block arity still records the induced
domain arity.  The minimum ranges over all such alternative presentations,
not merely submatrices or coordinate changes of a displayed base.  A \emph{uniform}
width bound controls every language in a stated class by one function of
the specified parameters; it is stronger than a pointwise assertion about
each fixed finite language.

\subsection{Realized supports and monomial flags}
\label{sec:statement-supports}

Fix a qutrit presentation with $\rank M=3$.  A connected planar gadget is
\emph{all-left} when every dangling edge is incident with a left vertex.
At a distinguished dangling port of a nonzero all-left boundary tensor, its
mode support is a \emph{realized support}.  Let $\mathcal S_1$ and
$\mathcal S_2$ be the sets of realized rays and planes, respectively.
For every realized plane $P$, the transformed block support $PM$ is a
rank-two exact-matchgate row space.  Its even- and odd-parity lines pull back
to two distinguished rays of $P$, denoted $P^+$ and $P^-$.  A pair
$(P,L)$, where $P\in\mathcal S_2$ and $L\in\mathcal S_1$ is transverse to
$P$, is a \emph{monomial flag} when
\[
  \mathcal S_2=\{P\},
  \qquad
  \mathcal S_1\subseteq\{P^+,P^-,L\}.
\]

In the flag branch, a \emph{Boolean matchgate core on $P$ times pure-ray
factors} means that the plane-supported ports are encoded, through one
fixed basis of the two parity endpoints, by a Boolean matchgate tensor,
while every remaining port contributes a unary factor on one of the three
allowed rays.  The exact gadget, rerooting, and normal-form conventions are
given in \cref{sec:support-terminology}.

\section{Main Results}
\label{sec:main-results}

\subsection{Unbounded exact width}

For $a,b\in D_3$, let $E_{ab}$ denote the qutrit matrix unit, and put
$X_{01}:=E_{01}+E_{10}$; thus $X_{01}$ is binary disequality on states
$0,1$ and vanishes whenever one input is $2$.

\begin{theorem}[Unbounded exact width with a flag presentation]
\label{thm:main}
For every integer $k\ge2$, there is a finite integer-valued domain-three
language
\[
  \cF_k=\{\delta_0,\delta_1,\delta_2,X_{01}\},
  \qquad \cG_k=\{B_k\}.
\]
The right label $B_k$ has arity $k+2$, which is therefore the maximum arity
of the language.  Put
\begin{equation}
  \mu_k:=\bw(\cF_k\mid\cG_k).
  \label{eq:muk-definition}
\end{equation}
Then the following properties hold.
\begin{enumerate}[label=\textup{(\roman*)}]
  \item It has an exact rational matchgate presentation with a full-row-rank
  $3$-by-$2^{2^k+3}$ base, of width $2^k+3$.
  \item Every port of $X_{01}$ and $B_k$ has flattening rank exactly two,
  whereas the pins have rank one.
  \item The language is support-essential on both sides.  The single right
  tensor $B_k$ has two link-port supports $\Span(e_0,e_1)$ and $k$
  hard-port supports $\Span(e_0,e_2)$---the link ports are its first two
  ports and the hard ports its remaining $k$---so it is support-essential
  by itself.
  \item Two copies of $B_k$ occur in a connected planar gadget whose
  boundary tensor has Schmidt rank exactly two across the hard ports of the
  first copy versus those of the second copy.  Thus the construction is not
  a product of isolated stars.
  \item The minimum exact width satisfies
  \begin{equation}
    2^k\le
    2\left(1+\binom{\mu_k}{2}\right)
    +1+\binom{k\mu_k}{2}.
    \label{eq:main-exact-lower-bound}
  \end{equation}
  Consequently,
  \begin{equation}
    \mu_k\ge
    \left\lceil\sqrt{\frac{2^k-3}{1+k^2/2}}\right\rceil
    =\Omega\!\left(\frac{2^{k/2}}{k}\right).
    \label{eq:main-readable-lower-bound}
  \end{equation}
\end{enumerate}
The presentation in part~\textup{(i)} may be chosen to lie in
alternative~\textup{(III)} of \cref{thm:structural-trichotomy}.
\end{theorem}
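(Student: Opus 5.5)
The proof splits into a construction half (parts (i)--(iv)) and a lower-bound half (part (v)), following the lower lane of \cref{fig:conceptual-roadmap}.

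\textbf{Lower bound (part (v)).} We first fix the table. For each $r<2^k$ (in fact for the relevant small $r$), consider the star-table map. It sends a width-$r$ presentation to the $2^k$-entry table obtained by closing a labelled star. The star consists of one copy of $B_k$, its two link ports joined through $X_{01}$ and pins, and its $k$ hard ports capped by the pins $\delta_0,\delta_2$.

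In any exactly equivalent presentation over any domain $E$ with base $M'$ of width $r$, the relevant data are three objects. The first two are the lifted unary pins $\delta_0M',\delta_2M'$ (which lie in $\MG_r$), and the lifted disequality. The third is the Boolean preimage $H'$ of $B_k$, which lies in $\MG_{(k+2)r}$. Each star value is a polynomial contraction of these matchgate tensors. The table is therefore a polynomial image of a product of matchgate varieties. We assume the paper's dimension bound $\dim\overline{\MG_s}\le1+\binom{s}{2}$ from \cref{sec:matchgate-varieties}.

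The hard-port pins contribute two varieties of dimension at most $1+\binom{r}{2}$. The right tensor contributes at most $1+\binom{kr}{2}$ effective parameters, after contracting the link block; this step may need care. The total is the right side of \eqref{eq:main-exact-lower-bound}. Taking the finite union over $r<R$ of these $\Q$-defined Zariski closures, we use Zariski avoidance (\cref{sec:obstruction-tables}) to choose an everywhere-nonzero integer table $T_k\in\mathbb Z^{2^k}$ outside every image whose dimension is less than $2^k$.

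Exactness of the equivalence transports the star values unchanged to every presentation. Hence $T_k$ lies in the image at width $\mu_k$, and the dimension of that image must be at least $2^k$, which gives \eqref{eq:main-exact-lower-bound}. The readable bound \eqref{eq:main-readable-lower-bound} follows from $\binom{k\mu}{2}+2\binom{\mu}{2}\le \mu^2(1+k^2/2)$ and rearrangement.

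\textbf{Construction (parts (i)--(iv)).} We build $B_k$ from $T_k$. The link ports are supported on $\Span(e_0,e_1)$ and the hard ports on $\Span(e_0,e_2)$, with $e_0,e_2$ encoding the Boolean hard-port bit. The plan is to use block-Pfaffian interpolation (\cref{sec:pfaffian-interpolation}). Each hard qutrit value is encoded as one of two repetition codewords on a block of $2^k+3$ wires. The multiplicative M\"obius decomposition of $T_k$ supplies independent rational Pfaffian controls, one per subset, which accounts for the roughly $2^k$ wires.

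We then choose a rational full-row-rank $3\times2^{2^k+3}$ base $M$. Its rows map $e_0,e_1,e_2$ to codewords of appropriate parities, so that $\delta_aM$ and $X_{01}M^{\otimes2}$ are exact matchgates. This makes $\Span(e_0,e_1)$ the unique realized plane, with transverse ray $e_2$, and places the presentation in branch (III).

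Parts (ii) and (iii) are then direct flattening computations. The nonzeroness of $T_k$ gives rank two at hard ports, and the link structure gives rank two at link ports. For (iv), we join two copies of $B_k$ through $X_{01}$ on link ports. The boundary tensor is a sum of two product terms indexed by the link state, and it has Schmidt rank two provided the two conditional tables are not proportional. That condition is again an open condition, which we impose in the avoidance step.

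\textbf{Main obstacle.} The main obstacle is the dimension accounting in the lower bound. The variety bound must be applied to the right preimage after fixing the link ports, in a way that is uniform over arbitrary domains $E$ and arbitrary bases. Otherwise one gets only $\binom{(k+2)r}{2}$ in place of $\binom{kr}{2}$. I would first try to absorb the link-port contraction into the matchgate itself, by closing a matchgate with capped ports, which stays in $\MG_{kr}$.
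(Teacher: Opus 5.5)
Your lower-bound half is the paper's argument once the star is stated correctly. Pin both link ports of $B_k$ directly with $\delta_0$, and pin the $j$th hard port with $\delta_{2x_j}$. ``Joined through $X_{01}$ and pins'' is not a valid closed instance, because $X_{01}$ and the pins are all left labels. With a single right vertex, an $X_{01}$ can only join two ports of that same $B_k$. Joining its two link ports forces unequal link states, where $B_k$ vanishes, so the lifted disequality plays no role. What you call the main obstacle is then one line, as you guessed. With $g_i=P_iN\in\MG_r$, the partial contraction $Q_0=\Contr(Q_B;g_0,g_0,-,\ldots,-)$ lies in $\MG_{kr}$ by closure under ordered planar contraction. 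Then $f(x)=\Contr(Q_0;g_{2x_1},\ldots,g_{2x_k})$ places the table in $\mathcal S_{k,r}$ for every domain $E$ and base $N$.

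The genuine gaps are on the construction side.

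First, nowhere-vanishing of the table does not give rank two at the hard ports: a constant table is nowhere zero and all its flattenings have rank one. You must put a symbolic $2\times2$ flattening minor for every port into the avoidance product, as the paper does.

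Second, and more seriously, you never say how the link ports enter the Boolean preimage, and that is where (i), (ii) and (iv) are decided. With coordinate rows $e_{b_a}^{\trans}$, the codewords for states $0,1$ must differ in exactly one bit. Otherwise $X_{01}M^{\otimes2}=e_{b_0,b_1}^{\trans}+e_{b_1,b_0}^{\trans}$ is a sum of two coordinate vectors at Hamming distance at least four, which is never a matchgate. So the link state cannot be a repetition-code block of the interpolation, which runs over the $k$ hard blocks only. The second link-conditional table is therefore whatever your unspecified marker coupling makes it. ``Impose non-proportionality by avoidance'' has no content until that coupling is specified and shown not to force proportionality. The obvious coupling, copying the link state independently of the hard ports, gives a two-center tensor $2T_k(z)T_k(y)$ of Schmidt rank one for every table.

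The missing device is the $+3$ in the width: one marker and two control wires per block. The codewords are $b_0=(0,0^{t_0},00)$, $b_1=(1,0^{t_0},00)$, $b_2=(0,1^{t_0},11)$. Each nondesignated hard block gets an isolated unit skew pair on its controls. A four-mode skew block sits on the link markers $p,q$ and the designated controls $a,b$, with entries $[a,b]=1$, $[p,q]=\alpha$, $[p,a]=1$, $[q,b]=\alpha-\beta$. Its principal Pfaffians give $B_k(0,0,2x)=f(x)$, $B_k(1,1,2x)=w(2x_j)f(x)$, and zero on unequal links. So the non-proportionality in (iv) comes from $\alpha\ne\beta$ (determinant $-(\alpha-\beta)^2$), independently of the table, and needs no avoidance.
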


The quantification in \cref{thm:main} is unrestricted: a competing
presentation may change the finite domain, the common base, and every
coordinate tensor, provided that the original labels, arities, ordered
ports, and all planar instance values are preserved exactly.  Thus the
lower bound is not merely against compressions or invertible coordinate
changes of the displayed base.  The construction gives
$\mu_k\le2^k+3$; no matching upper bound is claimed.

\begin{corollary}
\label[corollary]{cor:no-domain-bound}
Under exact labelled equivalence, there is no function
$h\colon\mathbb N\to\mathbb N$ such that, for every finite nonempty domain
$D$ and every valid presentation $(\cF,\cH,M)$ on $D$ for which every
signature in the induced domain families $\cF\cup M\cH$ is integer-valued,
has positive arity, and is non-full-rank, one has
$\bw(\cF\mid M\cH)\le h(|D|)$.  This remains false even when the displayed
presentation is over $\Q$; equivalent presentations in the definition of
$\bw$ may still use arbitrary exact complex weights.
\end{corollary}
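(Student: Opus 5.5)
We derive \cref{cor:no-domain-bound} from \cref{thm:main} by contradiction. Suppose such an $h$ exists and fix $D=D_3$, so every qualifying presentation on $D_3$ would have $\bw\le h(3)$. For each $k\ge2$ we take the rational presentation $(\cF_k,\cH_k,M_k)$ from \cref{thm:main}\textup{(i)}, whose induced language is $\cF_k\mid\{B_k\}$. We then check the three hypotheses on the induced families $\cF_k\cup M_k\cH_k=\{\delta_0,\delta_1,\delta_2,X_{01},B_k\}$.

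\emph{Integer-valued.} The language is integer-valued by the statement of \cref{thm:main}.

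\emph{Positive arity.} The pins have arity one, $X_{01}$ has arity two, and $B_k$ has arity $k+2$, so every signature has positive arity.

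\emph{Non-full-rank.} We need every port of every signature to have flattening rank at most two. By \cref{thm:main}\textup{(ii)}, every port of $X_{01}$ and $B_k$ has rank exactly two, and the pins have rank one. All of these are less than $|D_3|=3$.

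So all hypotheses hold, and the displayed presentation is over $\Q$. The assumed $h$ would then give $\mu_k=\bw(\cF_k\mid\cG_k)\le h(3)$ for every $k\ge2$.

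This contradicts \eqref{eq:main-readable-lower-bound}, which gives $\mu_k\ge\lceil\sqrt{(2^k-3)/(1+k^2/2)}\rceil\to\infty$. Choosing $k$ with that bound larger than $h(3)$ finishes the argument. The clause allowing arbitrary complex weights in equivalent presentations costs nothing extra: the lower bound in \cref{thm:main}\textup{(v)} already quantifies over all exactly equivalent presentations, including complex ones.

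The only point needing care is that the corollary's hypotheses refer to the induced domain families $\cF\cup M\cH$ rather than to the Boolean preimages. We must make sure the right label seen by the corollary is $B_k=M_k^{\otimes(k+2)}H_k$ itself, and that its rank data is the one recorded in \cref{thm:main}\textup{(ii)}. Otherwise the deduction is immediate, and I do not expect a genuine obstacle.
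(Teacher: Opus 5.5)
Your proposal is correct and matches the paper's proof: a putative $h$ applied to the rational presentations $(\cF_k,\cH_k,M_k)$ on $D_3$ would give $\mu_k\le h(3)$ for every $k$, which contradicts $\mu_k\to\infty$ from \cref{thm:main}\textup{(v)}. Your explicit check of the integer-valued, positive-arity, and non-full-rank hypotheses (via \cref{prop:valid-domain-three-presentation,prop:controlled-ranks}) only spells out what the paper leaves implicit.
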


\subsection{Support-geometric trichotomy}

The lower bound is complemented by an exhaustive classification of the
qutrit support geometries relevant to base collapse.

\begin{theorem}[Support-geometry trichotomy and an unbounded flag family]
\label{thm:structural-trichotomy}
Let $(\cF,\cH,M)$ be a valid common matchgate presentation on a qutrit
domain with $\rank M=3$, and let $\mathcal B=M\cH$ be the induced right
family.  Assume that $\cF\mid\mathcal B$ is bi-essential and every port
flattening of every domain tensor in $\cF\cup\mathcal B$ has rank at most
two.  Let $\mathcal S_1,\mathcal S_2$ be the supports realized at
distinguished ports of actual connected all-left planar gadgets.  Exactly
one of the following alternatives holds.
\begin{enumerate}[label=\textup{(\Roman*)}]
  \item \emph{Ray-separable closure:} $\mathcal S_2=\varnothing$.  Every
  nonzero positive-arity all-left boundary tensor is completely
  decomposable; a nullary tensor is a scalar.

  \item \emph{Gaussian-mobile closure:} $\mathcal S_2\ne\varnothing$ and
  no monomial flag exists.  Then $M$ has a common matchgate cover of rank
  at most eight, and yields an exactly equivalent common presentation of
  width at most three.  If $|\mathcal S_2|\ge2$, the cover has rank four
  and the width is at most two.

  \item \emph{One-plane flag closure:} a monomial flag $(P,L)$ exists.
  Every nonzero positive-arity all-left boundary tensor has the
  Boolean-core-times-rays form of \cref{sec:statement-supports}, using one
  fixed endpoint-basis isomorphism $\iota:\C^2\to P$ for the entire closure; a nullary
  boundary tensor is a scalar.
\end{enumerate}
The class of languages admitting an alternative~\textup{(III)} presentation
has no uniform bound on minimum exact common width.  The rank,
essentiality, and connected two-center properties in \cref{thm:main} may
hold simultaneously.
\end{theorem}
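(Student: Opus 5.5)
The plan is to prove the trichotomy in three stages and then to deduce the final two assertions from \cref{thm:main}.

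\textbf{Support monotonicity and parity endpoints.} First I would show that if an all-left gadget is composed with further vertices, then the mode support at a surviving dangling port can only shrink, and is contained in the support of the primitive left label incident to it. Every realized support is therefore a ray or a plane inside a primitive deficient support. Since $\rank M=3$, $v\mapsto vM$ is injective. For a realized ray $L$ I would show that $LM$ is spanned by a single matchgate vector, i.e.\ a pure spinor. The argument is to pin all other ports by suitable gadgets, which realizes $L$ as the support of an arity-one boundary tensor, whose Boolean lift is an exact matchgate. For a realized plane $P$, I would show that $PM$ is the span of two even/odd Gaussian vectors; this is what defines $P^\pm$. The alternatives are mutually exclusive by definition: (I) has $\mathcal S_2=\varnothing$, (III) has a flag, and (II) has neither. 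It therefore remains to prove the stated conclusion in each case.

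\textbf{Case (I).} Suppose every realized support is a ray. Take a nonzero positive-arity all-left tensor $T$. Every port flattening then has rank one. A tensor all of whose mode flattenings have rank one is pure, by an induction peeling one port at a time. A nullary tensor is trivially a scalar.

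\textbf{Case (II), the step I expect to be hardest.} Here I need to build a common cover. If two distinct planes $P\neq Q$ are realized, then $PM$ and $QM$ are two rank-two Gaussian row spaces inside the three-dimensional space $\row(M)$ and meet in a line. I would use pure-spinor incidence: two Gaussian pencils sharing a pure spinor lie in a common Gaussian $4$-dimensional row space, namely the matchgate matrix with $r=2$ inputs obtained from the common annihilator, which is isotropic of low codimension. This gives a rank-four cover. If instead only one plane $P$ is realized but the configuration is not a flag, then either there is a realized ray in $P$ other than $P^\pm$, or there are two transverse rays. In either case I would exhibit an $8$-dimensional Gaussian space ($r=3$) containing $PM$ together with the extra pure spinors. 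The first attempt would be to write everything in a normal form with $P^+M$ the vacuum, and then bound the common annihilator's dimension directly. The cover then decodes to a width-$\le3$ (resp.\ $\le2$) presentation by the decoding lemma of \cref{sec:matchgate-matrix-preliminaries}. The delicate part is verifying that the hull of the given spinors is genuinely Gaussian, not merely of the right dimension, and that bi-essentiality rules out the degenerate configurations.

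\textbf{Case (III).} Fix the basis $\iota$ sending the standard basis to generators of $P^+,P^-$. For an all-left tensor, the ports with ray supports have rank one and split off as unary factors on $P^\pm$ or $L$. The remaining ports have support $P$, so the residual tensor lies in $P^{\otimes m}$. Its $M$-lift is a matchgate. Pulling it back through $\iota^{\otimes m}$ yields a Boolean tensor which I would show is a matchgate by restricting the lift to the parity lines, a restriction that preserves Gaussianity. Finally, \cref{thm:main} supplies alternative-(III) presentations with $\mu_k\ge\Omega(2^{k/2}/k)$ and with the rank, essentiality and two-center properties. This gives the unbounded-width claim and the claim that these properties can hold simultaneously.
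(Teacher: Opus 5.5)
Your outline follows the paper's architecture. The shared steps are boundary-support monotonicity, injectivity of $u\mapsto uM$, the split into $\mathcal S_2=\varnothing$ / no flag / flag, covers plus exact-cover compression in (II), and rank-one stripping plus rank-two decoding in (III).

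Your common-annihilator treatment of two planes is a valid alternative to the paper's Gaussian-hull lemma. If $PM\cap QM=\C u$ and $w_1,w_2$ are the opposite-parity endpoints, each $\Ann(u)\cap\Ann(w_i)$ is a hyperplane of $\Ann(u)$. The two hyperplanes share an isotropic $(t-2)$-space killing $\row(M)$, and \cref{lem:isotropic-kernel-cover} gives the rank-four cover. That lemma also removes the difficulty you single out: any isotropic common annihilator of dimension $t-r$ automatically gives an \emph{exact} rank-$2^r$ cover. Gaussianity of the hull is therefore not where the work lies.

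The genuine gap is the one-plane case of (II). There $\row(M)=PM+LM=\Span(u,v,w)$, with $u\in P^\sigma M$ and $v\in LM$ pure of the same parity, and $w\in P^{-\sigma}M$. The extra ray $L'$ adds nothing new to be covered. What must be controlled is $\Ann(u)\cap\Ann(v)$, and your plan never bounds it. It cannot be bounded from $u,v,w$ alone: the flag configuration has exactly this shape, and for the paper's $M_k$ one has $\dim(\Ann(e_{b_0})\cap\Ann(e_{b_2}))=1<T_k-3$.

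The missing idea is to use $L'$ as a constraint on $(u,v)$:
\begin{enumerate}
  \item Parity forces $L'M$ into the two-dimensional component $\Span(u,v)$, since the other component is $\C w=P^{-\sigma}M$.
  \item Because $L'\ne P^\sigma,L$, this gives $L'M=\C(au+bv)$ with $ab\ne0$.
  \item The projective line through $[u],[v]$ then contains three points of the spinor variety. That variety is cut out by quadrics, so the whole line is a pure pencil.
  \item The pencil classification and even codimension inside one spinor component give $\dim(\Ann(u)\cap\Ann(v))=t-2$.
  \item Intersecting with the hyperplane $\Ann(u)\cap\Ann(w)$ of $\Ann(u)$ yields the isotropic $(t-3)$-space, hence the rank-eight cover.
\end{enumerate}
Any ``direct'' vacuum-chart bound must encode exactly this purity of $au+bv$. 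Your other subcase, a realized ray inside $P$ other than $P^\pm$, is vacuous. Realized rays map to parity-homogeneous pure rays, and the only such lines in $PM$ are its endpoints.

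There is also a circularity at the end. You cannot take from \cref{thm:main} that its presentation lies in (III), because the paper proves that sentence of \cref{thm:main} inside this very proof. You must check it directly:
\begin{itemize}
  \item $(\cF_k,\cH_k,M_k)$ satisfies the hypotheses by \cref{prop:controlled-ranks}.
  \item Every dangling edge of an all-left gadget meets a pin or a copy of $X_{01}$. By monotonicity, every realized support is one of $\C e_0^*,\C e_1^*,\C e_2^*$ or $P=\Span(e_0^*,e_1^*)$.
  \item The endpoints of $PM_k=\Span(e_{b_0},e_{b_1})$ pull back to $\C e_0^*$ and $\C e_1^*$.
  \item $X_{01}$ realizes $P$, and $\delta_2$ realizes $L=\C e_2^*$.
\end{itemize}
After this check, only part (v) of \cref{thm:main}, proved independently by star extraction, is needed for unboundedness.

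Two steps should also be made precise. First, pinning (for rays, and for exactness of the core lift in (III)) must be done on Boolean blocks of the lift, at coordinates where the stripped factors are nonzero. The language need not contain right unary labels, so domain-level pinning by gadgets is not available. Second, ``restricting to the parity lines'' should be the canonical-form decoder of \cref{lem:rank-two-decoding}. It sends $g_0,g_1$ blockwise to $e_{0^t},e_{10^{t-1}}$, after which the inactive wires are pinned.
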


\begin{remark}[Structural trichotomy versus width characterization]
\label[remark]{rem:not-pointwise-dichotomy}
The alternatives are presentation-relative because $P^+$ and $P^-$ are
defined through the displayed base.  The bounded conclusions in
alternative~\textup{(II)} are nevertheless language-level: they construct
a smaller exact presentation of the same labels.  ``No uniform bound''
quantifies over the family $(\cF_k\mid\{B_k\})$; every individual finite
language has finite width.  A flag is necessary for our constructed family
to escape Gaussian collapse but is not sufficient for large width, since
small-width flag presentations also exist.  Alternative~\textup{(I)} also
has no width classification.  A pointwise characterization requires a
language invariant that captures the surviving right-side data
independently of any chosen flag presentation.
\end{remark}

\Cref{thm:main} is proved in
Sections~\ref{sec:matchgate-varieties}--\ref{sec:positive-boundary},
\cref{cor:no-domain-bound} in \cref{sec:arbitrary-equivalence}, and
\cref{thm:structural-trichotomy} in \cref{sec:positive-boundary}.

\section{Technical Preliminaries}
\label{sec:technical-preliminaries}

\subsection{Sets, fields, tensors, and ranks}
\label{sec:tensor-preliminaries}

We write $\C$, $\Q$, and $\mathbb Z$ for the fields of complex and rational
numbers and the ring of integers, respectively.  Unless another field is
named, all vector spaces, tensor products, spans, and matrix ranks are over
$\C$.  We use
\[
  \mathbb N=\{1,2,\ldots\},\qquad
  \mathbb Z_{\ge0}=\{0,1,2,\ldots\},\qquad
  \mathbb Z_{>0}=\mathbb N.
\]
For a positive integer $a$, write $[a]=\{1,\ldots,a\}$, and put
$[0]=\varnothing$.  The Boolean domain is $\bits$, and a \emph{qutrit}
means a three-element domain, normally identified with
$D_3=\{0,1,2\}$.  For a subfield $K\subseteq\C$, write
$K^\times=K\setminus\{0\}$ and let $\overline{\Q}$ be the field of complex
algebraic numbers.

For $x\in\bits^s$, define
\[
  \supp(x):=\{i\in[s]:x_i=1\},\qquad
  |x|:=|\supp(x)|,
\]
and call $|x|$ the \emph{Hamming weight}.  The symbol $\oplus$ denotes
bitwise exclusive-or,
$\dist(x,y):=|x\oplus y|$ is Hamming distance, and
$\bar x:=x\oplus1^s$ is the bitwise complement.  For $T\subseteq[s]$,
$\one_T\in\bits^s$ is its indicator word.  The words $0^s$ and $1^s$ are
the all-zero and all-one repetition words; for $s=0$ both equal the empty
word $\epsilon$.  If $I$ is an arbitrary finite index set, $\bits^I$
means the set of functions $I\to\bits$, and $\sigma(i)$ denotes the bit at
$i$.  For a proposition $\mathcal A$, the indicator
$\mathbf 1[\mathcal A]$ is $1$ when $\mathcal A$ is true and $0$ otherwise.
For a coordinate array $A$, we also write
$\supp(A):=\{\omega:A(\omega)\ne0\}$ for its coordinate support; the
argument makes this distinct from the support of a bit word.

If $D$ is a finite set, a \emph{signature of arity $a$} over $D$ is a
function $F\colon D^a\to\C$, and we write $\arity(F)=a$.  A nullary
signature is a complex scalar.  A signature is \emph{rational} or
\emph{integer-valued} when all of its coordinates lie in $\Q$ or
$\mathbb Z$, respectively.  All signature families below are finite
labelled families: labels remain distinct even if two labels carry the same
coordinate tensor.  A \emph{primitive signature} is a tensor attached to a
single labelled vertex, as opposed to a tensor obtained by contracting a
multi-vertex gadget; its argument positions are its \emph{ports}.

Let $V_D=\C^D$ with its standard basis.  Write $e_a$ for the coordinate
vector indexed by $a\in D$ and $e_a^*$ for its dual covector.  More
generally, $e_x$ denotes the coordinate vector indexed by a word $x$, so
$e_x(y)$ is $1$ for $y=x$ and $0$ otherwise.  The tensor product
$u_1\otimes\cdots\otimes u_a$ is the multilinear array whose coordinates
are the products of the corresponding coordinates; $M^{\otimes a}$ is the
$a$-fold Kronecker power applying $M$ independently to the $a$ ordered
ports.  The zeroth tensor power is the scalar identity on $\C$, and an empty
tensor product equals $1$.  The symbols $\Span$, $\operatorname{im}$,
$\oplus$, and $\bigoplus$ denote linear span, image, direct sum, and an
iterated direct sum.  We write $A^{\trans}$ for ordinary transpose, with no complex
conjugation, and $\row(A)$, $\operatorname{col}(A)$, and $\rank A$ for the
row space, column space, and matrix rank.

For tensor-variance bookkeeping, we regard a signature placed on the left
side of a bipartite instance as an element of $(V_D^*)^{\otimes a}$ and one
placed on the right as an element of $V_D^{\otimes a}$.  Both have the same
coordinate-array notation, and all pairings in this paper are multilinear
over $\C$, never Hermitian.  A \emph{ray}, \emph{line}, and \emph{plane}
mean respectively a one-dimensional subspace (the first two words are
synonymous here) and a two-dimensional subspace.  If $K\le U$ are
finite-dimensional, $\operatorname{codim}_U K:=\dim U-\dim K$.

For $a\ge1$ and $j\in[a]$, the \emph{one-versus-rest flattening}, also
called the \emph{mode-$j$ flattening}, of $F$ is the matrix
\begin{equation}
  \Mat_j(F)\in\C^{|D|\times |D|^{a-1}},
  \qquad
  \Mat_j(F)_{u,\mathbf v}
  =F(v_1,\ldots,v_{j-1},u,v_j,\ldots,v_{a-1}).
  \label{eq:flattening-definition}
\end{equation}
Its image
\begin{equation}
  S_j(F):=\operatorname{im}\Mat_j(F)
  \label{eq:mode-support-definition}
\end{equation}
is the \emph{mode support} or \emph{port support}, and
$\rank\Mat_j(F)=\dim S_j(F)$ is the corresponding \emph{mode rank}.  With
the variance convention above, a left mode support is viewed in $V_D^*$
and a right mode support in $V_D$.  A port is \emph{deficient} when its
mode rank is less than $|D|$.  When $|D|=n$, we call $F$ \emph{full rank}
if $\rank\Mat_j(F)=n$ for some $j$, and \emph{non-full-rank} if every port
is deficient.  This is Chen's domain-signature notion of full rank
\cite{Chen2016}; it is unrelated to the matrix rank of a holographic base.
A presentation is \emph{port-deficient} when every port of every primitive
signature in both induced domain families is deficient.

A \emph{minimal mode decomposition} at port $j$ is an expression
\[
  F=\sum_{a=1}^{d}u_a\otimes F_a,
  \qquad d=\rank\Mat_j(F),
\]
after placing port $j$ first in the notation, with both $(u_a)$ and
$(F_a)$ linearly independent.  It is precisely a rank factorization of the
mode-$j$ flattening.

A family of left signatures is \emph{support-essential} when the sum of
all of its mode supports is $V_D^*$; the analogous condition for a right
family uses $V_D$.  A bipartite problem, or a presentation of it, is
\emph{bi-essential} when both induced domain families are
support-essential.  Zero tensors contribute the zero subspace to these
sums.

A nonzero positive-arity tensor is \emph{completely decomposable}, or a
\emph{pure tensor}, if it is an elementary tensor
$u_1\otimes\cdots\otimes u_a$; equivalently, every mode rank is one.  For a
bipartition $A\sqcup B=[a]$, the \emph{Schmidt rank across $A\mid B$} is
the matrix rank obtained by flattening the $A$ coordinates against the $B$
coordinates.  Thus Schmidt rank one is exactly factorization as one tensor
on $A$ times one tensor on $B$.  An \emph{outer product} is such a
rank-one bipartite tensor.
A \emph{slice} of a tensor is the lower-arity tensor obtained by fixing a
specified subset of its coordinates.

Finally, $u_k=\Omega(v_k)$ means that there are constants $c>0$ and $k_0$
such that $|u_k|\ge c|v_k|$ for every $k\ge k_0$.  For
$q=p/r\in\Q$ in lowest terms, with $r>0$, define its \emph{height} by
\[
  H(q):=\max\{|p|,r\}.
\]
The height of a finite rational table is the maximum height of its entries.
If $H$ is this table height, then $1+\log_2 H$ controls the maximum entry
bit length up to constant factors.  We call an indexed family of tables
\emph{entrywise closed-form} when one uniform formula gives each entry
directly from its index, rather than selecting the table by existence or
enumeration.  A table is \emph{everywhere nonzero} or \emph{nowhere zero}
when none of its entries is zero.

\subsection{Planar Holant networks and gadgets}
\label{sec:holant-preliminaries}

Let $\cF$ and $\cG$ be finite labelled signature families over the same
domain $D$.  We call the ordered pair $\cF\mid\cG$ a \emph{bipartite
signature language} or \emph{planar Holant problem}.  A planar bipartite
$\cF\mid\cG$ instance consists of a planar
embedded bipartite graph $(U,V,E)$, a label from $\cF$ on every vertex of
$U$, and a label from $\cG$ on every vertex of $V$.  At each vertex of
positive degree we specify a first incident edge and then follow the cyclic
order induced by the embedding; a nullary vertex has the unique empty port
order.  Thus every vertex has a linear port order compatible with its
planar cyclic order.  A vertex label must have arity equal to the degree of
that vertex, and its arguments follow this ordered list of ports.  Here a
planar embedding is a fixed crossing-free embedding; it determines an outer
face and a cyclic order at every vertex.  We write $E(w)$ for the resulting
linearly ordered list of edges incident with $w$.
The value of the instance $I$ is
\begin{equation}
  \Hol_I(\cF\mid\cG)
  =
  \sum_{\sigma\colon E\to D}
    \prod_{u\in U} F_u(\sigma|_{E(u)})
    \prod_{v\in V} G_v(\sigma|_{E(v)}).
  \label{eq:planar-holant}
\end{equation}
Here $F_u$ and $G_v$ denote the coordinate tensors assigned by the labels,
and $\sigma|_{E(w)}$ is the tuple of incident edge values read in the
specified linear port order.  Formula~\eqref{eq:planar-holant} is the full
tensor contraction of the network.

A \emph{planar gadget} is the same kind of finite tensor network embedded in
a closed disc, with some uncontracted half-edges ending on the boundary of
the disc; these are its \emph{dangling} or \emph{boundary edges}.  Their
cyclic order on the boundary, together with a marked first edge, is the
specified linear boundary order.  Contracting all internal edges produces
the \emph{boundary tensor}, whose ports occur in that order.  If $e$ is a
boundary edge, $j(e)$ denotes its position in this order.  The vertex tensors
are the gadget's primitive signatures.  A
gadget is \emph{connected} when its underlying graph after deleting the
dangling half-edges is connected.  It is \emph{all-left} when every
dangling edge is incident with a left vertex.  The \emph{all-left gadget
closure} is the collection of boundary tensors of actual connected
all-left gadgets; it is not their linear span.  A \emph{star gadget} has
one central vertex; each central port is either contracted with a unary
leaf or left as a dangling edge.  It is \emph{closed} when it has no
dangling edges, equivalently, when its boundary arity is zero and every
central port is contracted with a unary leaf.  A \emph{pin} is a unary
coordinate signature $\delta_a=e_a^*$ on the left, or $e_a$ on the right;
\emph{pinning a port} means contracting it with that unary tensor.  Ordered
planar composition means gluing compatible consecutive boundary ports
without changing their inherited cyclic orders.  The Boolean
\emph{disequality} signature is $\mathrm{NEQ}(x,y)=1$ for $x\ne y$ and $0$
otherwise.

\subsection{Planar matchgates, Pfaffians, and ordered operations}
\label{sec:matchgate-preliminaries}

A \emph{perfect matching} of a finite graph is a set of edges incident with
every vertex exactly once.  For an edge-weighted graph $\mathcal N$, let
$w(e)$ denote the weight of edge $e$ and put
\begin{equation}
  \PerfMatch(\mathcal N)
  :=\sum_{P\text{ a perfect matching of }\mathcal N}
       \prod_{e\in P}w(e),
  \label{eq:weighted-perfect-matching}
\end{equation}
with value $0$ if no perfect matching exists and value $1$ on the empty
graph.  A \emph{planar matchgate} $\mathcal M$ is a finite edge-weighted
planar graph with \emph{external vertices} $v_1,\ldots,v_s$ occurring in
that cyclic order on its outer face, with $v_1$ marked as first; all other
vertices are \emph{internal}.  Under the deletion
convention, its $s$-bit signature $G\colon\bits^s\to\C$ is
\begin{equation}
  G(x)=
  \PerfMatch\!\left(
    \mathcal M-\{v_i:x_i=1\}
  \right),
  \label{eq:matchgate-signature}
\end{equation}
where deletion removes the indicated external vertices and all incident
edges.  We fix this convention throughout.  The other common convention is
obtained by globally complementing every index bit.  Let $\MG_s$ denote the
set of \emph{exact} $s$-bit planar-matchgate signatures in the stated
external order: ``exact'' means equality with the finite weighted graph's
coordinates, not approximation or a limiting construction.  We set
$\MG_0=\C$.

For $s\ge0$, let
\[
  \rho_s(x_1,\ldots,x_s)=(x_s,\ldots,x_1),
  \qquad \rho_0(\epsilon)=\epsilon.
\]

\begin{lemma}[Reflection reversal]
\label[lemma]{lem:reflection-reversal}
If $G\in\MG_s$, then the reversed signature
\[
  G^{\mathrm{rev}}(x):=G(\rho_s(x))
\]
also belongs to $\MG_s$.
\end{lemma}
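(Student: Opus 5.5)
The plan is to apply a reflection of the plane to the matchgate $\mathcal M$ realizing $G$, and then to fix the marked first external vertex. Take a planar matchgate $\mathcal M$ whose signature under the deletion convention \eqref{eq:matchgate-signature} is $G$. Its external vertices $v_1,\ldots,v_s$ appear in this cyclic order on the outer face, with $v_1$ marked. Let $\mathcal M'$ be the mirror image of $\mathcal M$ under a reflection of the plane. It has the same abstract graph and the same edge weights, so it has the same perfect matchings and the same weights. Only the embedding changes: the outer face stays the outer face, but its boundary is now traversed in the reverse cyclic order. Along the outer face, the external vertices therefore read $v_s,v_{s-1},\ldots,v_1$ up to cyclic rotation.

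Next I choose the marking. In $\mathcal M'$ I mark $v_s$ as the first external vertex. With this choice the $i$th external vertex of $\mathcal M'$ is $w_i:=v_{s+1-i}$. For $x\in\bits^s$, deleting $\{w_i:x_i=1\}$ from $\mathcal M'$ deletes exactly $\{v_j:x_{s+1-j}=1\}$ from the same abstract graph. The perfect-matching sum does not depend on the embedding, so the signature of $\mathcal M'$ at $x$ equals $G(\rho_s(x))=G^{\mathrm{rev}}(x)$. Hence $G^{\mathrm{rev}}\in\MG_s$.

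Edge cases need a short remark. For $s=0$ the statement is trivial, since $\MG_0=\C$ and $\rho_0$ is the identity. For $s=1,2$ a cyclic order of that length is determined up to the choice of first element, and the argument above still applies.

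The only point that requires care is convention bookkeeping. Some matchgate conventions require the external vertices to occur in \emph{counterclockwise} order from the marked vertex. If the paper's convention is orientation-sensitive in this way, reflection is exactly what turns clockwise into counterclockwise. The reversed listing starting at $v_s$ is then in the required orientation. If the convention only asks for the given cyclic order, either orientation is allowed and the relabelled listing is admissible directly.

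I expect no real obstacle. The step to state carefully is that reflection preserves planarity and outer-face membership while reversing the boundary orientation, so that the new linear order starting at $v_s$ is a legitimate external order.
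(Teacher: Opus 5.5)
Your proposal is correct and matches the paper's proof: reflect an exact realization, mark the image of the old last external vertex $v_s$ as the new first vertex, and note that the weighted graph and every perfect-matching weight are unchanged while the external order is reversed, so the deletion signature is $G\circ\rho_s$. Your explicit index check $w_i=v_{s+1-i}$ and your remark on orientation conventions are correct; the paper leaves that bookkeeping implicit.
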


\begin{proof}
For $s=0$ there is nothing to prove.  Reflect an exact planar realization
of $G$ in the plane and mark the reflected copy of the old last external
vertex as the new first external vertex.  The reflected external order is
the reverse of the old order, while the weighted graph and every
perfect-matching weight are unchanged.  Its deletion signature in the new
order is therefore $G\circ\rho_s$.
\end{proof}

The \emph{matchgate identities}, also called the quadratic
Grassmann--Pl\"ucker relations in these coordinates, are the following
equations.  For $\alpha,\beta\in\bits^s$, list
$\supp(\alpha\oplus\beta)=\{p_1<\cdots<p_\ell\}$; then
\begin{equation}
  \sum_{j=1}^{\ell}(-1)^j
  G(\alpha\oplus\one_{\{p_j\}})
  G(\beta\oplus\one_{\{p_j\}})=0.
  \label{eq:matchgate-identities}
\end{equation}
They are necessary and sufficient for exact planar-matchgate realizability
and imply parity~\cite{CaiGorenstein2014,CaiChoudharyLu2009}.

A matrix $A$ is \emph{skew-symmetric} when $A^{\trans}=-A$.  If
$S=\{i_1<\cdots<i_{2m}\}$ has even size, its principal submatrix is $A[S]$
and its \emph{Pfaffian} is
\begin{equation}
  \Pf(A[S])
  :=\sum_{\pi}\sgn(\pi)
       \prod_{q=1}^{m}A_{u_q,v_q}.
  \label{eq:pfaffian-definition}
\end{equation}
Here $\pi$ ranges over partitions of $S$ into pairs.  For each $\pi$, write
each pair as $\{u_q,v_q\}$ with $u_q<v_q$, and index the pairs so that
$u_1<\cdots<u_m$.  The increasing order on $S$ is the displayed sequence
$(i_1,\ldots,i_{2m})$.  Let $\sigma_\pi$ be the unique permutation of
$[2m]$ satisfying
\[
  (u_1,v_1,\ldots,u_m,v_m)
  =(i_{\sigma_\pi(1)},i_{\sigma_\pi(2)},\ldots,i_{\sigma_\pi(2m)}).
\]
Then
\[
  \sgn(\pi):=\sgn(\sigma_\pi)=(-1)^{N_\pi},
\]
where $N_\pi$ is the number of pairs of positions $a<b$ for which the
$a$th entry of $(u_1,v_1,\ldots,u_m,v_m)$ is larger than its $b$th entry.
For example, if $S=\{1,2,3,4\}$ and
$\pi=\{\{1,3\},\{2,4\}\}$, then the associated sequence
$(1,3,2,4)$ has one inversion, so $\sgn(\pi)=-1$.  Each product is a
\emph{perfect-matching monomial}.  We use $\Pf(A[\varnothing])=1$.

For skew-symmetric matrices $A$ and $B$ of even orders, their block direct
sum in concatenated coordinate order satisfies
\[
  \Pf(A\oplus B)=\Pf(A)\Pf(B).
\]
More generally, if $P_\sigma$ is the permutation matrix of a coordinate
permutation $\sigma$, then, for every even-order skew-symmetric matrix $C$,
\[
  \Pf(P_\sigma C P_\sigma^{\trans})
  =\det(P_\sigma)\Pf(C)
  =\sgn(\sigma)\Pf(C).
\]
Swapping consecutive coordinate chunks of sizes $p$ and $q$ consists of
$pq$ transpositions and therefore contributes the sign $(-1)^{pq}$.
Consequently, permuting even-sized chunks introduces no sign.

We use the following two standard consequences of the identities.

\begin{enumerate}
  \item Every nonzero $G\in\MG_s$ has one parity: all nonzero coordinates
  $G(x)$ have the same parity of $|x|$.

  \item For a skew-symmetric matrix $A\in\C^{s\times s}$, define its
  \emph{principal-Pfaffian signature}
  \begin{equation}
    \Gamma_A(x)=
    \begin{cases}
      \Pf(A[\supp(x)]),&|x|\text{ is even},\\
      0,&|x|\text{ is odd},
    \end{cases}
    \qquad \Pf(A[\varnothing])=1,
    \label{eq:principal-pfaffian-tensor}
  \end{equation}
  Then $\Gamma_A\in\MG_s$.  By sufficiency of the matchgate identities,
  $\Gamma_A$ has an exact planar realization in the stated external order;
  $A$ is a Pfaffian parameter matrix and need not itself be planar.
  To reconcile conventions explicitly, if
  $\widetilde\Gamma_A(x):=\Pf(A[[s]\setminus\supp(x)])$, then
  $\Gamma_A(x)=\widetilde\Gamma_A(\bar x)$.  The support- and
  deletion-Pfaffian conventions therefore differ by the one global bit
  complement already fixed above.  The matchgate identities are invariant
  under this common XOR shift of all coordinate indices, so
  $\Gamma_A$ satisfies them in the same coordinate order and the
  sufficiency theorem realizes it exactly in the deletion convention.
\end{enumerate}

The all-zero signature belongs to $\MG_s$: one may adjoin an isolated
internal vertex, which destroys every perfect matching.  Every standard
basis vector $e_x$ belongs to $\MG_s$ as a tensor product of unary pins.  If
$G\in\MG_s$ and $\lambda\in\C$, then $\lambda G\in\MG_s$; take the
disjoint union with an internal edge of weight $\lambda$.  More generally,
exact matchgates are closed under disjoint union/tensor product, pinning an
external bit, and ordered planar contraction: the last operation glues
compatible consecutive external vertices, sums over their common Boolean
values, and retains the inherited order on the unglued vertices.  These are
the closure operations invoked later.

\begin{lemma}[Field-preserving matchgate realization]
\label[lemma]{lem:field-preserving-realization}
Let $K\subseteq\C$ be a subfield, and fix an order of $s$ coordinates.  If
$G\in K^{\bits^s}$ satisfies the matchgate identities in that order, then
$G$ is exactly the deletion signature of a planar matchgate whose edge
weights lie in $K$ and whose external vertices occur in the prescribed
order.
\end{lemma}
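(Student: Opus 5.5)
The plan is to reduce to the principal-Pfaffian signatures $\Gamma_A$ and then build an explicit planar realization of $\Gamma_A$ in which every edge weight is an entry of $A$ or a fixed rational constant.

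\emph{Normalization.} If $G=0$, take a single isolated internal vertex, whose weight set is empty and hence lies in $K$. Otherwise $G$ has a single parity, as noted after \eqref{eq:matchgate-identities}. Choose $x_0$ with $c:=G(x_0)\neq0$ and define
\[
  H(x):=c^{-1}G(x\oplus x_0)\in K.
\]
The identities \eqref{eq:matchgate-identities} are homogeneous and invariant under a common XOR shift of all indices. Hence $H$ satisfies them in the same order, with $H(0^s)=1$ and even parity.

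\emph{Pfaffian form.} Set $A_{ij}:=H(\one_{\{i,j\}})$ for $i<j$ and extend skew-symmetrically, so that $A$ has entries in $K$. I will prove $H=\Gamma_A$ by induction on $|x|$. For $x$ of even weight at least four, let $p_1$ be the smallest element of $\supp(x)$, and apply \eqref{eq:matchgate-identities} with $\alpha=x\oplus\one_{\{p_1\}}$ and $\beta=\one_{\{p_1\}}$. Here $\supp(\alpha\oplus\beta)=\supp(x)$. The $j=1$ term is $-H(x)H(0^s)=-H(x)$. Every other term is a product of $H$ on $x$ with two bits removed and $H$ on a weight-two word. With the sign convention of \eqref{eq:pfaffian-definition}, this is exactly the first-row Laplace expansion of $\Pf(A[\supp(x)])$. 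Odd-weight coordinates vanish on both sides.

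\emph{Planar realization of $\Gamma_A$ over $K$.} This step is the main obstacle. I would take the standard construction and check that it stays over $K$:
\begin{enumerate}
  \item Place the external vertices $v_1,\dots,v_s$ on a circle, and give each $v_i$ a pendant internal partner through an edge of weight $1$. This converts the deletion convention into the ``support is matched internally'' convention.
  \item Join the partners pairwise by edges of weight $\pm A_{ij}$, drawn inside the disc.
  \item Replace each crossing by Valiant's crossover matchgate. Its weights are fixed small integers, hence lie in $\Q\subseteq K$.
\end{enumerate}
The delicate part is sign bookkeeping. Perfect-matching sums carry no signs, whereas Pfaffians do. I would handle this by showing that the crossover gadget contributes exactly the sign of one transposition per crossing, which matches the inversion count $N_\pi$ in \eqref{eq:pfaffian-definition}. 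Any residual sign that depends only on the pair $\{i,j\}$ is absorbed by choosing $\pm A_{ij}$. If this bookkeeping proves awkward, a fallback is to induct on $s$. One adjoins one external vertex at a time, using planar contraction of rational gadgets; the closure operations listed before the statement preserve $K$-weights.

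\emph{Undoing the normalization.} To complement bit $i$, attach a new external vertex $v_i'$ to $v_i$ by an edge of weight $1$, placed in the outer face next to $v_i$, and declare $v_i$ internal. Deleting $v_i'$ forces $v_i$ to be matched internally, and keeping $v_i'$ forces $v_i$ to be matched to it. Hence the deletion bit is exactly complemented, and the external cyclic order is unchanged. Doing this for every $i\in\supp(x_0)$ realizes $H(x\oplus x_0)$. Finally, take the disjoint union with an internal edge of weight $c\in K$, which multiplies the signature by $c$ and yields $G$. All weights used lie in $K$, and the external order is the prescribed one.
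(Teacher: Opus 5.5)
Your proposal is correct and follows essentially the paper's route. The paper simply invokes the Cai--Gorenstein sufficiency construction and notes that its normalization and restoration steps use only field operations and its crossover gadget has weights $\pm1$; you reconstruct that same construction explicitly (Pfaffian form derived from the identities, pendant partners, crossover planarization, complementing edges, scalar edge), and your sign bookkeeping needs no residual $\pm A_{ij}$ correction: the crossing number of a chord matching on points in convex position has the parity of $N_\pi$, and the single four-bit matchgate identity forces any planar crossover to take the value $-1$ when both wires are used.
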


\begin{proof}
For a nonzero signature this is the field-preserving sufficiency
construction for the matchgate identities of Cai and
Gorenstein~\cite[Section~5 and Theorem~5.1]{CaiGorenstein2014}.  Their
normalization and restoration steps use only field operations, and their
planar crossover gadget---the fixed matchgate implementing a signed exchange
of two ordered wires---uses weights in $\{+1,-1\}$.  Their construction
places the labels $1<\cdots<s$ clockwise; naming those labels by the fixed
coordinate order gives the prescribed external order here.  The zero signature is
realized over $K$ by including an isolated internal vertex.  Nullary
signatures are scalars and are realized by a single internal edge of the
given weight (with the zero case handled as above).
\end{proof}

\subsection{Matchgate matrices, covers, and fermionic geometry}
\label{sec:matchgate-matrix-preliminaries}

Let $G\in\MG_{r+t}$ have its first $r$ ports designated as inputs and its
last $t$ as outputs.  Its \emph{$r$-input, $t$-output exact matchgate matrix}
is the $2^r$-by-$2^t$ matrix $P$ defined by
\begin{equation}
P(x,y):=G(x,\rho_t(y)),
  \qquad x\in\bits^r,\quad y\in\bits^t,
  \label{eq:matchgate-matrix-convention}
\end{equation}
where $\rho_t$ is the fixed planar reversal defined in
\cref{lem:reflection-reversal}.
With this convention, gluing an ordered output block to an ordered input
block is ordinary
matrix multiplication.  A square such matrix is an \emph{input} or
\emph{output matchgate transformation} according to the side on which it
is composed; it is \emph{invertible} in the ordinary linear-algebraic
sense.  The inverse of an invertible exact matchgate transformation is
again an exact matchgate transformation, by the ordered canonical-form
theorem~\cite{CaiChoudharyLu2009,Xia2016}.  Here an \emph{ordered canonical
form} means the normalization of an exact matchgate matrix, by invertible
input and output matchgate transformations, to disjoint active
input--output wires with all inactive modes pinned; every later use states
the particular consequence needed.  A Boolean \emph{wire} or
\emph{mode} is one binary tensor index;
an ordered consecutive group of wires is a \emph{block}.

The matchgate matrix rank theorem says that if $P$ is a nonzero
$a$-input, $b$-output exact matchgate matrix, then
\[
  \rank P=2^d
\]
for a unique integer $d$ with $0\le d\le\min\{a,b\}$.  In the ordered
canonical form, $d$ is the number of active input--output wires
\cite[Theorem~4.1]{Xia2016}.  Conversely, the canonical matchgate with $d$
active wires and all remaining modes pinned shows that, for fixed $a,b$,
every $d$ in this range occurs.  Thus, when the input and output arities are
unrestricted, $d$ may be any nonnegative integer.  For $0\le r\le t$, the
\emph{standard inclusion} $J_{r,t}$ is the exact matchgate matrix that
copies the $r$ input bits to $r$ active output wires and pins the remaining
$t-r$ inactive output wires to zero.  In the fixed matrix convention,
\[
  J_{r,t}(x,y)=\mathbf 1[y=(x,0^{t-r})],
  \qquad J_{r,t}J_{r,t}^{\trans}=I_{2^r}.
\]
In its planar realization, the first $t-r$ physical output ports are
zero-pinned and $r$ nested unit edges join input $i$ to the physical output
port carrying logical bit $y_i$.  Thus the physical output order is
$y_t,\ldots,y_1$, exactly as required by
\eqref{eq:matchgate-matrix-convention}.
A fixed-mode \emph{flip} exchanges the two Boolean values on that mode and
is itself an invertible matchgate transformation: replace the corresponding
unit equality edge in the nested identity realization by a two-edge unit
path with one internal vertex.  Its two-terminal deletion signature is
$1$ exactly on $01$ and $10$, so its ordinary matrix is the bit flip on
that mode.

Following Xia, if $M\in\C^{n\times2^t}$ has rank $2^r$, a
\emph{full-rank matchgate realizer} of $M$ is a matrix
$A\in\C^{2^s\times n}$, for some $s\ge0$, such that $AM$ is an
$s$-input, $t$-output exact matchgate matrix of rank $2^r$
\cite[Definition~5.1]{Xia2016}.  More generally, a
\emph{rank-$2^r$ common matchgate cover} of a width-$t$ base $M$ is an
$r$-input, $t$-output exact matchgate matrix $P$ of rank $2^r$ satisfying
\begin{equation}
  \row(M)\subseteq\row(P).
  \label{eq:common-cover-definition}
\end{equation}
Its \emph{cover width} is $r$.  It is called common because the same row
space covers the single base shared by every label.  A \emph{Gaussian
cover} is the same object; ``Gaussian'' here is the standard fermionic
synonym for matchgate/Pfaffian structure, not a probabilistic term.
A \emph{rank-$2^r$ Gaussian hull matrix} of subspaces
$R_1,\ldots,R_m$ is such an exact matchgate matrix whose row space contains
$R_1+\cdots+R_m$; its row space is the corresponding \emph{Gaussian hull}.
When that row space contains $\row(M)$, the matrix is a common cover of $M$.

For parity bookkeeping, put
\[
  \mathcal E_t:=\Span\{e_x:|x|\text{ is even}\},\qquad
  \mathcal O_t:=\Span\{e_x:|x|\text{ is odd}\}.
\]
A subspace is \emph{parity-invariant} when it is the direct sum of its
intersections with $\mathcal E_t$ and $\mathcal O_t$.  If $R$ is the row
space of a rank-two exact matchgate matrix, the canonical form gives
\[
  R=R^+\oplus R^-,\qquad
  R^+:=R\cap\mathcal E_t,\quad R^-:=R\cap\mathcal O_t,\qquad
  \dim R^+=\dim R^-=1.
\]
A nonzero vector is \emph{parity-homogeneous} when it lies entirely in
$\mathcal E_t$ or entirely in $\mathcal O_t$.
The two lines $R^+,R^-$ are its \emph{parity endpoints}.  When a domain
plane $P$ maps to $R=PM$, their pullbacks in $P$ are denoted $P^+,P^-$.

We now fix the pure-spinor terminology used in the support argument.  Let
$f_1,\ldots,f_t$ be the standard basis of $W=\C^t$, and identify
$\C^{\bits^t}$ with the exterior algebra $\Lambda^\bullet W$ by
\[
  e_x\longleftrightarrow
  f_{i_1}\wedge\cdots\wedge f_{i_m}
  \quad\text{when }\supp(x)=\{i_1<\cdots<i_m\}.
\]
Thus $e_\varnothing=e_{0^t}$ is the \emph{vacuum}, a vector supported on
words of weight one is a \emph{one-particle vector}, and $\wedge$ is the
alternating exterior product.  The space $W\oplus W^*$ carries the
symmetric bilinear form
\[
  \langle w+\phi,w'+\phi'\rangle
  :=\phi(w')+\phi'(w)
\]
and acts on $\Lambda^\bullet W$ by the Clifford action
\[
  (w+\phi)\cdot u:=w\wedge u+\iota_\phi u,
\]
where $\iota_\phi$ is contraction by $\phi$.  For nonzero $u$, define its
\emph{annihilator}
\begin{equation}
  \Ann(u):=\{z\in W\oplus W^*:z\cdot u=0\}.
  \label{eq:pure-spinor-annihilator}
\end{equation}
A subspace $K\le W\oplus W^*$ is \emph{isotropic} when the displayed form
vanishes on $K\times K$; it is maximal isotropic when it has the maximum
  possible dimension $t$.  A nonzero $u$ is a \emph{pure spinor} when
$\Ann(u)$ is maximal isotropic.  In the coordinates above, satisfying the
matchgate identities in the fixed ordered basis is equivalent to being a
pure spinor.  Together with \cref{lem:field-preserving-realization}, a
nonzero vector is therefore an exact matchgate vector if and only if it is
a pure spinor
\cite{CaiChoudharyLu2009,CaiGorenstein2014,Manivel2009}.  For a vector space
$U$, its projective space $\mathbb P(U)$ is the
set of one-dimensional subspaces; $[u]$ denotes the point $\C u$.  We call
$\C u$ a \emph{pure matchgate ray} and define
$\Ann(\C u):=\Ann(u)$, which is independent of the chosen nonzero generator.
The projective set of pure matchgate rays of one fixed parity is the
\emph{spinor variety}; a \emph{pure pencil} is a projective line contained
in that variety.  A \emph{common annihilator} of several pure vectors or
rays is the intersection of their annihilators.  The contractions
$\iota_{f_i^*}$ are the \emph{standard annihilation operators}; their common
kernel on $\Lambda^\bullet W$ is the intersection of their ordinary linear
kernels.

A linear \emph{isometry} of $W\oplus W^*$ is an invertible linear map that
preserves the displayed symmetric bilinear form.  The \emph{Witt extension
theorem} says that an isometry between two isotropic subspaces extends to an
isometry of the whole nondegenerate space.  The group of all such
isometries is denoted $\mathrm O(W\oplus W^*)$, and its determinant-one subgroup is
$\mathrm{SO}(W\oplus W^*)$.  The \emph{Pin/Spin (Clifford)
lift} is the induced invertible action on $\Lambda^\bullet W$ intertwining
the Clifford action; a determinant-reversing isometry uses the Pin action,
or equivalently one fixed-mode flip in addition to a Spin action.  Thus, if
$g\in \mathrm O(W\oplus W^*)$ and $U_g$ is a lift, then, up to the immaterial
nonzero scalar in the choice of lift,
\begin{equation}
  U_g\bigl(z\cdot u\bigr)=(gz)\cdot U_g(u).
  \label{eq:clifford-intertwining}
\end{equation}
The next lemma supplies the ordered-planar realization statement needed
later; it is not being folded into the terminology.

\begin{lemma}[Ordered transpose and Clifford-lift realization]
\label[lemma]{lem:ordered-clifford-realization}
The following statements use the matrix convention
\eqref{eq:matchgate-matrix-convention}.
\begin{enumerate}[label=\textup{(\alph*)}]
  \item If $P$ is an $r$-input, $t$-output exact matchgate matrix, then
  $P^{\trans}$ is a $t$-input, $r$-output exact matchgate matrix.

  \item For $g\in \mathrm O(W\oplus W^*)$, the ordinary coefficient matrix of any
  Pin lift $U_g$ in the ordered exterior-monomial basis is an invertible
  $t$-input, $t$-output exact matchgate matrix.  The same is true of
  $U_g^{-1}$.
\end{enumerate}
\end{lemma}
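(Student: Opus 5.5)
For part~(a), I will work directly from the definition \eqref{eq:matchgate-matrix-convention}. Write $P(x,y)=G(x,\rho_t(y))$ with $G\in\MG_{r+t}$. Then
\[
P^{\trans}(y,x)=G(x,\rho_t(y)).
\]
I need some $G'\in\MG_{t+r}$ with $G'(y,\rho_r(x))=G(x,\rho_t(y))$. Consider the full reversal $G^{\mathrm{rev}}(z)=G(\rho_{r+t}(z))$, which lies in $\MG_{r+t+0}=\MG_{t+r}$ by \cref{lem:reflection-reversal}. Since $\rho_{r+t}(y',x')=(\rho_r(x'),\rho_t(y'))$, we get
\[
G^{\mathrm{rev}}(\rho_t(y),\rho_r(x))=G(x,\rho_t(y)).
\]
This is almost the required form, except that the first block is reversed. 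To fix it I will use cyclic rotation instead of full reversal. The external vertices of a planar matchgate lie in cyclic order on the outer face, so remarking the first external vertex gives a cyclic shift of the ports, and this is again an exact planar matchgate. The ports of $G$ read cyclically as
\[
x_1,\ldots,x_r,\ \text{(physical outputs)}\ y_t,\ldots,y_1.
\]
Starting the cyclic order at the physical output carrying $y_t$ gives the sequence $(\rho_t(y)\text{ read physically}, x)$. Applying the reflection lemma then yields the order $(x$ reversed$, y)$ up to the rotation. Concretely, I will take $G'(y,w):=G(\rho_r(w),\rho_t(y))$. This is the composition of the rotation moving the $x$-block to the end, which gives the order $(\rho_t(y),x)$, followed by the full reversal, which gives $(\rho_r(x)\text{ reversed}\ldots)$. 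I will check the indices carefully so that the physical order becomes $y_1,\ldots,y_t$ followed by $x_r,\ldots,x_1$, which is precisely what $G'(y,\rho_r(x))$ requires. Both operations preserve $\MG$, since each is a relabelling of the same planar graph.

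A subtlety is that rotation of the marked vertex in the deletion convention involves no sign, because deletion signatures are computed from perfect matchings of the same graph with no Pfaffian orientation attached. So rotation-invariance is immediate. I expect this index bookkeeping to be the only real care point in part~(a).

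For part~(b), I plan to reduce to generators. $\mathrm O(W\oplus W^*)$ is generated by three families. The first is $\mathrm{GL}(W)$ acting as $w\mapsto Aw$, $\phi\mapsto\phi\circ A^{-1}$. The second is the "B-field" shears $\exp$ of skew forms $W\to W^*$ and $W^*\to W$. The third is a single fixed-mode flip, which swaps $f_i\leftrightarrow f_i^*$. Pin lifts are multiplicative up to scalars, so every $U_g$ is a scalar multiple of a product of lifts of generators. Ordered matchgate matrices are closed under multiplication, since gluing an output block to an input block is matrix multiplication in our convention, and under nonzero scaling. So it suffices to show each generator lift is an exact $t$-input, $t$-output matchgate matrix. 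I will handle the generators as follows.
\begin{enumerate}
\item The lift of $A\in\mathrm{GL}(W)$ is $\Lambda^\bullet A$, up to a $\det^{1/2}$ factor. Its matrix entries are minors of $A$. I will verify the matchgate identities via the Pfaffian/minor description: the signature $G(x,\rho_t(y))$ is the principal-Pfaffian signature $\Gamma_C$ of the skew matrix with block $A$ off-diagonal, up to the fixed sign reordering that the reversal accounts for. Alternatively, I will factor $A$ into elementary matrices and permutations of adjacent modes, each realized by a small planar gadget, with the crossover gadget of \cref{lem:field-preserving-realization} handling adjacent transpositions.
\item The lift of the shear by a skew $B$ is $\exp(\sum B_{ij}f_i\wedge f_j)$, acting by wedging with the Gaussian $\Gamma_B$. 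Its entries are Pfaffians of $B$ restricted to symmetric differences, which again gives a principal-Pfaffian signature of a $2t\times2t$ skew matrix, hence a matchgate. The dual shear follows by transpose using part~(a).
\item The flip lift is exactly the fixed-mode flip matrix already exhibited as an invertible matchgate transformation.
\end{enumerate}
Finally, $U_g^{-1}$ is a scalar multiple of $U_{g^{-1}}$, so it is covered by the same argument, or directly by the quoted inverse-closure from the canonical-form theorem.

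The main obstacle is verifying that the $\mathrm{GL}(W)$ and shear lifts, written in the reversed-output convention, satisfy the matchgate identities with the correct signs. My first attempt will be the Pfaffian route. I will write $U_g$'s matrix as $\Gamma_C$ for an explicit $2t\times2t$ skew $C$, and absorb the reversal $\rho_t$ using the chunk-permutation sign rule $\Pf(P_\sigma CP_\sigma^{\trans})=\sgn(\sigma)\Pf(C)$. Any remaining global sign is harmless, since $\lambda G\in\MG_s$.
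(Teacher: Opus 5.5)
Your part~(a) matches the paper's proof once an index slip is fixed. For $y\in\bits^t$ and $w\in\bits^r$ one has $\rho_{r+t}(y,w)=(\rho_r(w),\rho_t(y))$. So the full reversal already gives $G^{\mathrm{rev}}(y,\rho_r(x))=G(x,\rho_t(y))=P^{\trans}(y,x)$. Your displayed $G^{\mathrm{rev}}(\rho_t(y),\rho_r(x))$ in fact equals $G(x,y)$. The tensor $G'(y,w)=G(\rho_r(w),\rho_t(y))$ that you finally choose is literally $G^{\mathrm{rev}}$. This is the paper's single reflection, with the old last port marked first. Your ``rotate, then reverse'' description would instead put the $r$-block first, so simply drop the rotation.

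Part~(b) takes a genuinely different route. The paper never decomposes $g$. It treats $J_{t,t}$ as the doubled pure spinor $\Omega_t=\sum_y e^{\mathrm{in}}_y\wedge e^{\mathrm{out}}_{\rho_t(y)}$. For $g\in\mathrm{SO}(W\oplus W^*)$, the even Spin lift acts on doubled spinors as $U_g\widehat\otimes I$ and moves $\Ann(\Omega_t)$ by $g\oplus I$. Hence $\Omega_g$ stays pure, satisfies the matchgate identities, and has ordered coordinates $(U_g)_{x,y}$. Determinant $-1$ costs one fixed-mode flip, and $U_g^{-1}$ is handled via $g^{-1}$.

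Your generator argument is sound:
\begin{itemize}
\item $\mathrm{GL}(W)$ and the two shear families generate the connected group $\mathrm{SO}(W\oplus W^*)$, and one flip reaches the other component.
\item Pin lifts are unique and multiplicative up to scalars, because $\Lambda^\bullet W$ is an irreducible Clifford module.
\item Ordered matchgate matrices compose by matrix multiplication.
\end{itemize}

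The sign issue you flag resolves exactly, not merely up to a global sign. One has $\exp\bigl(\sum_i e^{\mathrm{in}}_i\wedge e^{\mathrm{out}}_{t+1-i}\bigr)=\Omega_t$, since the two $(-1)^{q(q-1)/2}$ reorderings cancel. Now take your $C$ with input--input block $\beta$ for a $\wedge^2W$-shear and $0$ for $A\in\mathrm{GL}(W)$. Put the input--physical-output entry $A_{ij}$ at $(i,t+1-j)$, with $A=I$ for the shear. Then $\Gamma_C=(U_h\widehat\otimes I)\Omega_t$ on the nose. So your check is the paper's identity, verified generator by generator.

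Two small corrections:
\begin{itemize}
\item Shear entries are Pfaffians of $\beta$ on $S\setminus T$ when $T\subseteq S$ (and zero otherwise), not on symmetric differences.
\item The lift of $f_i\leftrightarrow f_i^*$ is the bare bit flip only for a suitable mode and lift, e.g.\ the last mode. Otherwise it differs by a Jordan--Wigner sign diagonal. That diagonal is itself a $\mathrm{GL}(W)$ lift, so it is already covered.
\end{itemize}

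As for what each route buys: the paper's argument is uniform and needs neither a generation theorem nor per-generator sign checks. It does, however, rely on the purity--identity equivalence and the graded doubled-spinor action. Yours is more elementary and yields explicit Pfaffian parameter matrices. The price is the generation result for $\mathrm{O}(2t,\C)$ and Schur-type uniqueness of lifts.
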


\begin{proof}
For~\textup{(a)}, let $G$ realize $P$, so
$P(x,y)=G(x,\rho_t(y))$.  If $r+t=0$, transposition changes nothing and the
claim is immediate.  Otherwise reflect the embedded graph and mark the reflected
copy of the old last external port as the new first port.  In the resulting
prescribed order its signature is
\[
  H(y,z):=G\bigl(\rho_r(z),\rho_t(y)\bigr).
\]
Reflection preserves planarity, weights, and deletion-signature values.
Moreover,
\[
  H\bigl(y,\rho_r(x)\bigr)
  =G\bigl(x,\rho_t(y)\bigr)=P(x,y)=P^{\trans}(y,x),
\]
so the reflected matchgate realizes exactly $P^{\trans}$.  Notice that this
argument does not assert that a bare unsigned wire reversal is a matchgate
transformation.

For~\textup{(b)}, define the coefficient convention by
\[
  U_g(e_y)=\sum_{x\in\bits^t}(U_g)_{x,y}e_x.
\]
Take ordered copies $W_{\rm in},W_{\rm out}$ of $W$ and, in the ordered
graded-tensor identification
$\Lambda(W_{\rm in}\oplus W_{\rm out})
 \cong\Lambda W_{\rm in}\,\widehat\otimes\,\Lambda W_{\rm out}$, put
\[
  \Omega_t:=\sum_{y\in\bits^t}
       e_y^{\rm in}\wedge e_{\rho_t(y)}^{\rm out}.
\]
Here $\widehat\otimes$ denotes the graded tensor product, with all input
modes ordered before all physical output modes.
This is the signature of the nested-unit-edge realization of
$J_{t,t}=I_{2^t}$ described above, and hence is a pure spinor.  First assume
$g\in \mathrm{SO}(W\oplus W^*)$ and take its even Spin lift.  In the ordered
factorization above its action on doubled spinors is literally
$U_g\widehat\otimes I$: input creation and contraction act on the first
factor, while contraction of an input covector with the second factor
vanishes.  Since the lift is even, it intertwines the doubled Clifford
action with $g\oplus I$.  Therefore
\[
  \Ann\bigl((U_g\widehat\otimes I)\Omega_t\bigr)
  =(g\oplus I)\Ann(\Omega_t),
\]
which is again maximal isotropic.  Consequently
\[
  \Omega_g:=(U_g\widehat\otimes I)\Omega_t
\]
is pure and therefore satisfies the matchgate identities.  Its ordered
coordinates obey
\[
  \Omega_g\bigl(x,\rho_t(y)\bigr)=(U_g)_{x,y},
\]
so \cref{lem:field-preserving-realization} realizes the ordinary matrix of
$U_g$ as an exact planar matchgate in precisely the prescribed input/output
order.

If $\det g=-1$, fix a standard determinant-reversing mode reflection $s$
with $s^2=I$
whose lift is the exact fixed-mode flip recorded above, and put $h=sg\in
\mathrm{SO}(W\oplus W^*)$.  Up to a nonzero scalar, a Pin lift of $g=s h$ is the
ordered composition of that fixed-mode flip with the Spin lift of $h$.
Exact matchgate matrices are closed under this ordinary planar composition
and under nonzero rescaling.  This proves the assertion for every Pin lift.
Applying the same argument to $g^{-1}$ proves it for $U_g^{-1}$.
\end{proof}

An exact-matchgate row space is also called a \emph{Gaussian row space},
and an invertible matchgate transformation a \emph{Gaussian
transformation}.  In the Grassmann description, symbols such as
$\xi_1,\ldots,\xi_s,\eta_1,\ldots,\eta_t$ are anticommuting input and output
variables.  A \emph{pivot chart} chooses a nonzero matrix coordinate, uses
fixed input/output flips to place it at the vacuum coordinate, and divides
by that coordinate; \emph{pivot-normalized} means that this pivot is $1$.
In such a chart a matchgate matrix has a skew quadratic
parameter matrix in the standard Pfaffian/Grassmann chart
\cite{CaiChoudharyLu2009,Xia2016}
\[
  \mathcal A=
  \begin{pmatrix}A_{\rm in}&B\\-B^{\trans}&D_{\rm out}\end{pmatrix}
\]
and coefficient-generating polynomial
\[
  \exp\!\left(\frac12\xi^{\trans}A_{\rm in}\xi
    +\xi^{\trans}B\eta
    +\frac12\eta^{\trans}D_{\rm out}\eta\right).
\]
The variables anticommute, so the exponential is a finite exterior-algebra
polynomial; the coefficient of each input--output monomial is the
corresponding matrix entry.  If the pivot output row is the vacuum, all its
two-output coordinates vanish and hence $D_{\rm out}=0$.  Multiplication on
the input Grassmann algebra by the inverse input-only exponential removes
$A_{\rm in}$ by ordinary invertible row operations.  More explicitly, for
$q(\xi)=\frac12\xi^{\trans}A_{\rm in}\xi$, the map
\[
  L_{-q}:p(\xi)\longmapsto \exp(-q(\xi))\wedge p(\xi)
\]
is an invertible linear map with inverse $L_q$.  Applied coefficientwise in
$\eta$, its ordinary coefficient matrix left-multiplies the matchgate
matrix and hence leaves its row space unchanged.  No exact-matchgate claim
about $L_{-q}$ is needed: since the exterior elements involved are even,
\[
  \exp(-q)\wedge\exp(q+\xi^{\trans}B\eta)
  =\exp(\xi^{\trans}B\eta),
\]
and the matrix on the right is independently exact by the same
principal-Pfaffian chart.  Its coefficient matrix, up to the fixed signs
dictated by \eqref{eq:matchgate-matrix-convention}, is the
\emph{exterior-compound matrix}
\[
  \bigoplus_{q=0}^{\min(s,t)}\wedge^qB.
\]
If
$d=\rank B$, then
\begin{equation}
  \rank\!\left(\bigoplus_{q=0}^{\min(s,t)}\wedge^qB\right)
  =\sum_{q=0}^{d}\binom dq=2^d.
  \label{eq:exterior-compound-rank}
\end{equation}
The nontrivial Clifford-lift realization has been isolated in
\cref{lem:ordered-clifford-realization}; the remaining geometric inputs are
stated with citations where they enter the proofs.

\subsection{Algebraic geometry and transcendence conventions}
\label{sec:algebraic-preliminaries}

For a field $K\subseteq\C$, affine $N$-space $\mathbb A_K^N$ has
coordinate ring $K[Y_1,\ldots,Y_N]$, the polynomial ring of its coordinate
functions.  An \emph{ideal} is an additive subgroup of a ring closed under
multiplication by arbitrary ring elements.  If
$I\subseteq K[Y_1,\ldots,Y_N]$ is an ideal, write
\[
  V_{\C}(I):=\{y\in\C^N:p(y)=0\text{ for every }p\in I\}.
\]
A set $X$ cut out in $\mathbb A_K^N$ by an ideal
$I\subseteq K[Y_1,\ldots,Y_N]$ is called a $K$-defined
\emph{affine algebraic set}; its complex points are
$X(\C):=V_{\C}(I)$.  A subset of $\C^N$ is \emph{$K$-defined Zariski
closed} if it equals $X(\C)$ for such an $X$.  The \emph{Zariski closure}
$\overline{S}^{\mathrm{Zar}}$ is the smallest Zariski-closed set containing
$S$; $S$ is \emph{Zariski dense} when that closure is the whole ambient
space, and a closed subset is \emph{proper} when it is not the whole
ambient space.  A closed set is
\emph{irreducible} when it is not a union of two proper closed subsets, and
an \emph{irreducible component} is a maximal irreducible closed subset.  We
use \emph{affine algebraic set} and \emph{affine variety} interchangeably
and do not require irreducibility.

The \emph{dimension} of an affine algebraic set is its Krull dimension: the
largest length of a strict chain of irreducible closed subsets, equivalently
the maximum dimension of its irreducible components; in particular
$\dim\mathbb A_K^N=N$.  A \emph{principal open set} has the form
$D(f)=\{x:f(x)\ne0\}$, and
$\mathbb G_m:=\mathbb A^1\setminus\{0\}$.  A \emph{regular map} or
\emph{morphism} has coordinate functions that are regular: locally they are
quotients of polynomials with nonvanishing denominators.  On a closed affine
algebraic set, polynomial coordinate functions give the special case used
below; on $\mathbb G_m$, Laurent polynomials---finite sums of integer powers
of the coordinate---are also regular.  We use the
standard facts that
\[
  \dim\overline{\phi(X)}^{\mathrm{Zar}}\le\dim X
\]
for a morphism $\phi$, that product dimensions add, and that extending the
ground field from $\Q$ to $\C$ does not change dimension.

A polynomial is \emph{homogeneous of degree $d$} when every one of its
monomials has total degree $d$; a homogeneous degree-two polynomial is a
\emph{quadric}.  A projective algebraic set in $\mathbb P(U)$ is the common
zero set of homogeneous polynomials, and its \emph{homogeneous vanishing
ideal} consists of all homogeneous polynomials vanishing on it.  Saying that
this ideal is \emph{generated by quadrics} means that every polynomial in it
is a polynomial combination of degree-two generators.  Restricting a
quadric to a \emph{projective line} $\mathbb P(L)$, where $L$ is a
two-dimensional vector subspace, gives a homogeneous degree-two polynomial in
two line coordinates; three distinct zeros force that restriction to vanish
identically.

If $X=V_{\C}(I)\subseteq\C^m$ and a polynomial map
$\phi:X\to\C^N$ has target coordinates $Y_1,\ldots,Y_N$, its
\emph{graph ideal} is
\[
  J:=I+\langle Y_j-\phi_j(\mathbf U):j\in[N]\rangle
  \subseteq K[\mathbf U,\mathbf Y],
\]
and $J\cap K[\mathbf Y]$ is the corresponding \emph{elimination ideal}.
The elimination theorem identifies the Zariski closure of the image with
$V_{\C}(J\cap K[\mathbf Y])$.  A \emph{Gr\"obner basis} is a generating
set whose leading monomials generate the leading-monomial ideal for a fixed
\emph{term order}, meaning a multiplicative well-order of monomials; the
leading monomial is the largest monomial in that order.  An
elimination-order basis computes this intersection.  We use
only the elementary fact that a Gr\"obner basis over $\Q$ remains one after
extending coefficients to $\C$.

For $S\subseteq\C$, $\Q(S)$ denotes the smallest subfield of $\C$
containing $\Q\cup S$.  Elements $a_1,\ldots,a_m$ are \emph{algebraically
independent over $\Q$} if no nonzero polynomial in
$\Q[X_1,\ldots,X_m]$ vanishes at them.  The \emph{transcendence degree}
$\trdeg_{\Q}K$ is the maximum cardinality of an algebraically independent
subset of $K$; the \emph{coordinate field} of a tensor is the field
generated by all of its coordinates.  The \emph{compositum} of subfields
is the smallest field containing all of them, and transcendence degree is
subadditive under finite composita.  By \emph{finite Zariski avoidance} we
mean choosing a point outside a finite union of proper Zariski-closed
sets.

For the number-theoretic argument, the \emph{square-class group}
$\Q^\times/(\Q^\times)^2$ identifies nonzero rationals whose quotient is a
rational square.  A \emph{multiquadratic field} is a field
$\Q(\sqrt{d_1},\ldots,\sqrt{d_m})$ obtained by adjoining finitely many
square roots.  These terms occur only in the proof of the explicit
transcendental table.

\subsection{Bases and valid common presentations}
\label{sec:base-presentation}

For an integer $t\ge0$, a width-$t$ base from Boolean blocks to a finite
domain $D$ is a matrix
\begin{equation}
  M\in\C^{|D|\times2^t}.
  \label{eq:xia-base}
\end{equation}
Rows are indexed by $D$ and columns by $t$-bit strings.  This is Xia's
orientation~\cite{Xia2016}; Chen uses the transpose orientation
\cite{Chen2016}.

A row of the form $e_b^{\trans}$ encodes its domain state by the Boolean
\emph{block codeword} $b\in\bits^t$.  For several blocks we abbreviate
\[
  e_{x_1,\ldots,x_m}:=e_{x_1}\otimes\cdots\otimes e_{x_m}.
\]
If $G=M^{\otimes b}H$, then $H$ is a \emph{Boolean preimage} of the domain
signature $G$ under $M$.  A \emph{base collapse} is the replacement of a
valid width-$t$ presentation by a valid presentation of the same problem at
smaller width.  A \emph{holographic algorithm} uses such a presentation to
replace the domain tensor network by a Boolean matchgate network and hence
by weighted perfect matching; the present paper studies exact
representability and does not infer an algorithmic running time from width
alone.

We view a left domain signature $F\colon D^a\to\C$ as a row tensor and a
right Boolean signature $H\colon(\bits^t)^b\to\C$ as a column tensor.  We
call $b$ the \emph{block arity} of $H$ and write $\barity(H)=b$; its
ordinary Boolean arity is $bt$.  Their blockwise transforms are
\begin{align}
  \widehat F
  &=F M^{\otimes a},
  &
  \widehat F(x_1,\ldots,x_a)
  &=\sum_{\mathbf i\in D^a}
      F(\mathbf i)\prod_{j=1}^a M_{i_j,x_j},
  \label{eq:left-transform}\\
  G
  &=M^{\otimes b}H,
  &
  G(\mathbf i)
  &=\sum_{x_1,\ldots,x_b\in\bits^t}
      H(x_1,\ldots,x_b)
      \prod_{j=1}^b M_{i_j,x_j}.
  \label{eq:right-transform}
\end{align}
Every $x_j$ is a consecutive $t$-bit block, and the displayed block order
is the external linear order compatible with the planar cyclic order.
This is \emph{block-major order}: list every wire of block $1$, then every
wire of block $2$, and so on.  The traditional names for the transformed
left signatures $FM^{\otimes a}$ and the Boolean right preimages $H$ are
\emph{generators} and \emph{recognizers}, respectively.  \emph{Simultaneous
realizability} means that the same base makes all of these labelled tensors
exact matchgates.  A \emph{full-rank generator} is a generator whose
matrix obtained by choosing one Boolean block as output and all remaining
blocks as inputs has rank equal to $\rank M=2^r$; writing that matrix as
$AM$, its left factor $A$ is then a full-rank matchgate realizer in the sense
of \cref{sec:matchgate-matrix-preliminaries}.  This phrase is used only in
the historical discussion.  A \emph{singleton
symmetric left family} contains one
signature invariant under every permutation of its ports; a
\emph{blockwise-symmetric} Boolean signature is invariant under permutations
of its equal-sized Boolean blocks.  These historical terms occur only in
the literature discussion.
Equations~\eqref{eq:left-transform}--\eqref{eq:right-transform} also fix the tensor
variance used throughout the paper.

For a Boolean column tensor $Q$ on $k$ consecutive $r$-bit blocks and
Boolean row tensors $g_1,\ldots,g_k$ on one $r$-bit block, define the
multilinear, non-Hermitian contraction
\begin{equation}
  \Contr(Q;g_1,\ldots,g_k)
  :=
  \sum_{x_1,\ldots,x_k\in\bits^r}
    Q(x_1,\ldots,x_k)\prod_{j=1}^k g_j(x_j).
  \label{eq:multilinear-contraction}
\end{equation}
No complex conjugation occurs.  For $r\ge1$, the contraction is represented
by the closed block-star tensor network in
\cref{fig:block-star-contraction}.  Each arm contains $r$ distinct ordered
Boolean wires, whose values form
$x_j=(x_{j,1},\ldots,x_{j,r})\in\bits^r$.  These entries are separate
summation indices; in particular, there is no copying constraint
$x_{j,1}=\cdots=x_{j,r}$.  The schematic is drawn for $k\ge3$; when
$k=1$ or $2$, only the corresponding distinct arms remain.

\begin{figure}[htbp]
  \centering
  \begin{tikzpicture}[
      tensor/.style={draw,rounded corners=2pt,fill=white,
        minimum width=1.8cm,minimum height=.68cm,inner sep=2pt},
      centre/.style={tensor,minimum height=4.0cm},
      wire/.style={line width=.35pt}
    ]
    \node[centre] at (-2.5,0) {$Q$};
    \node[tensor] at (2.5,1.45) {$g_1$};
    \node[tensor] at (2.5,.48) {$g_2$};
    \node at (2.5,-.48) {$\vdots$};
    \node[tensor] at (2.5,-1.45) {$g_k$};

    \draw[wire] (-1.6,1.53)--(1.6,1.53);
    \draw[wire,densely dotted] (-1.6,1.45)--(1.6,1.45);
    \draw[wire] (-1.6,1.37)--(1.6,1.37);
    \draw[wire] (-1.6,.56)--(1.6,.56);
    \draw[wire,densely dotted] (-1.6,.48)--(1.6,.48);
    \draw[wire] (-1.6,.40)--(1.6,.40);
    \draw[wire] (-1.6,-1.37)--(1.6,-1.37);
    \draw[wire,densely dotted] (-1.6,-1.45)--(1.6,-1.45);
    \draw[wire] (-1.6,-1.53)--(1.6,-1.53);

    \node[fill=white,inner sep=1pt] at (0,1.73) {$x_1\in\bits^r$};
    \node[fill=white,inner sep=1pt] at (0,.76) {$x_2\in\bits^r$};
    \node at (0,-.48) {$\vdots$};
    \node[fill=white,inner sep=1pt] at (0,-1.17) {$x_k\in\bits^r$};
  \end{tikzpicture}
  \caption{The closed block-star tensor network representing
    $\Contr(Q;g_1,\ldots,g_k)$.  The central tensor $Q$ has $k$ consecutive
    $r$-port blocks, and the $j$th arm connects $g_j$ to block $j$.
    The within-arm order is fixed in the paragraph below.  Each
    three-stroke bundle schematically represents
    the $r$ distinct wires of an arm and is not a literal wire count;
    intermediate arms are suppressed.}
  \label{fig:block-star-contraction}
\end{figure}

For $r=0$, every block is the unique empty word, no wires are present, and
the contraction reduces to $Q\prod_{j=1}^k g_j$.

We use the identity block-order convention in
\eqref{eq:multilinear-contraction}: the $j$th block and its internal
coordinate order in $Q$ contract with $g_j$ in that same order.  There is
one planar-order point to check.  Noncrossing physical gluing reads the
ports of the leaf in the reverse order from the central block.  Realize on
the $j$th arm the reflected exact matchgate
$g_j^{\mathrm{rev}}=g_j\circ\rho_r$, which is exact by
\cref{lem:reflection-reversal}.  If the central coordinate is $x_j$, the
leaf is physically read as $\rho_r(x_j)$ and contributes
$g_j^{\mathrm{rev}}(\rho_r(x_j))=g_j(x_j)$.  Thus the algebraic same-order
contraction in \eqref{eq:multilinear-contraction} is an exact planar
matchgate contraction whenever $Q$ and the $g_j$ are exact matchgates.

In Xia's output-port convention, the same scalar is written, after the
coordinate reindexing
$Q^{\rho}(x_1,\ldots,x_k):=Q(\rho_r(x_1),\ldots,\rho_r(x_k))$, as
\[
  \sum_{x_1,\ldots,x_k\in\bits^r}
    Q^{\rho}(x_1,\ldots,x_k)
    \prod_{j=1}^k g_j(\rho_r(x_j)).
\]
Here $Q^\rho$ is only a fixed renaming of coordinates for comparison of
conventions; we do not assert that an arbitrary blockwise reversal of $Q$
is itself a matchgate operation.  The renaming uses exactly the same
coordinate sets and hence does not alter any coordinate-field or
transcendence-degree bound.  We suppress these fixed permutations below.

We also use partial contraction.  For $J\subseteq[k]$, define the tensor on
the uncontracted blocks by
\begin{equation}
  \Contr_J\bigl(Q;(g_j)_{j\in J}\bigr)
  \bigl((x_\ell)_{\ell\notin J}\bigr)
  :=\sum_{(x_j)_{j\in J}}
      Q(x_1,\ldots,x_k)\prod_{j\in J}g_j(x_j).
  \label{eq:partial-contraction}
\end{equation}
In an expression such as
$\Contr(Q;g_1,g_2,-,\ldots,-)$, each dash marks an uncontracted block and
the surviving blocks retain their original order.  The reflected-leaf
argument above and ordered planar matchgate closure from
\cref{sec:matchgate-preliminaries} show that such a partial contraction is
again an exact matchgate whenever all participating tensors are exact
matchgates.

\begin{definition}[Valid holographic presentation]
\label[definition]{def:valid-presentation}
Let $\cF$ be a finite labelled family of domain-$D$ row signatures, let
$\cH$ be a finite labelled family of Boolean column signatures with
specified block arities, and let $M$ be a width-$t$ base.  The triple
$(\cF,\cH,M)$ is a \emph{valid common planar holographic presentation} if
\[
  F M^{\otimes\arity(F)}\in\MG_{t\arity(F)}
  \quad\text{for every }F\in\cF,
  \qquad
  H\in\MG_{t\barity(H)}
  \quad\text{for every }H\in\cH,
\]
in the prescribed blockwise cyclic orders.  Its induced domain problem is
\[
  \cF\mid M\cH,
  \qquad
  M\cH
  :=\{M^{\otimes\barity(H)}H:H\in\cH\},
\]
with labels inherited from $\cH$.  The same base $M$ is used for every
label on both sides.
\end{definition}

The presentation is \emph{over a subfield $K\subseteq\C$} if every
coordinate of $M$, every domain tensor, and every Boolean preimage lies in
$K$, and the required exact planar matchgates can be realized with edge
weights in $K$.  We use \emph{valid presentation} and \emph{valid common
matchgate presentation} as abbreviations for the object in
Definition~\ref{def:valid-presentation}.  A statement made \emph{labelwise}
means separately for each original label, without merging or replacing the
label family.

Associativity of tensor contraction gives the holographic identity
\begin{equation}
  \Hol_I(\cF\mid M\cH)
  =\Hol_I(\cF M\mid\cH)
  \label{eq:holographic-identity}
\end{equation}
for every corresponding planar instance, where $\cF M$ denotes the family
of transforms in~\eqref{eq:left-transform}.  The cyclic order is unchanged.

\subsection{Realized supports, Gaussian mobility, and flags}
\label{sec:support-terminology}

Fix a qutrit presentation $(\cF,\cH,M)$ with $V=\C^3$ and a width-$t$ base
$M$ of row rank three.  Consider the all-left gadget closure of its induced
problem $\cF\mid M\cH$.  For a connected all-left gadget $\Gamma$ with
nonzero boundary tensor $T_\Gamma$, and a distinguished boundary edge $e$,
the subspace
\[
  S_e(\Gamma):=S_{j(e)}(T_\Gamma)\le V^*
\]
is a \emph{realized support}.  Thus ``realized'' always means that the
support occurs at an actual port of an actual connected planar gadget, not
merely in a linear combination of tensors.  Define
\begin{equation}
  \mathcal S_d
  :=\{S_e(\Gamma):\dim S_e(\Gamma)=d\},
  \qquad d\in\{1,2\}.
  \label{eq:support-closures}
\end{equation}
The \emph{support geometry} of the closure is the pair
$(\mathcal S_1,\mathcal S_2)$ together with the containment relations among
its realized rays and planes and, for each realized plane, the two parity
endpoints induced by $M$.
\emph{Cyclic rerooting} at a distinguished $t$-wire Boolean block means
rotating the boundary order until that block comes first and designating it
as the \emph{input} block; the wires inside the block retain their inherited
order.  To check the coordinates, let $s$ wires remain, write the physical
block flattening of the rerooted matchgate tensor $G$ as
\[
  F_G(z,w):=G(z,w),
  \qquad z\in\bits^t,\quad w\in\bits^s,
\]
and let
\[
  A_G(z,y):=G\bigl(z,\rho_s(y)\bigr)
\]
be its $t$-input, $s$-output exact matchgate matrix.  Thus $A_G=F_G R_s$
for the column-permutation matrix $R_s$ induced by $\rho_s$, and
\begin{equation}
  \operatorname{col}(F_G)
  =\operatorname{col}(A_G)=\row(A_G^{\trans}).
  \label{eq:rerooted-block-row-space}
\end{equation}
We call $\operatorname{col}(F_G)$ the \emph{block support} at the
distinguished block; it is the support of the whole block, rather than of
one Boolean wire.  By \cref{lem:ordered-clifford-realization},
$A_G^{\trans}$ is exact.  Hence
the reversal acts only on the complementary column index and never on the
distinguished block.  Consequently a realized line maps literally to a pure matchgate ray,
and a realized plane $P$ maps literally to the rank-two Gaussian row space
$PM$, with pullback parity endpoints $P^+,P^-$ as defined in
\cref{sec:matchgate-matrix-preliminaries}.

\begin{definition}[Monomial flag and connection-primality]
\label[definition]{def:monomial-flag}
Suppose $P\in\mathcal S_2$ and $L\in\mathcal S_1$ with
$L\not\subset P$.  The pair $(P,L)$ is a \emph{monomial flag} if
\[
  \mathcal S_2=\{P\},
  \qquad
  \mathcal S_1\subseteq\{P^+,P^-,L\}.
\]
The all-left gadget closure is \emph{connection-prime} if its realized
supports span $V^*$, $\mathcal S_2$ is nonempty, and it admits no monomial
flag.
\end{definition}

An \emph{adapted ray basis} for a flag $(P,L)$ is an ordered basis of
$V^*$ obtained by choosing nonzero generators of $P^+$, $P^-$, and $L$.
Changing those generators only rescales the three coordinates.

The names used in the trichotomy are now literal.  A \emph{ray-only} or
\emph{ray-separable} closure has $\mathcal S_2=\varnothing$.  A
\emph{Gaussian-mobile} closure has a realized plane but no monomial flag;
the proofs show that its moving planes or rays generate a constant-rank
Gaussian cover.  A \emph{one-plane flag closure} admits a monomial flag.
A \emph{Boolean matchgate core on $P$ times pure-ray factors} is a tensor
obtained by encoding an $m$-bit signature $h\in\MG_m$ in an ordered endpoint
basis $(p_0,p_1)$ with $p_0\in P^+$ and $p_1\in P^-$ at the plane-supported
ports, and tensoring nonzero generators of
$P^+$, $P^-$, or $L$ at the remaining ports.  The exact coordinate formula
is proved in \cref{lem:flag-normal-form}.  \emph{Rank-one stripping} is the
elementary step of factoring off such a ray at a line-supported port.
For a fixed adapted basis, the \emph{residual right tables} are the labelled
coordinate arrays obtained from induced right tensors after contracting
selected ports with fixed pure-ray factors and expressing all remaining
plane-supported ports in the dual endpoint coordinates.  Their port order
and labels are retained.  This phrase names the right-side data left
uncontrolled by the all-left flag normal form; it does not introduce a new
gadget operation or a new notion of equivalence.

\section{Matchgate Parameter Bounds and Star-Table Varieties}
\label{sec:matchgate-varieties}

The technical framework of
\cref{sec:matchgate-preliminaries,sec:algebraic-preliminaries}
turns the width question into a parameter-counting problem.  We begin the
negative lane of \cref{fig:conceptual-roadmap} by asking how much independent
information a width-$r$ labelled matchgate star can encode.  We first bound
the parameter dimension available to a single matchgate and then propagate
that bound through sampled block contractions.  The resulting
$\Q$-defined star-table closures are the low-width algebraic loci that
\cref{sec:obstruction-tables} will avoid.

For $s\ge0$, set
\[
  \delta(s):=1+\binom{s}{2}.
\]

\begin{lemma}[Coordinate transcendence bound]
\label[lemma]{lem:matchgate-transcendence}
For every integer $s\ge0$ and every $G\in\MG_s$,
\begin{equation}
  \trdeg_{\Q}\Q(G(x):x\in\bits^s)
  \le \delta(s).
  \label{eq:matchgate-transcendence-bound}
\end{equation}
\end{lemma}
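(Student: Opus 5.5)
The plan is to show that every exact matchgate signature is, coordinatewise, a polynomial image of at most $\delta(s)$ parameters. Transcendence degree then cannot exceed the parameter count. The zero signature and $s\le1$ are trivial, since the coordinate field has at most $2\le\delta(s)$ generators when $s\le1$. So fix a nonzero $G\in\MG_s$ with $s\ge2$. By the parity consequence of the matchgate identities, all nonzero coordinates of $G$ have one common parity.

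First I normalize. Choose a nonzero coordinate $G(x_0)$. If $x_0$ has odd weight, flip one fixed mode; otherwise flip no mode. Then set $G'(x):=G(x\oplus x_0)$. This is an XOR shift of all indices, and it preserves the matchgate identities, as noted in the discussion of the principal-Pfaffian signature. It also merely permutes coordinates, so the coordinate field is unchanged. Now $G'(0^s)=c\ne0$, and $G'$ is even.

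The key step is the Pfaffian chart. The standard fact is that a pivot-normalized matchgate signature is a principal-Pfaffian signature: $G'=c\,\Gamma_A$, where $A$ is the skew matrix with entries $A_{ij}=G'(\one_{\{i,j\}})/c$. This is the pivot chart of \cref{sec:matchgate-matrix-preliminaries} with no outputs. To justify it, I would induct on $|x|$, using the identity \eqref{eq:matchgate-identities} with $\alpha=x\oplus\one_{\{p_1\}}$ and $\beta=\one_{\{p_1\}}$. This expresses $c\,G'(x)$ as the Laplace-type expansion $\sum_j\pm G'(\one_{\{p_1,p_j\}})\,G'(x\oplus\one_{\{p_1,p_j\}})$, which is exactly the Pfaffian row expansion. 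Getting the signs to match the convention \eqref{eq:pfaffian-definition} is the only delicate bookkeeping, and I expect this verification to be the main obstacle. One can instead cite the Pfaffian/Grassmann chart from \cite{CaiChoudharyLu2009}.

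Every coordinate of $G'$ is therefore $c$ times a polynomial with integer coefficients in the $\binom{s}{2}$ entries $A_{ij}$. Hence
\[
\Q(G(x):x\in\bits^s)\subseteq\Q(c,A_{ij}:i<j).
\]
Here equality in fact holds, since $c$ and the $cA_{ij}$ are themselves coordinates. The latter field is generated by $1+\binom{s}{2}=\delta(s)$ elements, so its transcendence degree is at most $\delta(s)$. Transcendence degree is monotone under subfields, and this gives \eqref{eq:matchgate-transcendence-bound}.
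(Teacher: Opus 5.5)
Your route is the paper's: pivot on a nonzero coordinate, write every coordinate as $c$ times a Pfaffian of the normalized weight-two coordinates around the pivot, and count $1+\binom{s}{2}$ generators. Shifting the pivot to the vacuum is a small improvement over the paper's version, which keeps pivot-dependent signs $\eta_{ij},\eta_T$ and cites the chart. With $\alpha=x\oplus\one_{\{p_1\}}$ and $\beta=\one_{\{p_1\}}$, identity \eqref{eq:matchgate-identities} reads $c\,G'(x)=\sum_{j\ge2}(-1)^jG'(\one_{\{p_1,p_j\}})\,G'(x\oplus\one_{\{p_1,p_j\}})$. This is exactly the first-row expansion of $\Pf(A[\supp x])$ in the convention \eqref{eq:pfaffian-definition}. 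So the step you expected to be the main obstacle needs no further sign bookkeeping; induction on $|x|$ closes it at once. The preliminary mode flip is superfluous, since the shift by $x_0$ alone already makes $G'$ even.

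There is one false step. Since $\delta(0)=\delta(1)=1$, the claim ``at most $2\le\delta(s)$ generators'' is wrong. For $s=0$ this is harmless, because there is only one coordinate. For $s=1$, however, $G$ has two coordinates $G(0),G(1)$, and counting generators alone gives only transcendence degree at most $2$. There are two easy repairs. You can invoke parity: a nonzero $G\in\MG_1$ has exactly one nonzero coordinate. Alternatively, start your general argument at $s\ge1$, as the paper does; for $s=1$ the skew matrix has no entries, and the coordinate field is just $\Q(c)$.
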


\begin{proof}
For $s=0$, the signature consists of one scalar, so the claim is immediate.
The zero signature is also immediate.  Suppose therefore that $s\ge1$ and
$G\ne0$.  Choose a pivot $X\in\bits^s$ with $c:=G(X)\ne0$.

Matchgate parity implies that $G(Y)=0$ whenever $\dist(X,Y)$ is odd.  The
standard pivoted Pfaffian chart for the matchgate identities
\cite{CaiGorenstein2014} gives fixed signs $\eta_{ij}\in\{+1,-1\}$ such that
\begin{equation}
  a_{ij}
  =\eta_{ij}\frac{G(X\oplus\one_{\{i,j\}})}{c}
  \qquad(1\le i<j\le s)
  \label{eq:pivot-skew-entries}
\end{equation}
are the upper-triangular entries of a skew-symmetric matrix $A_X$.  For
every even subset $T\subseteq[s]$, there is another fixed sign
$\eta_T\in\{+1,-1\}$ for which
\begin{equation}
  G(X\oplus\one_T)
  =\eta_T c\,\Pf(A_X[T]).
  \label{eq:pivot-pfaffian-chart}
\end{equation}
Here $\one_T$ is the indicator bit string of $T$, and $\oplus$ is bitwise
exclusive-or.  The signs depend only on the pivot $X$, the indicated index
set, and the ordered-coordinate convention, not on the coordinate values of
$G$.

Equations~\eqref{eq:pivot-skew-entries} and
\eqref{eq:pivot-pfaffian-chart} show that every coordinate of $G$ lies in
\[
  \Q\bigl(
    c,\{G(X\oplus\one_{\{i,j\}}):1\le i<j\le s\}
  \bigr).
\]
This field has at most $1+\binom{s}{2}$ generators over $\Q$, which proves
\eqref{eq:matchgate-transcendence-bound}.  The argument uses a pivot of the
particular nonzero signature and treats both parity components; the zero
signature was handled separately.
\end{proof}

The next proposition packages the information bound in the form needed for
labelled closed-star instances.  We call $L$ the \emph{sampler alphabet}, the
family $(g_\ell)_{\ell\in L}$ the \emph{samplers}, and $T$ the corresponding
\emph{sampled star table}.  When $L=\bits$, $(g_0,g_1)$ is the
\emph{sampler pair}, and the displayed contraction is a \emph{star-table
representation}.

\begin{proposition}[Sampled star-table bound]
\label[proposition]{prop:star-table-bound}
Let $L$ be a finite set of size $q\ge1$, and let $k\ge1$ and $r\ge0$.
Suppose
\[
  g_\ell\in\MG_r\quad(\ell\in L),
  \qquad
  Q\in\MG_{kr}.
\]
Define a table $T\colon L^k\to\C$ by
\begin{equation}
  T(\ell_1,\ldots,\ell_k)
  =\Contr(Q;g_{\ell_1},\ldots,g_{\ell_k}),
  \label{eq:sampled-star-table}
\end{equation}
where the $j$th $r$-bit tensor factor is contracted with the $j$th
consecutive block of $Q$.  Then
\begin{equation}
  \trdeg_{\Q}\Q(T(\boldsymbol\ell):\boldsymbol\ell\in L^k)
  \le q\delta(r)+\delta(kr).
  \label{eq:sampled-star-bound}
\end{equation}
\end{proposition}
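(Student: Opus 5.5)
The plan is to bound the coordinate field of $T$ by the compositum of the coordinate fields of the ingredients, and then apply \cref{lem:matchgate-transcendence} to each ingredient. Let $K_\ell:=\Q(g_\ell(x):x\in\bits^r)$ for $\ell\in L$, and let $K_Q:=\Q(Q(z):z\in\bits^{kr})$. Let $K$ be the compositum of $K_Q$ and all the $K_\ell$. By definition~\eqref{eq:multilinear-contraction}, every entry
\[
  T(\ell_1,\ldots,\ell_k)=\sum_{x_1,\ldots,x_k\in\bits^r}Q(x_1,\ldots,x_k)\prod_{j=1}^k g_{\ell_j}(x_j)
\]
is a finite integer polynomial in coordinates of $Q$ and of the samplers $g_{\ell_1},\ldots,g_{\ell_k}$. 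Each of these samplers is one of the $q$ tensors indexed by $L$. Hence $\Q(T(\boldsymbol\ell):\boldsymbol\ell\in L^k)\subseteq K$. The case $r=0$ is covered by the same formula, since then the contraction is $Q\prod_j g_{\ell_j}$ with all factors scalars.

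Next, transcendence degree is monotone under field inclusion and subadditive under finite composita, as recorded in \cref{sec:algebraic-preliminaries}. This gives
\[
  \trdeg_\Q\Q(T)\le\trdeg_\Q K\le\trdeg_\Q K_Q+\sum_{\ell\in L}\trdeg_\Q K_\ell.
\]
Now apply \cref{lem:matchgate-transcendence}. Since $Q\in\MG_{kr}$, we get $\trdeg_\Q K_Q\le\delta(kr)$. Since each $g_\ell\in\MG_r$, we get $\trdeg_\Q K_\ell\le\delta(r)$. There are exactly $q$ samplers, so summing yields the bound~\eqref{eq:sampled-star-bound}.

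There is no real obstacle. The only points to check are the following:
\begin{enumerate}
  \item The bound counts each sampler once, even though a label $\ell$ is reused at many arms. This is why the bound is $q\delta(r)$ rather than $k\delta(r)$ per table entry; it is automatic because the field $K_\ell$ does not depend on which arm $g_\ell$ occupies.
  \item The fixed reversal conventions only permute coordinates, as noted in \cref{sec:base-presentation}, so they do not change any coordinate field.
  \item Monotonicity of $\trdeg$ under inclusion holds because an algebraically independent subset of a subfield remains algebraically independent in the larger field.
\end{enumerate}
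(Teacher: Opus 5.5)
Your proposal is correct and follows the paper's proof exactly. Like the paper, you place the table's coordinate field inside the compositum generated by the coordinates of $Q$ and the $q$ samplers, then combine subadditivity of transcendence degree with \cref{lem:matchgate-transcendence} to obtain $q\delta(r)+\delta(kr)$.
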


\begin{proof}
Let $K$ be the field generated over $\Q$ by all coordinates of
$\{g_\ell:\ell\in L\}$ and $Q$.  Every coordinate in
\eqref{eq:sampled-star-table} is a polynomial with integer coefficients in
these generators, and hence
\[
  \Q(T(\boldsymbol\ell):\boldsymbol\ell\in L^k)\subseteq K.
\]
Transcendence degree is subadditive under field composita.  Applying
\cref{lem:matchgate-transcendence} to the $q$ unary-block signatures and to
$Q$ gives
\[
  \trdeg_{\Q}K
  \le\sum_{i=1}^q \delta(r)+\delta(kr)
  =q\delta(r)+\delta(kr),
\]
as required.
\end{proof}

We now promote the same parameter count to an algebraic-geometric
statement.  This is the step that permits an integer obstruction even
though integer coordinates have transcendence degree zero.

For $s\ge1$ and $\nu\in\bits$, let
$\mathfrak M_s^\nu\subseteq\mathbb A_{\Q}^{2^s}$ be the affine
algebraic set obtained by imposing the parity-$\nu$ coordinate
equations $Y_x=0$ whenever $|x|\not\equiv\nu\pmod 2$, together with the
matchgate identities in the fixed external order.  All these equations
have integer coefficients.  Set
\[
  \mathfrak M_s:=\mathfrak M_s^0\cup\mathfrak M_s^1.
\]
This is a $\Q$-defined closed algebraic set; its defining ideal may be
taken to be the intersection of the two parity-component ideals.
Necessity and sufficiency of the matchgate identities, together with
parity, imply
\begin{equation}
  \mathfrak M_s(\C)=\MG_s,
  \label{eq:matchgate-variety-points}
\end{equation}
where the zero tensor belongs to both parity
components~\cite{CaiGorenstein2014}.  Separately, set
$\mathfrak M_0:=\mathbb A^1_{\Q}$, so that
$\mathfrak M_0(\C)=\MG_0=\C$.

\begin{lemma}[Dimension of the matchgate variety]
\label[lemma]{lem:matchgate-variety-dimension}
For every $s\ge0$,
\begin{equation}
  \dim \mathfrak M_s\le \delta(s)=1+\binom{s}{2}.
  \label{eq:matchgate-variety-dimension}
\end{equation}
\end{lemma}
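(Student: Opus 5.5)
We will prove \eqref{eq:matchgate-variety-dimension} by covering $\mathfrak M_s(\C)$ with finitely many images of affine spaces of dimension $\delta(s)$ under morphisms, using the pivoted Pfaffian chart from the proof of \cref{lem:matchgate-transcendence}. The cases $s=0$ (where $\mathfrak M_0=\mathbb A^1$ and $\delta(0)=1$) and $s=1$ are immediate. For $s=1$, the parity equations force $\mathfrak M_1^\nu$ to be a line, so its dimension is $1=\delta(1)$. Assume $s\ge2$. Since dimension is the maximum over irreducible components and is unchanged by base extension to $\C$, it suffices to bound the dimension of every irreducible component $Z$ of $\mathfrak M_s^\nu(\C)$ for each parity $\nu$.

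For each pivot $X\in\bits^s$ define the morphism
\[
  \phi_X:\mathbb A^1\times\mathbb A^{\binom s2}\to\mathbb A^{2^s},
  \qquad
  \phi_X(c,A)(X\oplus\one_T)=\eta_T\,c\,\Pf(A[T])
\]
for even $T$, with all other coordinates set to $0$. Here $A$ is the skew matrix whose upper-triangular entries are the coordinates of $\mathbb A^{\binom s2}$, and the $\eta_T$ are the fixed signs of \eqref{eq:pivot-pfaffian-chart}. This map is polynomial with integer coefficients. By \eqref{eq:pivot-skew-entries} and \eqref{eq:pivot-pfaffian-chart}, every nonzero $G\in\MG_s$ with $G(X)\ne0$ equals $\phi_X(G(X),A_X)$, where $A_X$ is built from the pivoted coordinates. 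The zero signature is $\phi_X(0,0)$. Consequently
\[
  \mathfrak M_s(\C)\subseteq\bigcup_{X\in\bits^s}\phi_X\bigl(\mathbb A^{\delta(s)}\bigr).
\]
Each closure $\overline{\phi_X(\mathbb A^{\delta(s)})}^{\mathrm{Zar}}$ is irreducible and, by the stated fact $\dim\overline{\phi(X)}^{\mathrm{Zar}}\le\dim X$, has dimension at most $\delta(s)$.

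An irreducible component $Z$ of $\mathfrak M_s(\C)$ is contained in the finite union of these closures, so by irreducibility it lies in one of them. Hence $\dim Z\le\delta(s)$, and taking the maximum over components gives the lemma.

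The only delicate point is the sign bookkeeping: the signs $\eta_{ij}$ and $\eta_T$ must depend only on $X$ and the index sets, never on the values of $G$, so that $\phi_X$ is a single fixed morphism. \Cref{lem:matchgate-transcendence} already records this, citing \cite{CaiGorenstein2014}. We do not need the reverse inclusion $\phi_X(\mathbb A^{\delta(s)})\subseteq\mathfrak M_s(\C)$, because the argument requires only a covering. Even so, the inclusion holds: each $\phi_X(c,A)$ is $c$ times a flipped principal-Pfaffian signature $\Gamma_A$, which is exact by closure under flips and scalars.
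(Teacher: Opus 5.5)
Your proof is correct and follows essentially the paper's route: you cover the matchgate variety by the images of the pivoted Pfaffian charts from a $\delta(s)$-dimensional parameter space, and bound the dimension by that of the source. The differences are cosmetic. You parametrize the scalar by $\mathbb A^1$ instead of $\mathbb G_m$, so the zero tensor is covered directly. You also conclude via irreducible components, whereas the paper uses the principal open sets $U_X$ and then adjoins the zero point.
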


\begin{proof}
The assertion is immediate for $s=0$.  For $s\ge1$, dimension is unchanged
under the field extension $\Q\subseteq\C$, so we work after base change to
$\C$.

Fix one parity component and a bit string $X$ of that parity.  Let $U_X$
be the principal open subset on which the coordinate indexed by $X$ is
nonzero.  For fixed pivot-dependent signs
$\eta_{X,T}\in\{+1,-1\}$, define a regular map
\[
  \psi_X\colon
  \mathbb G_m\times\mathbb A^{\binom{s}{2}}
  \longrightarrow
  \mathbb A^{2^s}
\]
by
\begin{equation}
  \psi_X(c,A)(X\oplus\one_T)
  =
  \begin{cases}
    \eta_{X,T}\,c\,\Pf(A[T]),&|T|\text{ even},\\
    0,&|T|\text{ odd},
  \end{cases}
  \qquad \eta_{X,\varnothing}=1,
  \label{eq:variety-pivot-chart}
\end{equation}
where $A$ ranges over the skew-symmetric $s\times s$ matrices.  This map is
defined over $\Q$.  The pivoted Pfaffian chart
\eqref{eq:pivot-skew-entries}--\eqref{eq:pivot-pfaffian-chart} shows that
every point of $U_X$ lies in the image of $\psi_X$.  Hence
\[
  \dim U_X
  \le
  \dim\left(\mathbb G_m\times
             \mathbb A^{\binom{s}{2}}\right)
  =1+\binom{s}{2}.
\]
The finitely many $U_X$ cover all nonzero points of the chosen parity
component; adjoining the zero point does not increase dimension.  Taking
the maximum over the two parity components proves the claim.
\end{proof}

Fix $k\ge1$ and $r\ge0$.  On
\begin{equation}
  \mathfrak X_{k,r}
  :=\mathfrak M_r\times\mathfrak M_r\times\mathfrak M_{kr},
  \label{eq:star-parameter-space}
\end{equation}
define the $\Q$-morphism
\[
  \Phi_{k,r}\colon\mathfrak X_{k,r}
  \longrightarrow\mathbb A_{\Q}^{2^k}.
\]
Index the target coordinates by bit strings
$\boldsymbol z=(z_1,\ldots,z_k)\in\bits^k$ and set
\begin{equation}
  [\Phi_{k,r}(g_0,g_1,Q)](\boldsymbol z)
  :=\Contr(Q;g_{z_1},\ldots,g_{z_k})
  \qquad(\boldsymbol z\in\bits^k).
  \label{eq:star-table-morphism}
\end{equation}
Every output coordinate is a polynomial with integer coefficients in the
source coordinates.  This formula also covers $r=0$: the three matchgate
signatures are then scalars and every block has the unique empty word.
Identifying $\mathbb A_{\Q}^{2^k}(\C)$ with the table space
$\C^{\bits^k}$, let
\begin{equation}
  \mathcal S_{k,r}:=\Phi_{k,r}(\mathfrak X_{k,r}(\C))
  \subseteq\C^{\bits^k},
  \qquad
  \mathfrak Z_{k,r}:=\overline{\mathcal S_{k,r}}^{\mathrm{Zar}}
  \subseteq\C^{\bits^k}.
  \label{eq:star-table-variety}
\end{equation}

\begin{proposition}[Algebraic star-table bound]
\label[proposition]{prop:algebraic-star-table-bound}
The variety $\mathfrak Z_{k,r}$ is defined over $\Q$ and satisfies
\begin{equation}
  \dim\mathfrak Z_{k,r}
  \le D_{k,r}:=2\delta(r)+\delta(kr).
  \label{eq:algebraic-star-table-dimension}
\end{equation}
In particular, if $D_{k,r}<2^k$, then $\mathfrak Z_{k,r}$ is a proper
$\Q$-defined Zariski-closed subset of $\C^{\bits^k}$.
\end{proposition}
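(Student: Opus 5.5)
The plan is to get both assertions from the general facts recorded in \cref{sec:algebraic-preliminaries}, namely elimination and the dimension inequality for morphisms, together with \cref{lem:matchgate-variety-dimension}.

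\emph{Definability over $\Q$.} Each factor $\mathfrak M_r$, $\mathfrak M_{kr}$ is a $\Q$-defined closed affine algebraic set: for positive width it is cut out by the parity equations and the matchgate identities, and $\mathfrak M_0=\mathbb A^1_\Q$. Hence $\mathfrak X_{k,r}$ is cut out in the product affine space by an ideal $I$ generated by polynomials over $\Q$. Every coordinate of $\Phi_{k,r}$ is an integer polynomial, so I form the graph ideal $J=I+\langle Y_{\boldsymbol z}-\Phi_{\boldsymbol z}\rangle\subseteq\Q[\mathbf U,\mathbf Y]$. By the elimination theorem, $\mathfrak Z_{k,r}=V_\C(J\cap\Q[\mathbf Y])$. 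I would state explicitly that computing the elimination ideal by an elimination-order Gr\"obner basis over $\Q$ gives the same ideal after base change to $\C$, as recalled in the preliminaries. This shows that the complex Zariski closure of the image of the complex points is cut out by rational polynomials, and I expect this to be the only point needing care.

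\emph{Dimension.} The inequality $\dim\overline{\phi(X)}^{\mathrm{Zar}}\le\dim X$ applied to $\Phi_{k,r}$ (after base change to $\C$, which preserves dimension), together with additivity of dimension for products, gives
\[
\dim\mathfrak Z_{k,r}\le\dim\mathfrak X_{k,r}=2\dim\mathfrak M_r+\dim\mathfrak M_{kr}\le2\delta(r)+\delta(kr)=D_{k,r}
\]
by \cref{lem:matchgate-variety-dimension}. Additivity here applies to the maximum dimension over components of a possibly reducible product, since a product of irreducible components over $\C$ is irreducible with dimension equal to the sum. Finally, if $D_{k,r}<2^k=\dim\mathbb A^{2^k}$, then $\mathfrak Z_{k,r}$ cannot be the whole space, so it is a proper $\Q$-defined closed subset.
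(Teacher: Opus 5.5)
Your proposal is correct and follows essentially the same route as the paper: the dimension bound comes from \cref{lem:matchgate-variety-dimension} together with additivity of dimension for the product and the inequality $\dim\overline{\phi(X)}^{\mathrm{Zar}}\le\dim X$ for morphisms. Definability over $\Q$ comes from the graph ideal, its elimination ideal over $\Q$, and the fact that a Gr\"obner basis over $\Q$ remains one after base change to $\C$.
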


\begin{proof}
By \cref{lem:matchgate-variety-dimension},
\[
  \dim\mathfrak X_{k,r}
  \le2\delta(r)+\delta(kr).
\]
The dimension of the Zariski closure of the image of a morphism is at most
the dimension of its source, proving
\eqref{eq:algebraic-star-table-dimension}.

It remains to verify that $\mathfrak Z_{k,r}$ is defined over $\Q$.
Let $\mathbf U$ denote all source-coordinate variables and let
$\mathbf Y=(Y_{\boldsymbol z})_{\boldsymbol z\in\bits^k}$ be the target
coordinates.  Choose a
$\Q$-coefficient ideal
\[
  I_{k,r}\subseteq\Q[\mathbf U]
\]
whose complex zero set is $\mathfrak X_{k,r}(\C)$.  In
$\Q[\mathbf U,\mathbf Y]$, form the graph ideal
\[
  J_{k,r}
  :=I_{k,r}
  +\left\langle
    Y_{\boldsymbol z}-[\Phi_{k,r}(\mathbf U)](\boldsymbol z)
    :\boldsymbol z\in\bits^k
  \right\rangle
\]
and its elimination ideal
\[
  E_{k,r}:=J_{k,r}\cap\Q[\mathbf Y].
\]
After extending scalars from $\Q$ to $\C$, the elimination theorem gives
\[
  \mathfrak Z_{k,r}=V_{\C}(E_{k,r}).
\]
Equivalently, an elimination-order Gr\"obner basis computed over $\Q$
remains a Gr\"obner basis after extension to $\C$.  Thus
$\mathfrak Z_{k,r}$ is defined over $\Q$.  If $D_{k,r}<2^k$, its dimension
is smaller than that of the ambient affine space, so it is proper.
\end{proof}

\section{Two Obstruction Tables}
\label{sec:obstruction-tables}

The parameter budget of \cref{sec:matchgate-varieties} reduces the next step
to an avoidance problem: find a $2^k$-entry table outside every relevant
low-width star-table closure.  Using the tensor and algebraic-geometric
conventions of
\cref{sec:tensor-preliminaries,sec:algebraic-preliminaries}, we obtain two
nowhere-zero witnesses whose one-versus-rest flattenings all have rank two:
a closed-form transcendental table and a positive-integer table selected by
finite Zariski avoidance.  The integer witness drives the rational qutrit
construction, while the transcendental witness gives an explicit companion.
The next section addresses the complementary question of realizing these
hard tables exactly at a larger finite width.

\subsection{A closed-form transcendental table}
\label{sec:independent-table}

We next construct a table with the maximum possible coordinate
transcendence degree $2^k$.
The only transcendence-theoretic input is the following standard consequence
of the Lindemann--Weierstrass theorem.

\begin{lemma}
\label[lemma]{lem:independent-exponentials}
Let $p_1,\ldots,p_N$ be distinct primes.  Then
\[
  \exp(\sqrt{p_1}),\ldots,\exp(\sqrt{p_N})
\]
are algebraically independent over $\Q$.
\end{lemma}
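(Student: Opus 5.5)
The plan is to derive the lemma from the Lindemann--Weierstrass theorem in its algebraic-independence form: if $\beta_1,\ldots,\beta_N$ are algebraic numbers that are linearly independent over $\Q$, then $e^{\beta_1},\ldots,e^{\beta_N}$ are algebraically independent over $\Q$. Applied with $\beta_i=\sqrt{p_i}$, this gives exactly the claim. The only real content is therefore to check that $\sqrt{p_1},\ldots,\sqrt{p_N}$ are $\Q$-linearly independent for distinct primes.

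For that step I would use the multiquadratic-field fact recalled in \cref{sec:algebraic-preliminaries}. The field $\Q(\sqrt{p_1},\ldots,\sqrt{p_N})$ has degree $2^N$ over $\Q$, with basis the square-free products $\prod_{i\in S}\sqrt{p_i}$ for $S\subseteq[N]$. The singletons $S=\{i\}$ are among these basis elements, so they are linearly independent.

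To prove the degree statement, I would use the standard induction on $N$. Show that for every $d$ that is a product of a nonempty subset of $\{p_1,\ldots,p_N\}$, $\sqrt{d}\notin\Q(\sqrt{p_1},\ldots,\sqrt{p_{N-1}})$ unless $d$ already involves only those primes. This can be organised via Kummer theory: the square classes of $p_1,\ldots,p_N$ are independent in the square-class group $\Q^\times/(\Q^\times)^2$, because prime factorisation is unique. That independence yields the $2^N$ degree.

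Alternatively, there is an elementary route. Suppose $\sum c_i\sqrt{p_i}=0$ is a minimal relation. Apply the Galois automorphism negating $\sqrt{p_N}$ and fixing the other roots (it exists by the degree count at the previous step). Adding the two relations kills the $p_N$ term and gives a shorter relation, which contradicts minimality unless all the other coefficients vanish, and then $c_N=0$ as well.

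The main obstacle is only the bookkeeping in this induction: one must ensure that the sign-flipping automorphisms exist, which uses the degree claim for smaller $N$. Everything else is a direct citation of Lindemann--Weierstrass.
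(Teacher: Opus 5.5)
Your proposal is correct and follows essentially the paper's route. Both arguments combine Lindemann--Weierstrass with the $\Q$-linear independence of $\sqrt{p_1},\ldots,\sqrt{p_N}$, obtained from the independence of the prime square classes in $\Q^\times/(\Q^\times)^2$ and the resulting sign-changing automorphisms of the multiquadratic field. The only difference is cosmetic: the paper cites the form ``exponentials of distinct algebraic numbers are linearly independent over $\overline{\Q}$'' and expands a putative polynomial relation into exponentials with pairwise distinct exponents, whereas you quote the equivalent algebraic-independence form of the theorem directly.
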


\begin{proof}
The prime square classes are independent in
$\Q^{\times}/(\Q^{\times})^2$: unique factorization shows that a product
$\prod_i p_i^{\epsilon_i}$ with $\epsilon_i\in\{0,1\}$ is a square only
when every $\epsilon_i=0$.  Hence the multiquadratic field
$\Q(\sqrt{p_1},\ldots,\sqrt{p_N})$ has the automorphisms that independently
change the sign of any one square root; an automorphism here is a bijection
preserving addition, multiplication, and every rational number.  Applying
such an automorphism to a
putative rational linear relation and subtracting shows that
$\sqrt{p_1},\ldots,\sqrt{p_N}$ are linearly independent over $\Q$.

Suppose that a nonzero polynomial $P\in\Q[X_1,\ldots,X_N]$ vanished at the
displayed exponentials.  Expanding its monomials would give
\begin{equation}
  \sum_{\mathbf m\in\mathcal S}
    c_{\mathbf m}
    \exp\!\left(\sum_{i=1}^N m_i\sqrt{p_i}\right)=0,
  \label{eq:expanded-polynomial-relation}
\end{equation}
for a finite nonempty $\mathcal S\subseteq\mathbb Z_{\ge0}^N$ and nonzero
rational coefficients $c_{\mathbf m}$.  Linear independence of the square
roots makes the algebraic exponents in
\eqref{eq:expanded-polynomial-relation} pairwise distinct, including the
zero exponent if $P$ has a constant monomial.  The
Lindemann--Weierstrass theorem says that exponentials of distinct algebraic
numbers are linearly independent over $\overline{\Q}$
\cite{Popescu2024}.  This contradicts
\eqref{eq:expanded-polynomial-relation}.
\end{proof}

Fix $k\ge2$, and let $p_1,\ldots,p_{2^k}$ be the first $2^k$ primes.  For
$z=(z_1,\ldots,z_k)\in\bits^k$, define its one-based lexicographic index by
\begin{equation}
  \ind_k(z)
  :=1+\sum_{j=1}^k z_j2^{k-j}
  \in[2^k],
  \label{eq:lexicographic-index}
\end{equation}
and set
\begin{equation}
  f_k^{\mathrm{tr}}(z):=\exp\!\left(\sqrt{p_{\ind_k(z)}}\right).
  \label{eq:independent-table-definition}
\end{equation}
Every entry is positive and nonzero.  By
\cref{lem:independent-exponentials},
\begin{equation}
  \trdeg_{\Q}\Q(f_k^{\mathrm{tr}}(z):z\in\bits^k)=2^k.
  \label{eq:independent-table-transcendence}
\end{equation}

\begin{proposition}[Explicit transcendental obstruction]
\label[proposition]{prop:transcendental-obstruction-table}
Every one-versus-rest flattening of $f_k^{\mathrm{tr}}$ has rank two.
Moreover, if $r\ge0$, $g_0,g_1\in\MG_r$, and $Q\in\MG_{kr}$ satisfy
\[
  f_k^{\mathrm{tr}}(z)
  =\Contr(Q;g_{z_1},\ldots,g_{z_k})
  \qquad(z\in\bits^k),
\]
then
\begin{equation}
  2^k\le2\delta(r)+\delta(kr).
  \label{eq:transcendental-obstruction-bound}
\end{equation}
\end{proposition}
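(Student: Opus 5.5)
The proposition has two parts. The rank assertion is a direct flattening computation. The width bound follows from \cref{prop:star-table-bound} together with the transcendence count \eqref{eq:independent-table-transcendence}.

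\emph{Rank two.} Fix a port $j\in[k]$. The mode-$j$ flattening $\Mat_j(f_k^{\mathrm{tr}})$ is a $2\times 2^{k-1}$ matrix, so its rank is at most two. Its rows are the slices $z_j=0$ and $z_j=1$, and every entry is positive, so neither row is zero. Rank one would make the two rows proportional. Since $k\ge2$, there are two distinct completions $w\ne w'$ of the remaining coordinates. Proportionality would then give
\[
  f(0_j,w)f(1_j,w')=f(1_j,w)f(0_j,w'),
\]
where $0_j$ and $1_j$ mark the value at port $j$. This is a nonzero polynomial relation $X_aX_b-X_cX_d$ among four entries. The indices $a,b,c,d$ are pairwise distinct, because they correspond to four distinct words in $\bits^k$ and $\ind_k$ is a bijection. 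So the relation contradicts the algebraic independence in \cref{lem:independent-exponentials}. Hence every one-versus-rest flattening has rank exactly two.

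\emph{Width bound.} Suppose $f_k^{\mathrm{tr}}(z)=\Contr(Q;g_{z_1},\ldots,g_{z_k})$ for all $z\in\bits^k$. This is exactly a sampled star table \eqref{eq:sampled-star-table} with sampler alphabet $L=\bits$, so $q=2$. Then \cref{prop:star-table-bound} gives
\[
  \trdeg_{\Q}\Q\bigl(f_k^{\mathrm{tr}}(z):z\in\bits^k\bigr)\le 2\delta(r)+\delta(kr).
\]
The left-hand side equals $2^k$ by \eqref{eq:independent-table-transcendence}, which yields \eqref{eq:transcendental-obstruction-bound}.

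The case $r=0$ needs one check. All blocks are empty, the samplers and $Q$ are scalars in $\MG_0=\C$, and \cref{lem:matchgate-transcendence} still applies. The bound then reads $2^k\le3$, which is false for $k\ge2$. So no width-zero representation exists, consistent with the bound.

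The only mildly delicate step is confirming that the four indices in the rank argument are pairwise distinct, so that the binomial relation is genuinely nonzero. Everything else is a direct citation of earlier results.
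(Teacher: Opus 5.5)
Your proposal is correct and follows the paper's proof essentially verbatim: the rank-two claim uses a $2\times2$ flattening minor that is a binomial in four distinct, algebraically independent entries, and the width bound applies \cref{prop:star-table-bound} with sampler alphabet $L=\bits$ together with the transcendence degree $2^k$. Your explicit checks that the four indices are distinct and that the case $r=0$ behaves correctly only elaborate the same argument.
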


\begin{proof}
For each port, choose two distinct assignments to the other $k-1$
coordinates.  The corresponding $2$-by-$2$ determinant is the difference
of two distinct degree-two monomials in four different entries of
$f_k^{\mathrm{tr}}$.  It is nonzero because those entries are algebraically
independent by \eqref{eq:independent-table-transcendence}.  Thus every
mode flattening has rank two.

Apply \cref{prop:star-table-bound} with the sampler alphabet
$L=\{0,1\}$, then use
\eqref{eq:independent-table-transcendence}.
\end{proof}

\subsection{A positive-integer table}
\label{sec:integer-table}

We use the following elementary density fact.

\begin{lemma}[Density of the positive integer grid]
\label[lemma]{lem:positive-grid-density}
For every $N\ge1$, the set $\mathbb Z_{>0}^N$ is Zariski dense in
$\C^N$.
\end{lemma}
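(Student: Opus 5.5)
The plan is the standard one-variable argument for Zariski density of a grid, extended to $N$ variables by induction on $N$. The claim to establish is that every polynomial $p\in\C[Y_1,\ldots,Y_N]$ vanishing on $\mathbb Z_{>0}^N$ is the zero polynomial. Once this is shown, the Zariski closure of $\mathbb Z_{>0}^N$, which is the common zero set of all such polynomials, equals $V_{\C}(0)=\C^N$, as required.

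For $N=1$, a nonzero univariate polynomial of degree $d$ has at most $d$ complex roots. It therefore cannot vanish at all of the infinitely many positive integers, so such a $p$ must be zero.

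For the inductive step, write
\[
  p=\sum_{i=0}^{d}p_i(Y_1,\ldots,Y_{N-1})\,Y_N^{i},
\]
and fix an arbitrary $\mathbf a\in\mathbb Z_{>0}^{N-1}$. The univariate polynomial $p(\mathbf a,Y_N)$ vanishes at every positive integer, so by the base case all of its coefficients $p_i(\mathbf a)$ are zero. Since $\mathbf a$ was arbitrary, each $p_i$ vanishes on $\mathbb Z_{>0}^{N-1}$. The induction hypothesis then gives $p_i=0$ for every $i$, and hence $p=0$.

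No step here is a real obstacle. The only point that needs care is the identification, from \cref{sec:algebraic-preliminaries}, of the Zariski closure with the zero set of the full vanishing ideal. One might worry whether working with $\Q$-defined versus $\C$-defined closed sets changes anything. It does not: the argument above already rules out all nonzero complex polynomials, so in particular it rules out every polynomial with rational coefficients.
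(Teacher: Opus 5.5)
Your proposal is correct and follows essentially the same route as the paper: induction on $N$ with the univariate base case, expansion in the last variable, and specialization of the first $N-1$ variables at positive integers so that the coefficient polynomials vanish on $\mathbb Z_{>0}^{N-1}$. Your remark that the argument covers both $\Q$-defined and $\C$-defined closed sets is a harmless addition.
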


\begin{proof}
We induct on $N$.  A univariate polynomial vanishing at every positive
integer is zero.  For $N>1$, write a polynomial vanishing on
$\mathbb Z_{>0}^N$ as a polynomial in its last variable.  After fixing any
positive integer values of the first $N-1$ variables, the resulting
univariate polynomial has infinitely many roots, so every coefficient
vanishes on $\mathbb Z_{>0}^{N-1}$.  The induction hypothesis makes all
coefficient polynomials zero.
\end{proof}

\begin{proposition}[Positive-integer obstruction]
\label[proposition]{prop:integer-obstruction-table}
For every $k\ge2$, there exists a table
\begin{equation}
  f_k^{\mathrm{int}}\colon\bits^k\longrightarrow\mathbb Z_{>0}
  \label{eq:integer-table-definition}
\end{equation}
such that, for every $r\ge0$, $g_0,g_1\in\MG_r$, and
$Q\in\MG_{kr}$, the identities
\begin{equation}
  f_k^{\mathrm{int}}(z)
  =\Contr(Q;g_{z_1},\ldots,g_{z_k})
  \qquad(z\in\bits^k)
  \label{eq:integer-star-representation}
\end{equation}
imply
\begin{equation}
  2^k\le2\delta(r)+\delta(kr).
  \label{eq:integer-obstruction-bound}
\end{equation}
The conclusion allows the representing matchgates to have arbitrary
complex weights.  In addition, every one-versus-rest flattening of
$f_k^{\mathrm{int}}$ has rank two.
\end{proposition}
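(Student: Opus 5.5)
The table will be chosen by finite Zariski avoidance against the star-table varieties of \cref{prop:algebraic-star-table-bound}, intersected with the open condition that all flattenings have rank two. Fix $k\ge2$ and let $R_k:=\{r\ge0: D_{k,r}<2^k\}$, where $D_{k,r}=2\delta(r)+\delta(kr)$. Since $D_{k,r}$ grows quadratically in $r$, the set $R_k$ is finite: it is an initial segment $\{0,1,\ldots,r_0\}$, because $D_{k,r}$ is increasing in $r$. For each $r\in R_k$, \cref{prop:algebraic-star-table-bound} says that $\mathfrak Z_{k,r}$ is a proper $\Q$-defined Zariski-closed subset of $\C^{\bits^k}$. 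Hence $Z:=\bigcup_{r\in R_k}\mathfrak Z_{k,r}$ is a finite union of proper closed sets. Because affine space is irreducible, $Z$ is itself a proper closed set.

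Next I will add the rank and positivity conditions as a nonempty open set. For each port $j\in[k]$, fix two distinct assignments of the remaining $k-1$ coordinates. The associated $2\times2$ minor of $\Mat_j$ is a nonzero polynomial $m_j$ in the table entries: it is the difference of two distinct monomials, exactly as in the proof of \cref{prop:transcendental-obstruction-table}. Let $N:=\prod_j m_j\cdot\prod_z Y_z$, which is a nonzero polynomial. Pick a nonzero polynomial $p$ vanishing on $Z$ (possible since $Z$ is proper). Then $pN$ is a nonzero polynomial on $\C^{2^k}$. By \cref{lem:positive-grid-density} it does not vanish at some point $f\in\mathbb Z_{>0}^{\bits^k}$. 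At that point $f\notin Z$, every entry is nonzero, and every flattening contains a nonzero $2\times2$ minor. Since a flattening has only two rows, its rank is then exactly two. Set $f_k^{\mathrm{int}}:=f$.

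Finally I will verify the implication. Suppose \eqref{eq:integer-star-representation} holds for some $r\ge0$ and some complex $g_0,g_1\in\MG_r$, $Q\in\MG_{kr}$. By \eqref{eq:matchgate-variety-points}, the triple $(g_0,g_1,Q)$ is a complex point of $\mathfrak X_{k,r}$. Its image under $\Phi_{k,r}$ is exactly $f$, so $f\in\mathcal S_{k,r}\subseteq\mathfrak Z_{k,r}$. If $D_{k,r}<2^k$ we would have $r\in R_k$ and hence $f\in Z$, a contradiction. Therefore $2^k\le D_{k,r}$, which is \eqref{eq:integer-obstruction-bound}. Arbitrary complex weights are allowed because the varieties are taken over $\C$-points, even though they are defined over $\Q$.

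The main obstacle is ensuring the avoidance is over finitely many sets, i.e.\ that $R_k$ is finite. This follows from monotonicity and unboundedness of $D_{k,r}$ in $r$. Without it one would face a countable union of closed sets, which the integer grid might not avoid.
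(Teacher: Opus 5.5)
Your proposal is correct and follows the paper's proof essentially step for step: you use the finiteness of $R_k$, a nonzero polynomial vanishing on each proper $\mathfrak Z_{k,r}$ (or on their union), multiplication by the symbolic $2\times2$ flattening minors, and a positive-integer point chosen by Zariski density of the grid, followed by the same membership argument $f\in\mathcal S_{k,r}\subseteq\mathfrak Z_{k,r}$. The extra factor $\prod_z Y_z$ is harmless but redundant, since grid points in $\mathbb Z_{>0}^{\bits^k}$ are already nowhere zero.
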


\begin{proof}
Let
\[
  R_k:=\{r\in\mathbb Z_{\ge0}:D_{k,r}<2^k\}.
\]
This set is finite because $D_{k,r}=2\delta(r)+\delta(kr)$ tends to infinity with
$r$.  Moreover, $D_{k,0}=3<2^k$ for $k\ge2$, so $0\in R_k$.  For each
$r\in R_k$, \cref{prop:algebraic-star-table-bound} makes
$\mathfrak Z_{k,r}$ a proper $\Q$-defined closed subset of
$\C^{\bits^k}$.  Its defining ideal therefore contains a nonzero
polynomial
$P_{k,r}\in\Q[(Y_{\boldsymbol z})_{\boldsymbol z\in\bits^k}]$.  The finite product
\[
  P_k:=\prod_{r\in R_k}P_{k,r}
\]
is nonzero.  For every port $j$, choose two distinct assignments
$u_j,v_j\in\bits^{k-1}$ and let $\Delta_j$ be the corresponding symbolic
$2$-by-$2$ flattening minor.  Each $\Delta_j$ is a nonzero polynomial, so
\[
  P_k\prod_{j=1}^k\Delta_j
\]
is nonzero.  By \cref{lem:positive-grid-density}, choose
$a=(a_{\boldsymbol z})_{\boldsymbol z\in\bits^k}
\in\mathbb Z_{>0}^{2^k}$ on which this product does not vanish, and set
$f_k^{\mathrm{int}}(\boldsymbol z)=a_{\boldsymbol z}$.  The nonzero minor
$\Delta_j(a)$ gives rank at least two in mode $j$; the two-row shape gives
rank at most two.  Thus every mode rank is exactly two.

Suppose \eqref{eq:integer-star-representation} holds at width $r$.  Then
$f_k^{\mathrm{int}}$ lies in the exact image $\mathcal S_{k,r}$ and hence
in $\mathfrak Z_{k,r}$.  Because $\mathcal S_{k,r}$ was defined using all
complex points of the matchgate parameter space, this implication places
no algebraicity or field-of-definition restriction on the representing
matchgates.  If $D_{k,r}<2^k$, then $r\in R_k$, which would
force $P_{k,r}(f_k^{\mathrm{int}})=0$, contrary to the choice of $a$.
Therefore $D_{k,r}\ge2^k$, proving
\eqref{eq:integer-obstruction-bound}.
\end{proof}

\begin{remark}
The integer table is existential but effective in principle: for each of
the finitely many relevant widths one may compute elimination equations,
then enumerate positive integer tables until their product does not vanish.
The proof supplies no useful running-time, height, or bit-length bound.  By
contrast, $f_k^{\mathrm{tr}}$ has a simple closed formula but uses
transcendental constants.  Keeping both versions separates these two
tradeoffs cleanly.
\end{remark}

\section{Pfaffian Interpolation on Two Block Codewords}
\label{sec:pfaffian-interpolation}

The tables of \cref{sec:obstruction-tables} force every matchgate-star
representation to obey the width inequality derived in
\cref{sec:matchgate-varieties}, but a counterexample also requires an exact
presentation at some finite width.  Using the Pfaffian and ordered-block
conventions of
\cref{sec:matchgate-preliminaries,sec:base-presentation}, we solve this
realization problem in a field-preserving form: every nowhere-zero Boolean
table can be interpolated by one exact matchgate on the two repetition
codewords in each block.  This construction supplies the Boolean preimage
that will be packaged into a qutrit presentation in the next section.

\begin{theorem}[Block-Pfaffian interpolation]
\label{thm:block-pfaffian-interpolation}
Let $K\subseteq\C$ be a subfield, let $k\ge1$, let
$f\colon\bits^k\to K^{\times}$, and put $t=2^k$.
There is an exact matchgate signature $H_f\in\MG_{kt}$ with coordinates in
$K$, on $k$ consecutive $t$-bit blocks, such that, for
\[
  c_0:=0^t,
  \qquad
  c_1:=1^t,
\]
one has
\begin{equation}
  H_f(c_{z_1},\ldots,c_{z_k})=f(z_1,\ldots,z_k)
  \qquad\text{for every }z\in\bits^k.
  \label{eq:block-interpolation-target}
\end{equation}
More precisely, the construction produces a named skew-symmetric matrix
$A_f\in K^{kt\times kt}$ whose principal-Pfaffian signature is normalized
by
\begin{equation}
  \Gamma_{A_f}(c_{z_1},\ldots,c_{z_k})
  =\frac{f(z)}{f(0^k)},
  \qquad
  \Gamma_{A_f}(c_0,\ldots,c_0)=1,
  \label{eq:block-interpolation-normalized-statement}
\end{equation}
and $H_f=f(0^k)\Gamma_{A_f}$.
The blocks and all wires within them occur in the specified external linear
order.  Moreover, $H_f$ has an exact planar matchgate realization whose
edge weights all lie in $K$.
\end{theorem}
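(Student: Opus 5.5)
We will build $A_f$ explicitly and read off the interpolation condition from Pfaffians of block-structured principal submatrices. Index the $kt$ wires by pairs $(j,w)$ with block $j\in[k]$ and position $w\in[t]$, where $t=2^k$, and identify the positions $w$ with the subsets $S\subseteq[k]$ through a fixed bijection; in particular $t$ is even. On the word $(c_{z_1},\ldots,c_{z_k})$ the support is the union of the full blocks $j$ with $z_j=1$. So only the principal submatrices $A_f[\bigcup_{j\in Z}\text{block }j]$ for $Z\subseteq[k]$ matter, and each has even size $|Z|t$. The target therefore becomes $\Pf(A_f[\text{blocks }Z])=f(\one_Z)/f(0^k)=:F(Z)$, with $F(\varnothing)=1$. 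Every entry of $F$ lies in $K^\times$ because $f$ is nowhere zero.

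The first step is a multiplicative M\"obius decomposition. Define $\phi(S):=\prod_{T\subseteq S}F(T)^{(-1)^{|S\setminus T|}}\in K^\times$ for nonempty $S\subseteq[k]$. Then $F(Z)=\prod_{\varnothing\ne S\subseteq Z}\phi(S)$. The construction will give each $\phi(S)$ its own Pfaffian control. Give each block $j$ one reserved pair of wires $(j,2S'-1),(j,2S')$ for each nonempty $S'$; this needs $2(t-1)\le t$ positions, which fails, so instead we use a perfect-matching design. For each nonempty $S=\{j_1<\cdots<j_m\}$, choose a cycle or matching across the blocks of $S$ that uses one dedicated wire $w_S$ in each block $j\in S$. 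Each block has $t$ wires, and there are at most $2^{k-1}\le t$ subsets containing $j$, so the dedicated wires fit. The remaining wires of each block are paired internally with weight $1$, so they contribute only $\pm1$.

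The entries are chosen as follows. For $|S|$ even, put a perfect matching on the wires $\{(j,w_S):j\in S\}$ with one edge weighted $\phi(S)$ and the others weighted $1$. For $|S|$ odd, pair one of the wires $(j_1,w_S)$ with an internal spare wire of block $j_1$ weighted so that the product is $\phi(S)$. The spare wires are used so that every restriction to blocks $Z$ keeps even parity: each gadget for $S$ activates only when $S\subseteq Z$, and otherwise its wires are left unmatched. An unmatched wire would kill the Pfaffian, and this is the crux. To handle it, we take each dedicated wire's alternative partner to be a per-block ``default'' pairing, so that the wires of $S$ pair among themselves when $S\subseteq Z$ and fall back to default pairs within each block otherwise.

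The main obstacle is making this concrete so that $\Pf(A_f[\text{blocks }Z])$ factors exactly as $\prod_{S\subseteq Z}\phi(S)$ with no cross terms and correct signs. My first attempt is a different layout: make $A_f$ a direct sum, over nonempty $S$, of small skew blocks. For each $S$, the block for $S$ sits on one wire per $j\in S$, plus one wire per $j\in S$ acting as a within-block partner. The corresponding $2|S|\times2|S|$ skew matrix $C_S$ is chosen so that every principal Pfaffian on a union of ``$j$-pairs'' $\{(j,w_S),(j,w'_S)\}$ with $j\in S$ equals $1$ unless all $j\in S$ are present, in which case it equals $\phi(S)$. This requires $2^{k}$-ish wires per block, which matches $t=2^k$ (here $2\cdot2^{k-1}=t$). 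A workable choice is $C_S$ with within-pair entries $1$ and a single rank-correcting cross term. Its principal Pfaffian on the pairs indexed by $Z\cap S$ is $1+(\phi(S)-1)\mathbf 1[S\subseteq Z]$, achieved for instance by a polynomial identity in a cyclic cross-pairing. I would verify this identity by induction on $|S|$ and treat $|S|=1$ separately with within-pair entry $\phi(S)$.

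Finally, the Pfaffian of the principal submatrix on blocks $Z$ equals $\pm\prod_S\Pf(C_S[Z\cap S])$ by the direct-sum rule. Every chunk permutation involved moves even-sized pairs, so the sign is $+1$, which gives $\prod_{S\subseteq Z}\phi(S)=F(Z)$. Setting $H_f=f(0^k)\Gamma_{A_f}$ then yields \eqref{eq:block-interpolation-target} and \eqref{eq:block-interpolation-normalized-statement}. The signature $\Gamma_{A_f}$ is exact by the principal-Pfaffian fact. Its coordinates are Pfaffians of $K$-matrices and so lie in $K$, and \cref{lem:field-preserving-realization} gives an exact planar realization with edge weights in $K$ in the prescribed order.
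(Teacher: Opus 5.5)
\paragraph{Overall comparison.} Your architecture is the paper's. Your closed-form M\"obius multipliers $\phi(S)$ are equivalent to the paper's recursion for $y_S$. You use a direct sum over nonempty $S$ of components, each occupying one adjacent pair of modes in every block $i\in S$; this gives exactly $2\cdot2^{k-1}=t$ modes per block, so the spare wires, default pairings and odd/even-$|S|$ cases of your second paragraph are unnecessary and should be dropped. The remaining ingredients also match: entry $\phi(S)$ on the pair when $|S|=1$, the even-chunk sign argument, and \cref{lem:field-preserving-realization} for realization over $K$.

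\paragraph{The gap.} The gap is the one step that carries the content of the theorem: the component $C_S$ for $|S|\ge2$. The only concrete choice you describe, within-pair entries $1$ plus a single cross entry, cannot work as stated. If $uv$ is a cross edge between two different pairs, a perfect matching using it leaves the partner of $u$ with no other available neighbour. So every principal Pfaffian of such a $C_S$ equals $1$, and $\phi(S)$ never appears. The phrase ``a polynomial identity in a cyclic cross-pairing, verified by induction on $|S|$'' points in the right direction. However, it names neither the entries nor the signs, so the key identity $\Pf=1+(\phi(S)-1)\mathbf 1[S\subseteq Z]$ is asserted rather than proved.

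\paragraph{The missing gadget.} What you need is an alternating cycle. For $S=\{i_1<\cdots<i_s\}$ with pair $(a_q,b_q)$ in block $i_q$, set
\begin{itemize}
  \item $C_S[a_q,b_q]=1$ for every $q$;
  \item $C_S[b_q,a_{q+1}]=1$ for $q<s$;
  \item the closing entry $C_S[a_1,b_s]=\phi(S)-1$.
\end{itemize}
The nonzero pattern is then a $2s$-cycle. Blocks are selected whole, so a selection deletes entire pairs.
\begin{itemize}
  \item If some pair is deleted, the surviving vertices form disjoint paths, each running from some $a$ to some $b$. These are even paths whose unique perfect matching is the baseline one. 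The component Pfaffian is therefore $1$, and it is also $1$ for the empty selection.
  \item If all pairs are present, there are exactly two perfect matchings. The baseline matching has sign $+1$. The alternate matching occupies local positions $(1,2s),(2,3),\ldots,(2s-2,2s-1)$ and is noncrossing; its associated sequence has $2s-2$ inversions, so its sign is also $+1$.
\end{itemize}
The component value is therefore $1+(\phi(S)-1)=\phi(S)$, including the case $\phi(S)=1$. No induction is needed. With this gadget in place, the rest of your argument goes through as you wrote it.
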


The construction has one independent Pfaffian component for each nonempty
$S\subseteq[k]$.  Its principal Pfaffian will equal a multiplier $y_S$ when
every block in $S$ is selected and will equal $1$ otherwise.  Disjointness
then multiplies these contributions and implements a multiplicative
M\"obius expansion of $f$.

\begin{proof}
We first decompose the table multiplicatively.  For each nonempty subset
$S\subseteq[k]$, define $y_S\in K^{\times}$ recursively under inclusion by
\begin{equation}
  y_S
  :=
  \frac{f(\one_S)/f(0^k)}
       {\displaystyle\prod_{\varnothing\ne U\subsetneq S}y_U},
  \label{eq:multiplicative-mobius}
\end{equation}
where $\one_S\in\bits^k$ is the indicator of $S$.  The denominator is
nonzero.  The recursion determines the multipliers in increasing order of
$|S|$.  Induction on $|T|$ gives
\begin{equation}
  \prod_{\varnothing\ne S\subseteq T}y_S
  =\frac{f(\one_T)}{f(0^k)}
  \qquad(T\subseteq[k]),
  \label{eq:mobius-product-identity}
\end{equation}
with the empty product interpreted as $1$.

We now define a skew-symmetric matrix whose principal Pfaffians implement
the left side of~\eqref{eq:mobius-product-identity}.  In logical block
$i\in[k]$, create an adjacent ordered pair of modes
\[
  (a_{S,i},b_{S,i})
\]
for every subset $S\subseteq[k]$ containing $i$.  A fixed block contains
$2\cdot2^{k-1}=2^k=t$ modes.  Order the blocks by increasing $i$; within
each block, order the pairs lexicographically by $S$, with $a_{S,i}$
immediately before $b_{S,i}$; explicitly, subsets are compared by their
indicator words $\one_S$ in lexicographic order.

Let $A_f$ be the skew-symmetric matrix indexed by these $kt$ ordered modes.
All its entries lie in $K$.  All unspecified upper-triangular entries are
zero, and different subsets
$S$ have no entries between their modes.  For a singleton $S=\{i\}$, set
\begin{equation}
  A_f[a_{S,i},b_{S,i}]=y_S.
  \label{eq:singleton-component}
\end{equation}
For $S=\{i_1<\cdots<i_s\}$ with $s\ge2$, set the baseline entries
\begin{equation}
  A_f[a_{S,i_q},b_{S,i_q}]=1
  \qquad(1\le q\le s),
  \label{eq:baseline-entries}
\end{equation}
the consecutive alternate entries
\begin{equation}
  A_f[b_{S,i_q},a_{S,i_{q+1}}]=1
  \qquad(1\le q<s),
  \label{eq:alternate-path-entries}
\end{equation}
and the closing entry
\begin{equation}
  A_f[a_{S,i_1},b_{S,i_s}]=y_S-1.
  \label{eq:alternate-closing-entry}
\end{equation}
The entries in the reverse direction are fixed by skew-symmetry.

Fix one subset $S$ and suppose a set $T\subseteq[k]$ of logical blocks is
selected in full.  The selected modes in the $S$-component are precisely
the pairs indexed by $S\cap T$.  If $|S|=1$, then
\eqref{eq:singleton-component} immediately gives contribution $y_S$ when
$S\subseteq T$ and contribution $1$ otherwise.  Assume henceforth that
$|S|\ge2$.  The baseline and alternate positions in
\eqref{eq:baseline-entries}--\eqref{eq:alternate-closing-entry} form one
formal alternating cycle on the $2|S|$ modes.  The Pfaffian expansion may
be viewed on this cycle; if $y_S-1=0$, the alternate matching term is
simply zero.  If $S\nsubseteq T$,
deleting the unselected two-vertex pairs breaks this cycle into disjoint
even paths.  Each such path has its baseline edges as its unique nonzero
perfect-matching monomial, so the component Pfaffian is $1$; the same holds
when no pair is selected because $\Pf(A_f[\varnothing])=1$.

If $S\subseteq T$, all modes of the component remain.  The cycle has
exactly two formal perfect-matching terms, one possibly zero.  In the local order
\[
  a_{S,i_1},b_{S,i_1},\ldots,a_{S,i_s},b_{S,i_s},
\]
the baseline matching consists of adjacent pairs and has Pfaffian sign
$+1$.  The alternate matching consists of the pair in positions $(1,2s)$
and the adjacent pairs $(2,3),(4,5),\ldots,(2s-2,2s-1)$, so it is
noncrossing---no two paired intervals interleave in the displayed linear
order---and also has sign $+1$.  Its weight is $y_S-1$.  Thus the full
$S$-component contributes $1+(y_S-1)=y_S$.  This reasoning also allows
$y_S=1$, in which case the second term is zero; a zero-weight edge may be
omitted from the eventual planar matchgate.

Let $\Gamma_{A_f}$ be the principal-Pfaffian signature from
\eqref{eq:principal-pfaffian-tensor}.  Its coordinates are Pfaffian
polynomials with integer coefficients in the entries of $A_f$, so they lie
in $K$.  The principal-Pfaffian relations imply the matchgate identities;
by \cref{lem:field-preserving-realization}, $\Gamma_{A_f}$ has an exact
planar realization over $K$ in the stated order.  For a logical assignment
$z\in\bits^k$, let $T=\{i:z_i=1\}$ and evaluate $\Gamma_{A_f}$ on the block
word $(c_{z_1},\ldots,c_{z_k})$.  Regrouping the selected modes by $S$
permutes only adjacent two-coordinate chunks $(a_{S,i},b_{S,i})$.  Every
transposition of two chunks has sign $(-1)^{2\cdot2}=+1$, and regrouping
preserves increasing $i$-order within a fixed $S$.  Pfaffian factorization
over the disjoint components therefore introduces no global sign.  The
component analysis and~\eqref{eq:mobius-product-identity} give
\begin{equation}
  \Gamma_{A_f}(c_{z_1},\ldots,c_{z_k})
  =\prod_{\varnothing\ne S\subseteq T}y_S
  =\frac{f(z)}{f(0^k)}.
  \label{eq:pfaffian-interpolation-normalized}
\end{equation}

Finally, multiply the whole matchgate signature by the nonzero scalar
$f(0^k)\in K$, for example by taking a disjoint internal edge of that
weight.  The resulting exact matchgate signature $H_f$ is realized over
$K$ and satisfies
\eqref{eq:block-interpolation-target}.  The matrix order was block-major
throughout, so this construction also proves the asserted external linear
order.
\end{proof}

\begin{remark}
The construction specifies the target signature exactly through
\eqref{eq:principal-pfaffian-tensor},
\eqref{eq:multiplicative-mobius},
\eqref{eq:singleton-component}--\eqref{eq:alternate-closing-entry}, and the
final scalar multiplication by $f(0^k)$.
The field-preserving matchgate-identity sufficiency theorem supplies an
exact planar realization over the same coefficient field.  It is applied
to the Pfaffian tensor $\Gamma_{A_f}$; the parameter matrix $A_f$ itself is
not asserted to be planar.  No claim of polynomial-size encoding or
polynomial-time construction is intended: the theorem concerns exact
representability.
\end{remark}

\section{The Controlled Domain-Three Presentation}
\label{sec:domain-three-construction}

The interpolation theorem of \cref{sec:pfaffian-interpolation} realizes the
positive-integer obstruction from \cref{sec:obstruction-tables} on Boolean
blocks, but it does not yet produce a qutrit language with the structural
properties required by \cref{thm:main}.  Using the tensor-transform,
exact-matchgate, and common-base conventions of
\cref{sec:tensor-preliminaries,sec:matchgate-preliminaries,sec:base-presentation},
we add marker and control modes and a full-row-rank qutrit base, while
retaining the hard table as an exact restriction of the resulting right
tensor.  We then verify the local ranks, two-sided support-essentiality, and
connected two-center coupling.  This completes the construction side; the
next section asks whether an arbitrary equivalent presentation can compress
it.

Let $f=f_k^{\mathrm{int}}$ be the positive-integer obstruction table from
\cref{prop:integer-obstruction-table}, put
\[
  t_0:=2^k,
  \qquad T_k:=t_0+3,
\]
and use the normalized skew matrix $A_f$ named in
\cref{thm:block-pfaffian-interpolation}; normalized means precisely
\eqref{eq:block-interpolation-normalized-statement}.  Put
$c_0:=0^{t_0}$ and $c_1:=1^{t_0}$.  Thus, for $z\in\bits^k$,
$\Gamma_{A_f}(c_{z_1},\ldots,c_{z_k})=f(z)/f(0^k)$.

One qutrit block consists of $T_k$ Boolean wires in the order
\[
  (\text{marker};\ \text{$t_0$ selectors};\ \text{two controls}).
\]
Define the three block codewords
\begin{equation}
  b_0=(0,0^{t_0},00),\qquad
  b_1=(1,0^{t_0},00),\qquad
  b_2=(0,1^{t_0},11).
  \label{eq:controlled-codewords}
\end{equation}
Since $t_0$ is even, $b_0,b_2$ have even parity and $b_1$ has odd parity.
With $e_x$ denoting a Boolean coordinate vector, set
\begin{equation}
  M_k=
  \begin{pmatrix}
    e_{b_0}^{\trans}\\ e_{b_1}^{\trans}\\ e_{b_2}^{\trans}
  \end{pmatrix}
  \in\Q^{3\times2^{T_k}}.
  \label{eq:Mk-definition}
\end{equation}
The rows are distinct coordinate matchgates, so $M_k$ has row rank three.

The left family contains the pins $\delta_0,\delta_1,\delta_2$.  For
$a,b\in D_3$, let the qutrit matrix unit $E_{ab}$ be defined by
$E_{ab}(u,v)=1$ when $(u,v)=(a,b)$ and $0$ otherwise, and define the binary
signature
\[
  X_{01}:=E_{01}+E_{10}.
\]
Their transforms are exact matchgates: the pins become $e_{b_a}^{\trans}$,
and
\begin{equation}
  X_{01}M_k^{\otimes2}
  =e_{b_0,b_1}^{\trans}+e_{b_1,b_0}^{\trans},
  \label{eq:X01-transform}
\end{equation}
which is the odd two-bit disequality matchgate on the two marker modes,
tensored with fixed zero pins.

It remains to construct the right Boolean preimage.  It has $k+2$
consecutive blocks: two \emph{link blocks}, followed by $k$ \emph{hard
blocks}.  Fix one designated hard block $j\in[k]$, and set
\[
  \alpha=1,\qquad \beta=2,
  \qquad w(0)=\alpha,\quad w(2)=\beta.
\]
Construct a skew matrix $\widehat A_f$ on these $(k+2)T_k$ modes as
follows.
\begin{enumerate}
  \item On the selector modes of the hard blocks, use $A_f$.
  \item Give zero rows and columns to every hard marker and to every
  nonmarker mode of the two link blocks.
  \item On the control pair of every nondesignated hard block, put one
  isolated $2$-by-$2$ skew block of weight $1$: the entry from the first
  control to the second is $1$, its reverse is $-1$, and all other entries
  incident with that pair are zero.
  \item Let $p,q$ be the link markers and $a,b$ the controls of the
  designated hard block.  In the external order $p<q<a<b$, put
  \begin{equation}
    \widehat A_f[a,b]=1,\quad
    \widehat A_f[p,q]=\alpha,\quad
    \widehat A_f[p,a]=1,\quad
    \widehat A_f[q,b]=\alpha-\beta,
    \label{eq:control-block-entries}
  \end{equation}
  with every other entry involving these four modes equal to zero.
\end{enumerate}
All other entries between the listed components are zero.  Let
$\widehat H_k$ be $f(0^k)$ times the principal-Pfaffian signature of
$\widehat A_f$, and define
\begin{equation}
  \cF_k:=\{\delta_0,\delta_1,\delta_2,X_{01}\},
  \qquad \cH_k:=\{\widehat H_k\},
  \qquad B_k:=M_k^{\otimes(k+2)}\widehat H_k.
  \label{eq:controlled-language}
\end{equation}

\begin{proposition}[Exact restriction and rational presentation]
\label[proposition]{prop:valid-domain-three-presentation}
The triple $(\cF_k,\cH_k,M_k)$ is a valid width-$T_k$ exact matchgate
presentation over $\Q$, with a full-row-rank base.  Its right tensor is
integer-valued.  The only nonzero domain coordinates of $B_k$ have link
states in $\{0,1\}$ and hard states in $\{0,2\}$.  For $x\in\bits^k$,
writing $2x=(2x_1,\ldots,2x_k)$, one has
\begin{align}
  B_k(0,0,2x)&=f(x),
  &B_k(1,1,2x)&=w(2x_j)f(x),
  \label{eq:Bk-coordinate-table}\\
  B_k(0,1,2x)&=0,
  &B_k(1,0,2x)&=0.
  \notag
\end{align}
\end{proposition}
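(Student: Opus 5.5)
The plan is to verify the claims in three groups: validity over $\Q$, the support restriction, and the explicit coordinate table. For validity, the left transforms have already been identified above. The pins become the coordinate rows $e_{b_a}^{\trans}$, and $X_{01}M_k^{\otimes2}$ is the odd marker disequality tensored with zero pins, as in \eqref{eq:X01-transform}. Both are exact matchgates with rational weights. For the right side, $\widehat A_f$ is a skew matrix with entries in $\Q$: $A_f$ has entries in $K=\Q$ because $f$ is positive-integer valued, and the remaining entries are $0,\pm1,\alpha,\alpha-\beta$. Hence its principal-Pfaffian signature $\Gamma_{\widehat A_f}$ lies in $\MG_{(k+2)T_k}$ by \eqref{eq:principal-pfaffian-tensor}. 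By \cref{lem:field-preserving-realization} it is realized over $\Q$ in block-major order, and scaling by $f(0^k)$ via an internal edge keeps this. Full row rank was noted after \eqref{eq:Mk-definition}.

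Next I would compute $B_k(i_1,\ldots,i_{k+2})=\widehat H_k(b_{i_1},\ldots,b_{i_{k+2}})$. This equals $f(0^k)$ times the Pfaffian of the principal submatrix on the set $U$ of selected modes. The support claims come from isolated zero rows. A hard block in state $1$ selects its hard marker, whose row is zero, so the Pfaffian vanishes. A link block in state $2$ selects link selectors and controls, which are again zero rows. This proves that link states lie in $\{0,1\}$ and hard states in $\{0,2\}$.

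In the surviving case, $\widehat A_f[U]$ is a direct sum of three kinds of component. The first is the $A_f$ component on the hard selectors, whose selected set is $c_{x_1},\ldots,c_{x_k}$ and which contributes $f(x)/f(0^k)$ by \eqref{eq:block-interpolation-normalized-statement}. The second consists of the nondesignated control pairs, selected exactly when $x_i=1$, each contributing $1$. The third is the four-mode component $\{p,q,a,b\}$, in which $p$ (respectively $q$) is selected iff the first (respectively second) link state is $1$, and $a,b$ are selected iff $x_j=1$. Odd selections in the last component give $0$, which yields $B_k(0,1,2x)=B_k(1,0,2x)=0$. For the even selections I would compute from \eqref{eq:control-block-entries}:
\[
  \varnothing\mapsto1,\quad \{a,b\}\mapsto1,\quad \{p,q\}\mapsto\alpha=1,\quad
  \{p,q,a,b\}\mapsto \alpha\cdot1-1\cdot(\alpha-\beta)+0=\beta .
\]
This gives the factor $w(2x_j)$ in the $(1,1)$ row and the factor $1$ in the $(0,0)$ row. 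Multiplying by $f(0^k)$ then gives \eqref{eq:Bk-coordinate-table}, and hence integrality, since every other domain coordinate is zero.

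The step I expect to be the main obstacle is the sign bookkeeping in the Pfaffian factorization over components, since the modes $p,q$ lie in the link blocks far from $a,b$. I would argue that in the increasing order of $U$ the selected modes group into consecutive even chunks. These chunks are the pair $(p,q)$, which is adjacent in $U$ because state $1$ selects only the marker of a link block; each pair $(a,b)$ or nondesignated control pair; and the $A_f$ chunks $(a_{S,i},b_{S,i})$. Regrouping them by component therefore permutes only even chunks and introduces no sign, by the chunk-swap rule in \cref{sec:matchgate-preliminaries}. Consequently $\Pf(\widehat A_f[U])$ is exactly the product of the component Pfaffians computed above.
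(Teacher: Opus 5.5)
Your proposal is correct and follows essentially the same route as the paper. It establishes validity through the rational principal-Pfaffian signature and \cref{lem:field-preserving-realization}, the support restriction through isolated zero rows, and the coordinate table through the four-mode Pfaffian values $1,1,\alpha,\beta$ combined with the normalized block-Pfaffian interpolation; your explicit check that $p,q$ are adjacent in the selected set, so that regrouping permutes only two-element chunks, is a slightly more careful version of the paper's one-line remark that all selected pieces have even size.
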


\begin{proof}
The left transforms were verified in \eqref{eq:X01-transform}.  The scaled
principal-Pfaffian signature $\widehat H_k$ has rational coordinates and,
by \cref{lem:field-preserving-realization}, an exact planar realization
with rational weights in the prescribed order.  Hence the presentation is
valid over $\Q$.

A principal-Pfaffian coordinate indexed by a codeword selects exactly the
matrix rows and columns at its $1$-bits.  Hence a link state $2$ selects
isolated link selectors and controls, while a hard
state $1$ selects an isolated hard marker; either choice gives Pfaffian
zero.  The four-mode control block satisfies
\[
  \Pf(\widehat A_f[\varnothing])=1,\quad
  \Pf(\widehat A_f[\{a,b\}])=1,\quad
  \Pf(\widehat A_f[\{p,q\}])=\alpha,\quad
  \Pf(\widehat A_f[\{p,q,a,b\}])
  =\alpha-(\alpha-\beta)=\beta.
\]
Selecting exactly one link marker has odd parity and gives zero.  The hard
selector blocks contribute $f(x)/f(0^k)$ by
\eqref{eq:pfaffian-interpolation-normalized}; the global scale restores
$f(x)$.
In each surviving case, all added selected pieces have even size, so
regrouping introduces no
Pfaffian shuffle sign.  This proves \eqref{eq:Bk-coordinate-table} and the
coordinate-support restriction.  Since $f$ and $w$ are positive integers, $B_k$ is
integer-valued.
\end{proof}

\begin{proposition}[Deficiency and essentiality]
\label[proposition]{prop:controlled-ranks}
Every port of $X_{01}$ and $B_k$ has flattening rank exactly two, while
the pins have rank one.  The language is support-essential on both sides,
and $B_k$ alone is support-essential on the right.  Its link-port support
is $\Span(e_0,e_1)$ and its hard-port support is $\Span(e_0,e_2)$.
\end{proposition}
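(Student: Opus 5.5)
The plan is a direct computation from the coordinate description in \cref{prop:valid-domain-three-presentation}. The pins and $X_{01}$ come first. Each pin $\delta_a$ is unary, with support $\Span(e_a^*)$, so it has rank one. For $X_{01}=E_{01}+E_{10}$, the mode-$1$ flattening is the $3\times3$ matrix with entries $1$ at $(0,1)$ and $(1,0)$ and zeros elsewhere; it has rank two and image $\Span(e_0^*,e_1^*)$. The same holds at port $2$ by symmetry.

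For $B_k$ I will use \eqref{eq:Bk-coordinate-table} together with the support restriction (link states in $\{0,1\}$, hard states in $\{0,2\}$). At each port, the row indexed by the forbidden state vanishes, so the rank is at most two and the support lies in the claimed plane. To show the rank is at least two, I exhibit a nonzero $2\times2$ minor.

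\emph{First link port.} Take the columns $(0,2x)$ and $(1,2x)$ for a fixed $x$. The minor is
\[
\begin{pmatrix} f(x)&0\\0&w(2x_j)f(x)\end{pmatrix},
\]
which is nonzero because $f>0$ and $w\in\{1,2\}$. The second link port is symmetric.

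\emph{Hard port $i$.} Fix the link pair $(0,0)$. The slice is then $f$ itself with its $0/1$ arguments relabelled as $0/2$. Its mode-$i$ flattening has rank two by the rank-two property of $f_k^{\mathrm{int}}$ in \cref{prop:integer-obstruction-table}, and a rank-two submatrix of the flattening forces rank two overall.

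Hence every port of $B_k$ has rank exactly two. The link supports are $\Span(e_0,e_1)$ and the hard supports are $\Span(e_0,e_2)$: these are the images of rank-two matrices contained in those planes.

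\emph{Essentiality.} On the right,
\[
\Span(e_0,e_1)+\Span(e_0,e_2)=V,
\]
so $B_k$ alone is support-essential, and hence so is the right family. On the left, the three pins already span $V^*$.

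The only delicate step is the rank lower bound at the hard ports, which rests entirely on the mode-rank-two property of $f$ inherited from \cref{prop:integer-obstruction-table}. Care is needed in matching the variance conventions: right supports lie in $V$, left supports in $V^*$.
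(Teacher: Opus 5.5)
Your proof is correct and follows the paper's argument: the vanishing forbidden-state rows cap every rank at two, the diagonal minor at the link ports and a nonzero $2$-by-$2$ minor of $\Mat_i(f)$ inside the link slice $(0,0)$ at the hard ports give rank two, and the two planes span $V$ while the pins span $V^*$. The only variation is at the designated hard port $j$, where the paper compares the link slices $(0,0)$ and $(1,1)$ and uses $\alpha\ne\beta$ together with $f$ nowhere zero; your uniform use of the rank-two flattening of $f$ there is equally valid, since \cref{prop:integer-obstruction-table} gives rank two at every mode.
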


\begin{proof}
The assertion for $X_{01}$ is immediate.  At either link port of $B_k$,
the state-$0$ and state-$1$ rows have disjoint nonzero supports, while the
state-$2$ row vanishes.  Thus the link support is exactly
$\Span(e_0,e_1)$.

At every hard port, the state-$1$ row vanishes.  At a nondesignated port,
the chosen nonzero $2$-by-$2$ minor of the corresponding flattening
$\Mat_j(f)$ makes the state-$0$ and state-$2$
rows independent.  At the designated port, proportionality of those rows
by a scalar $\lambda\in\C$ would, after fixing the other hard coordinates
to some $x'\in\bits^{k-1}$, give the following.  Let $x^{(a)}\in\bits^k$
be the extension of $x'$ having value $a$ at coordinate $j$.  The link
slice $(0,0)$ gives
$f(x^{(1)})=\lambda f(x^{(0)})$, while the slice $(1,1)$ gives
$\beta f(x^{(1)})=\lambda\alpha f(x^{(0)})$.  Since $f$ is nowhere zero, this
would force $\alpha=\beta$, a contradiction.  Hence every hard support is
exactly $\Span(e_0,e_2)$.

These two planes span the right qutrit space.  On the left, the three pins
span the dual space.  This proves support-essentiality and every rank
claim.
\end{proof}

For $z\in\{0,2\}^k$, define the relabelled table
\begin{equation}
  \widetilde f(z):=f(z/2),
  \qquad z/2:=(z_1/2,\ldots,z_k/2)\in\bits^k.
  \label{eq:relabelled-hard-table}
\end{equation}

\begin{proposition}[A genuinely coupled two-center gadget]
\label[proposition]{prop:two-center-coupling}
Write the two link ports of each copy, in their prescribed local cyclic
order, as $\ell_1,\ell_2$, followed by the hard ports
$h_1,\ldots,h_k$.  Join $\ell_2^L$ to $\ell_1^R$ through one copy of
$X_{01}$, and join $\ell_1^L$ to $\ell_2^R$ through another copy.  Draw
the first length-two path above the two centres and the second below them.
Thus the link \emph{names} are cross-paired, but the two drawn paths do
not cross.  At each centre the interior corner of the resulting
four-cycle is the empty cyclic sector from $\ell_1$ to $\ell_2$; the
complementary sector contains $h_1,\ldots,h_k$ and lies in the outer face.
Both copies therefore retain their prescribed local port orders.  Marking
$h_1^L$ first, the gadget boundary order can be taken to be
\[
  h_1^L,\ldots,h_k^L,h_1^R,\ldots,h_k^R.
\]
On hard assignments
$z,y\in\{0,2\}^k$, its boundary tensor is
\begin{equation}
  K_f(z,y)=\widetilde f(z)\widetilde f(y)
             \bigl(w(z_j)+w(y_j)\bigr),
  \label{eq:two-center-value}
\end{equation}
Its Schmidt rank across $z\mid y$ is exactly two.
\end{proposition}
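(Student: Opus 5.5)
We split the argument into three parts: the planar bookkeeping, the evaluation of the boundary tensor, and the Schmidt-rank computation.

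\emph{Planarity.} The two $X_{01}$ paths together with the two centres form a four-cycle. At each centre the interior corner is the empty cyclic sector between $\ell_1$ and $\ell_2$, so the hard ports all point into the outer face. Reading the outer face from $h_1^L$ therefore yields the stated boundary order. Since each centre keeps its prescribed local port order, no reflected copy of $B_k$ is needed.

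\emph{Evaluation.} Contract the two disequality vertices. Each copy of $X_{01}$ forces its two incident link edges to carry complementary values in $\{0,1\}$, and a link value $2$ is killed by $X_{01}$. Let $(u_1,u_2)$ be the link states of the left copy. The cross-pairing gives the right copy the link states $(1-u_2,1-u_1)$. By \cref{prop:valid-domain-three-presentation}, $B_k$ vanishes unless its two link states agree. The surviving assignments are therefore $u_1=u_2=0$, which forces right links $(1,1)$, and $u_1=u_2=1$, which forces right links $(0,0)$. For hard assignments $z,y\in\{0,2\}^k$, table \eqref{eq:Bk-coordinate-table} then gives
\[
  K_f(z,y)=\widetilde f(z)\,w(y_j)\widetilde f(y)+w(z_j)\widetilde f(z)\,\widetilde f(y),
\]
which is \eqref{eq:two-center-value}. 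Hard values equal to $1$ are killed by $B_k$, so these are the only nonzero boundary entries.

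\emph{Schmidt rank.} Write $K_f=a\otimes b'+a'\otimes b$ with
\[
  a=b=\widetilde f,\qquad a'(z)=w(z_j)\widetilde f(z),\qquad b'(y)=w(y_j)\widetilde f(y).
\]
This expression shows that the rank is at most two. For the lower bound we need $\widetilde f$ and $w(\cdot_j)\widetilde f$ to be linearly independent as functions on $\{0,2\}^k$. Because $f$ is nowhere zero and $w$ takes both values $\alpha=1$ and $\beta=2$ at coordinate $j$, the ratio $w(z_j)$ is nonconstant, so the two functions are not proportional. Concretely, pick $z,z'$ differing only at coordinate $j$. The $2\times2$ minor of the flattening on rows and columns $\{z,z'\}$ equals
\[
  \widetilde f(z)^2\widetilde f(z')^2\bigl[2(\alpha+\beta)\cdot 2\cdot\mathrm{stuff}\bigr],
\]
more precisely $\widetilde f(z)^2\widetilde f(z')^2\bigl(4\alpha\beta-(\alpha+\beta)^2\bigr)=-\widetilde f(z)^2\widetilde f(z')^2(\alpha-\beta)^2\neq0$. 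Hence the Schmidt rank is exactly two.

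The main obstacle is the planar bookkeeping, namely checking that the cross-paired link names are drawable without crossings while each centre keeps its local order. The tensor computation is routine once the link-support restriction of \cref{prop:valid-domain-three-presentation} is in hand.
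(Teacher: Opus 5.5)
Your proposal is correct and follows the paper's proof essentially step for step: link agreement at each $B_k$ together with the two disequalities leaves exactly the two link assignments, which gives the sum of two outer products, and your $2\times2$ minor $-(\alpha-\beta)^2\widetilde f(z)^2\widetilde f(z')^2$ is the paper's scaled matrix $\bigl(\begin{smallmatrix}2\alpha&\alpha+\beta\\ \alpha+\beta&2\beta\end{smallmatrix}\bigr)$ before the diagonal factors $\widetilde f(z),\widetilde f(z')$ are removed. The only blemish is the garbled placeholder bracket just before ``more precisely'', which you should delete.
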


\begin{proof}
Let $a$ and $b$ be the common link states at the left and right centres,
respectively.  Cross-pairing the link names does not permute the arguments
at either centre.  Each $B_k$ forces its two link states to agree, and the
symmetry of $X_{01}$ means that both copies impose the same condition
$a\ne b$.  Exactly the two pairs $(a,b)=(0,1)$ and $(1,0)$ survive, giving
\eqref{eq:two-center-value}.  The full boundary matrix is the sum of two
outer products
\[
  [\widetilde f(z)w(z_j)]_z[\widetilde f(y)]_y^{\trans}
  +[\widetilde f(z)]_z[\widetilde f(y)w(y_j)]_y^{\trans},
\]
and every entry involving hard state $1$ vanishes.  Its rank is therefore
at most two.  Multiplication on the left and right by the invertible
diagonal matrices with entries $\widetilde f(z)^{-1}$ and
$\widetilde f(y)^{-1}$ preserves rank.  After these scalings and
restricting to the two values of $z_j,y_j$, the remaining matrix is
\[
  \begin{pmatrix}
    2\alpha&\alpha+\beta\\
    \alpha+\beta&2\beta
  \end{pmatrix},
\]
whose determinant is $-(\alpha-\beta)^2\ne0$.  The Schmidt rank is exactly
two, so the gadget cannot factor as $U(z)V(y)$ into one tensor for each
centre.
\end{proof}

For comparison, retain the closed-form table as a companion construction.
Let
\begin{equation}
  K_k^{\mathrm{tr}}
  :=\Q\bigl(f_k^{\mathrm{tr}}(z):z\in\bits^k\bigr).
  \label{eq:transcendental-companion-definition}
\end{equation}
Repeat the controlled construction
\eqref{eq:controlled-codewords}--\eqref{eq:controlled-language} with
$f_k^{\mathrm{tr}}$ in place of $f_k^{\mathrm{int}}$, using the same
base $M_k$, left family $\cF_k$, and control weights $\alpha=1,\beta=2$.
Denote the resulting Boolean preimage and right tensor by
$\widehat H_k^{\mathrm{tr}}$ and $B_k^{\mathrm{tr}}$.

\begin{proposition}[Closed-form companion presentation]
\label[proposition]{prop:transcendental-companion-presentation}
The triple
$(\cF_k,\{\widehat H_k^{\mathrm{tr}}\},M_k)$ is a valid common
width-$T_k$ presentation over $K_k^{\mathrm{tr}}$.  The coordinate
restriction \eqref{eq:Bk-coordinate-table}, every rank and essentiality
conclusion of \cref{prop:controlled-ranks}, and the two-center conclusion
of \cref{prop:two-center-coupling} all hold with
$f_k^{\mathrm{tr}},B_k^{\mathrm{tr}}$ in place of $f,B_k$.
\end{proposition}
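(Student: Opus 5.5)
The plan is to observe that every argument behind the integer version used only three features of $f$: it lies in a field $K$ with $f$ nowhere zero; each one-versus-rest flattening of $f$ has rank two, witnessed by a nonzero $2$-by-$2$ minor; and $f$ is positive (or at least nowhere zero) so that proportionality arguments go through. I would first record that $f_k^{\mathrm{tr}}$ has all three features. Its entries are positive reals in $K_k^{\mathrm{tr}}$, and every flattening has rank two by \cref{prop:transcendental-obstruction-table}. Then I would rerun the three proofs with $K=K_k^{\mathrm{tr}}$ in place of $\Q$.

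\emph{Validity.} Apply \cref{thm:block-pfaffian-interpolation} with $K=K_k^{\mathrm{tr}}$ and $f=f_k^{\mathrm{tr}}$. This gives $A_f$ with entries in $K$: the multipliers $y_S$ from \eqref{eq:multiplicative-mobius} are field operations on the table entries. The augmented matrix $\widehat A_f$ only adds the rational entries \eqref{eq:control-block-entries} and isolated unit blocks, so it also has entries in $K$. Its principal-Pfaffian signature, scaled by $f(0^k)\in K$, has coordinates in $K$. By \cref{lem:field-preserving-realization} it has a planar realization with weights in $K$. The left transforms and $M_k$ are rational and unchanged, so \eqref{eq:X01-transform} still applies.

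\emph{Coordinate restriction.} The proof of \cref{prop:valid-domain-three-presentation} is a Pfaffian bookkeeping argument that never uses integrality. It uses only the component factorization and \eqref{eq:pfaffian-interpolation-normalized}, so \eqref{eq:Bk-coordinate-table} holds verbatim. The one thing to drop is the integer-valuedness claim, which the companion statement does not assert.

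\emph{Ranks, essentiality, and coupling.} The proof of \cref{prop:controlled-ranks} needs a nonzero minor of $\Mat_j(f)$ at nondesignated ports, which the transcendental rank-two property supplies. At the designated port it needs $f$ nowhere zero and $\alpha\ne\beta$, both of which hold. The proof of \cref{prop:two-center-coupling} uses only invertibility of the diagonal scaling by $\widetilde f^{-1}$ and the determinant $-(\alpha-\beta)^2$, so it transfers directly.

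I expect no genuine obstacle here. The only point needing care is checking that no step silently used rationality, for instance in the realization lemma or in the integer-valued clause, and this is handled by the field-preserving form of \cref{lem:field-preserving-realization}.
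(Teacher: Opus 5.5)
Your proposal is correct and follows the same route as the paper: apply the block-Pfaffian interpolation over $K_k^{\mathrm{tr}}$, then rerun the three integer-case proofs. The paper likewise notes that those proofs use only nonvanishing of the table, selected nonzero $2$-by-$2$ flattening minors (supplied by \cref{prop:transcendental-obstruction-table}), and $\alpha\ne\beta$.
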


\begin{proof}
Apply \cref{thm:block-pfaffian-interpolation} over
$K_k^{\mathrm{tr}}$ and repeat the proofs of
\cref{prop:valid-domain-three-presentation,prop:controlled-ranks,prop:two-center-coupling}.
Those arguments use only nonvanishing and selected nonzero $2$-by-$2$ minors
of the one-versus-rest flattenings of the table, together with
$\alpha\ne\beta$.  The explicit table has those
properties by \cref{prop:transcendental-obstruction-table}.
\end{proof}

Recall the definition of $\mu_k$ from \eqref{eq:muk-definition}.
The presentation in \cref{prop:valid-domain-three-presentation} shows
\begin{equation}
  \mu_k\le T_k=2^k+3.
  \label{eq:muk-upper-bound}
\end{equation}
In particular, $\mu_k$ is a well-defined finite integer.

For the companion problem, put
\begin{equation}
  \mu_k^{\mathrm{tr}}
  :=\bw(\cF_k\mid\{B_k^{\mathrm{tr}}\}).
  \label{eq:transcendental-width-definition}
\end{equation}
By \cref{prop:transcendental-companion-presentation},
$\mu_k^{\mathrm{tr}}\le2^k+3$.

\section{The Lower Bound against Arbitrary Equivalent Presentations}
\label{sec:arbitrary-equivalence}

The preceding section supplies a valid width-$(2^k+3)$ presentation, and
therefore only an upper bound on its minimum exact common width.  A competing
exactly equivalent presentation may change its finite domain, common base,
coordinate tensors, and complex weights.  Using the star-contraction
conventions of \cref{sec:base-presentation} and the equivalence notion of
\cref{sec:equivalence-definition}, we exploit the preservation of closed
labelled planar instances: a fixed family of labelled stars recovers the
same obstruction table from every competing presentation.  The obstruction
of \cref{sec:obstruction-tables} then yields the quantitative lower bound and
\cref{cor:no-domain-bound}.

Fix any finite domain $E$ and any problem $\cP\mid\cR$ over $E$ that is
exactly equivalent to $\cF_k\mid\{B_k\}$ in the sense of
Definition~\ref{def:exact-labelled-equivalence}.  Suppose it has a valid common width-$r$
presentation $(\cP,\mathcal Q,N)$, so its induced right family is
$\cR=N\mathcal Q$ labelwise, with base
\begin{equation}
  N\in\C^{|E|\times2^r}.
  \label{eq:arbitrary-new-base}
\end{equation}
Let $P_i\in\cP$ be the unary left label corresponding to $\delta_i$.  Let
$B'_k\in\cR$ be the arity-$(k+2)$ right label corresponding to $B_k$.  By
validity of the new presentation, there is a Boolean block signature
$Q_B\in\MG_{(k+2)r}$ satisfying
\begin{equation}
  N^{\otimes(k+2)}Q_B=B'_k,
  \label{eq:new-right-preimage}
\end{equation}
and the transformed unary signatures
\begin{equation}
  g_i:=P_i N\in\MG_r
  \qquad(i\in D_3).
  \label{eq:new-unary-matchgates}
\end{equation}
Neither $N$ nor these domain signatures are assumed to have any particular
rank.  Contract the first two consecutive blocks of $Q_B$ with two copies
of $g_0$.  Exact matchgates are closed under ordered planar contraction, so
the result is a signature
\begin{equation}
  Q_0:=\Contr(Q_B;g_0,g_0,-,\ldots,-)\in\MG_{kr}.
  \label{eq:pinned-right-preimage}
\end{equation}
Here each dash denotes an uncontracted block, so $Q_0$ retains the final
$k$ blocks in their original order.

For each $x=(x_1,\ldots,x_k)\in\bits^k$, consider the closed planar star whose
centre is labelled $B_k$, whose two link leaves are labelled $\delta_0$,
and whose $j$th hard leaf is labelled $\delta_{2x_j}$.  Its original value
is $f_k^{\mathrm{int}}(x)$ by \eqref{eq:Bk-coordinate-table}.  Exact
labelled equivalence gives the same value after replacing these labels by
$B'_k,P_0,P_{2x_1},\ldots,P_{2x_k}$.  Equations
\eqref{eq:new-right-preimage}--\eqref{eq:pinned-right-preimage} and
associativity in the fixed block order give
\begin{align}
  f_k^{\mathrm{int}}(x)
  &=\left(P_0\otimes P_0\otimes
      \bigotimes_{j=1}^kP_{2x_j}\right)B'_k
  \notag\\
  &=\Contr(Q_B;g_0,g_0,g_{2x_1},\ldots,g_{2x_k})
  \notag\\
  &=\Contr(Q_0;g_{2x_1},\ldots,g_{2x_k}).
  \label{eq:star-extraction}
\end{align}

Set $\widetilde g_0:=g_0$ and $\widetilde g_1:=g_2$.  Then
\eqref{eq:star-extraction} is exactly a star-table representation with
sampler pair $(\widetilde g_0,\widetilde g_1)$.  Applying
\cref{prop:integer-obstruction-table} therefore gives
\begin{align}
  2^k
  &\le 2\delta(r)+\delta(kr)
  \notag\\
  &=2\left(1+\binom r2\right)
    +1+\binom{kr}{2}
  \notag\\
  &=3+r(r-1)+\frac{kr(kr-1)}{2}.
  \label{eq:exact-necessary-inequality}
\end{align}
This derivation uses no coordinate of the arbitrary domain $E$, no inverse
or rank hypothesis on $N$, and no relationship between $N$ and the
displayed base $M_k$.  It therefore applies to every presentation allowed
in Definition~\ref{def:base-width}, including $r=0$.

Finally, for every integer $r$ occurring in the minimization set of
Definition~\ref{def:base-width},
\eqref{eq:exact-necessary-inequality} gives the first inequality below; the
second inequality holds for every integer $r\ge0$:
\begin{equation}
  2^k-3
  \le r(r-1)+\frac{kr(kr-1)}2
  \le r^2\left(1+\frac{k^2}{2}\right).
  \label{eq:readable-inequality-derivation}
\end{equation}
Every such width $r$ must consequently satisfy
\begin{equation}
  r
  \ge
  \left\lceil
    \sqrt{\frac{2^k-3}{1+k^2/2}}
  \right\rceil.
  \label{eq:all-r-lower-bound}
\end{equation}
Taking the minimum over all equivalent presentations proves
\eqref{eq:main-exact-lower-bound} and
\eqref{eq:main-readable-lower-bound}.  Together with
\eqref{eq:muk-upper-bound} and
\cref{prop:valid-domain-three-presentation,prop:controlled-ranks,prop:two-center-coupling},
this proves parts~\textup{(i)}--\textup{(v)} of \cref{thm:main}.  The final
assertion that the presentation in part~\textup{(i)} lies in
alternative~\textup{(III)}
is established in the proof of \cref{thm:structural-trichotomy} in
\cref{sec:positive-boundary}.

\begin{corollary}[Explicit transcendental companion]
\label[corollary]{cor:explicit-transcendental-companion}
For the closed-form problem
$\cF_k\mid\{B_k^{\mathrm{tr}}\}$ of
\cref{prop:transcendental-companion-presentation}, every valid common
width-$r$ presentation whose induced problem is exactly equivalent to it,
even with arbitrary complex weights, satisfies
\begin{equation}
  2^k\le2\delta(r)+\delta(kr).
  \label{eq:transcendental-companion-exact-bound}
\end{equation}
Consequently,
\begin{equation}
  \mu_k^{\mathrm{tr}}
  \ge
  \left\lceil
    \sqrt{\frac{2^k-3}{1+k^2/2}}
  \right\rceil
  =\Omega\!\left(\frac{2^{k/2}}{k}\right).
  \label{eq:transcendental-companion-readable-bound}
\end{equation}
Thus the same unbounded-width result also has a simple closed-form witness,
at the cost of transcendental signature values.
\end{corollary}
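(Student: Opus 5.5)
The plan is to copy the lower-bound argument for $\cF_k\mid\{B_k\}$ line by line, using $f_k^{\mathrm{tr}}$ in place of $f_k^{\mathrm{int}}$. The only change is that \cref{prop:transcendental-obstruction-table} now supplies the obstruction, replacing \cref{prop:integer-obstruction-table}.

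Fix any finite domain $E$ and any problem $\cP\mid\cR$ that is exactly equivalent to $\cF_k\mid\{B_k^{\mathrm{tr}}\}$ and has a valid common width-$r$ presentation $(\cP,\mathcal Q,N)$. Let $P_i$ be the unary left label corresponding to $\delta_i$, let $g_i:=P_iN\in\MG_r$, and let $Q_B\in\MG_{(k+2)r}$ be the Boolean preimage of the label corresponding to $B_k^{\mathrm{tr}}$. Contracting the two link blocks with $g_0$ gives $Q_0\in\MG_{kr}$, exactly as in \eqref{eq:pinned-right-preimage}. By \cref{prop:transcendental-companion-presentation}, the coordinate restriction \eqref{eq:Bk-coordinate-table} holds for $B_k^{\mathrm{tr}}$, so $B_k^{\mathrm{tr}}(0,0,2x)=f_k^{\mathrm{tr}}(x)$. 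The closed planar stars with centre $B_k^{\mathrm{tr}}$, both link leaves labelled $\delta_0$, and hard leaves labelled $\delta_{2x_j}$ therefore have original value $f_k^{\mathrm{tr}}(x)$. Exact labelled equivalence and associativity then give
\[
  f_k^{\mathrm{tr}}(x)=\Contr(Q_0;g_{2x_1},\ldots,g_{2x_k})
  \qquad(x\in\bits^k),
\]
which is a star-table representation with sampler pair $(g_0,g_2)$.

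\Cref{prop:transcendental-obstruction-table} now yields \eqref{eq:transcendental-companion-exact-bound} directly. It is a transcendence-degree statement, so it holds for arbitrary complex weights; the argument compares the degree $2^k$ from \eqref{eq:independent-table-transcendence} with the bound in \cref{prop:star-table-bound}. No Zariski avoidance is needed here. As before, nothing uses the rank of $N$ or the structure of $E$, and the case $r=0$ is included.

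The readable bound follows from the same estimate as \eqref{eq:readable-inequality-derivation}:
\[
  2^k-3\le r(r-1)+\frac{kr(kr-1)}{2}\le r^2\Bigl(1+\frac{k^2}{2}\Bigr).
\]
Minimizing over all admissible $r$ gives \eqref{eq:transcendental-companion-readable-bound}. This minimum is finite because $\mu_k^{\mathrm{tr}}\le 2^k+3$.

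The only point that needs care is that the star extraction uses exactly the coordinates $B_k^{\mathrm{tr}}(0,0,2x)$. Those coordinates are guaranteed by \cref{prop:transcendental-companion-presentation}, whose proof relies only on $f_k^{\mathrm{tr}}$ being nowhere zero. That property holds because every entry $\exp(\sqrt{p})$ is positive.
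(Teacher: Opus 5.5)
Your proposal is correct and follows essentially the same route as the paper. You extract $f_k^{\mathrm{tr}}(x)=\Contr(Q_0;g_{2x_1},\ldots,g_{2x_k})$ from the same closed labelled stars, apply \cref{prop:transcendental-obstruction-table} with sampler pair $(g_0,g_2)$, and then repeat the estimate \eqref{eq:readable-inequality-derivation} and minimize over all equivalent presentations.
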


\begin{proof}
Fix a candidate presentation $(\cP,\mathcal Q,N)$ from the corollary.  Let
$P_i$ be the unary labels corresponding to $\delta_i$, let $Q_B$ be the
Boolean preimage of the right label corresponding to
$B_k^{\mathrm{tr}}$, and put
\[
  g_i:=P_iN\in\MG_r,
  \qquad
  Q_0:=\Contr(Q_B;g_0,g_0,-,\ldots,-)\in\MG_{kr}.
\]
The same closed labelled stars used in \eqref{eq:star-extraction}, now with
$B_k^{\mathrm{tr}}$, give
\[
  f_k^{\mathrm{tr}}(x)
  =\Contr(Q_0;g_{2x_1},\ldots,g_{2x_k})
  \qquad(x\in\bits^k).
\]
With $\widetilde g_0:=g_0$ and $\widetilde g_1:=g_2$, this is a star-table
representation with sampler pair
$(\widetilde g_0,\widetilde g_1)$.
The explicit obstruction in
\cref{prop:transcendental-obstruction-table} proves
\eqref{eq:transcendental-companion-exact-bound}.  Substituting the
definition of $\delta$ and repeating
\eqref{eq:readable-inequality-derivation} proves the displayed lower bound
after taking the minimum over all equivalent presentations.
\end{proof}

\begin{proof}[Proof of \cref{cor:no-domain-bound}]
If such a function $h$ existed, applying it to the presentations from
\cref{prop:valid-domain-three-presentation} would give
$\mu_k\le h(3)$ for every $k$.  This contradicts $\mu_k\to\infty$.
\end{proof}

\section{The Positive Boundary: Support Geometry and Collapse}
\label{sec:positive-boundary}

\Cref{sec:arbitrary-equivalence} completes the quantitative negative
argument by showing that the constructed qutrit language cannot be
compressed to uniformly bounded width under arbitrary exact
re-presentation.  We now turn to the positive/classification lane of
\cref{fig:conceptual-roadmap}: under the qutrit, full-row-rank,
bi-essential, and port-deficient hypotheses, which realized support
geometries force a constant-rank common cover, and what structural form
remains when they do not?  The proof proceeds in the three stages indicated
in the roadmap.  Its first stage is support monotonicity, which propagates
primitive port deficiency through the all-left gadget closure and carries
the resulting supports faithfully through the full-row-rank base.

We use the tensor, gadget, ordered-matchgate, and support vocabulary
formalized in
\cref{sec:tensor-preliminaries,sec:holant-preliminaries,sec:matchgate-matrix-preliminaries,sec:support-terminology}.
Thus $V=\C^3$, $M$ is a width-$t$ qutrit base of row rank three, and
$\mathcal S_1,\mathcal S_2$ are the realized supports of
\eqref{eq:support-closures}.  These are supports of actual connected
all-left gadgets, not of their linear span.  Cyclic input rerooting and the
fixed reversal on the complementary output wires are those of
\cref{sec:support-terminology}, so ordinary
matrix multiplication below is ordered planar composition and no internal
wire permutation is implicit.  The parity endpoints $P^+,P^-$, monomial
flags, and connection-primality are those of
\cref{sec:matchgate-matrix-preliminaries,def:monomial-flag}.

The first observation records why primitive deficiency persists at a
dangling boundary edge.

\begin{lemma}[Boundary-support monotonicity]
\label[lemma]{lem:boundary-monotonicity}
Let $\Gamma$ be a planar gadget, and let a dangling edge $e$ be incident
with port $i$ of a primitive left tensor $F$.  If $T_\Gamma$ is the
boundary tensor of $\Gamma$, then
\[
  S_e(\Gamma)=S_{j(e)}(T_\Gamma)\subseteq S_i(F).
\]
\end{lemma}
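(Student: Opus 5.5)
The plan is to write the boundary tensor as a contraction of $F$ with the rest of the network and read off the mode-$j(e)$ flattening as a factorization through $\Mat_i(F)$.

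First, separate the vertex $u$ carrying $F$ from the remainder of $\Gamma$. Let $\Gamma'$ be the tensor network obtained by deleting $u$. The edges formerly incident with the ports of $u$ other than $i$ become additional dangling edges of $\Gamma'$. Contract all internal edges of $\Gamma'$ to get a tensor $R$. Its arguments are the boundary edges of $\Gamma$ other than $e$, together with the $\arity(F)-1$ edges joining $u$ to the rest.

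Associativity of tensor contraction (the same fact behind \eqref{eq:planar-holant}) then gives, for the boundary value $v$ at $e$ and the other boundary values $\mathbf w$,
\[
  T_\Gamma(\ldots,v,\ldots)
  =\sum_{\mathbf y}F(y_1,\ldots,y_{i-1},v,y_{i},\ldots)\,R(\mathbf y,\mathbf w).
\]
Here $\mathbf y$ ranges over assignments to the other ports of $F$. The case where several ports of $u$ are joined to each other by a loop or a multi-edge is handled the same way: $R$ then contains an identity/equality factor on those ports. Ports of $u$ that are themselves dangling edges of $\Gamma$ are simply included among the $\mathbf w$ coordinates, again by an identity factor in $R$.

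In matrix form this reads $\Mat_{j(e)}(T_\Gamma)=\Mat_i(F)\,C$, where $C$ is the matrix with rows indexed by $\mathbf y$ and columns indexed by $\mathbf w$, built from $R$ and the identity factors. Ordering the remaining boundary ports only permutes columns, which does not change the image. Hence
\[
  \operatorname{im}\Mat_{j(e)}(T_\Gamma)\subseteq\operatorname{im}\Mat_i(F),
\]
which is exactly $S_e(\Gamma)\subseteq S_i(F)$.

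Two degenerate cases need a line each. If $F$ is unary, $\mathbf y$ is empty and $\Mat_i(F)$ is the single column $F$. If $\Gamma$ is disconnected, $R$ is still a tensor, so the argument goes through unchanged. Neither causes trouble. The variance convention places left supports in $V_D^*$ on both sides of the inclusion, so they are compatible.

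The only real care point is bookkeeping. One must check that the column index of $\Mat_{j(e)}(T_\Gamma)$ corresponds, up to a permutation of columns, to the boundary ports other than $e$, and that the self-loop and additional-dangling-port cases are absorbed into $C$. I expect no genuine obstacle. Planarity and the port order are irrelevant here, since the statement is purely linear-algebraic.
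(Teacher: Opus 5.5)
Your proposal is correct and is essentially the paper's argument: the paper writes a minimal mode decomposition $F=\sum_{a}u_a\otimes F_a$ and observes that contracting the rest of the gadget only changes the cofactors, giving $T_\Gamma=\sum_a u_a\otimes R_a$. That is the same statement as your factorization $\Mat_{j(e)}(T_\Gamma)=\Mat_i(F)\,C$, and your explicit treatment of loops, multi-edges, and other dangling ports at the same vertex is bookkeeping the paper leaves implicit.
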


\begin{proof}
Choose a minimal mode decomposition
\[
  F=\sum_{a=1}^{d}u_a\otimes F_a,
  \qquad
  S_i(F)=\Span\{u_1,\ldots,u_d\}.
\]
Contracting the rest of the gadget changes only the tensors multiplying
the $u_a$.  Thus
\[
  T_\Gamma=\sum_{a=1}^{d}u_a\otimes R_a
\]
for suitable residual boundary tensors $R_a$, proving the inclusion.
\end{proof}

Using the block-support convention fixed in
\cref{sec:support-terminology}, for a tensor $G$ on $q$ consecutive
$t$-bit blocks let
$S_i^{\rm blk}(G)\le\C^{\bits^t}$ be the image of the flattening that treats
the whole $i$th block as its row index and all other blocks as its column
index.  It differs from the support of one individual Boolean mode.

\begin{lemma}[Transformed block supports]
\label[lemma]{lem:transformed-block-support}
Suppose $q\ge1$, $T\in(V^*)^{\otimes q}$ is nonzero, and $M$ has row rank
three.  If $T$ has mode support $U=S_i(T)\le V^*$ at port $i$, then
\[
  S_i^{\rm blk}(TM^{\otimes q})=UM:=\{uM:u\in U\}.
\]
In particular, the two supports have the same dimension.
\end{lemma}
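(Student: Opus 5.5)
The plan is to write the block flattening of $TM^{\otimes q}$ as a matrix product in which $M$ acts on the distinguished block and $M^{\otimes(q-1)}$ acts on the complementary blocks, and then to read off the image.

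Place port $i$ first in the notation; this only reorders columns and so does not change the image. Choose a minimal mode decomposition
\[
  T=\sum_{a=1}^{d}u_a\otimes T_a,\qquad d=\dim U,
\]
with $(u_a)$ a basis of $U$ and $(T_a)$ linearly independent in $(V^*)^{\otimes(q-1)}$. Then
\[
  TM^{\otimes q}=\sum_{a=1}^d (u_aM)\otimes\bigl(T_aM^{\otimes(q-1)}\bigr).
\]
The block flattening at block $i$ is therefore $\sum_a (u_aM)^{\trans}\bigl(T_aM^{\otimes(q-1)}\bigr)$, viewed as a row-vector outer product. Its image is contained in $\Span\{u_aM\}=UM$. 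For $q=1$ the complementary factor is the scalar $1$, and the claim is immediate.

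For equality, the key point is that $M$ has row rank three. It therefore has a right inverse $M^+\in\C^{2^t\times3}$ with $MM^+=I_3$, and $(M^+)^{\otimes(q-1)}$ is a right inverse of $M^{\otimes(q-1)}$. This makes the covectors $T_aM^{\otimes(q-1)}$ linearly independent: a relation $\sum_a c_aT_aM^{\otimes(q-1)}=0$ becomes $\sum_a c_aT_a=0$ after right multiplication by $(M^+)^{\otimes(q-1)}$, which forces every $c_a=0$. The flattening thus factors as $A\,C$, where $A$ is the $2^t\times d$ matrix with columns $(u_aM)^{\trans}$ and $C$ has $d$ linearly independent rows. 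Consequently $\operatorname{im}(AC)=\operatorname{im}A=UM$.

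Finally, $u\mapsto uM$ is injective on $V^*$ because $\rank M=3$, so $\dim UM=\dim U$. The only delicate point is the independence of the transformed complementary factors, and the tensor-power right inverse handles it. No matchgate structure is used anywhere in this argument.
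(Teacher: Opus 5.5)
Your proof is correct and follows the paper's argument. Both take a minimal mode decomposition at port $i$, push it through $M^{\otimes q}$, and use full row rank of $M$ (injectivity of $u\mapsto uM$ and of the tensor-power map) to see that the transformed decomposition is still minimal with block support $UM$; your right inverse $M^+$ just makes explicit the injectivity of $M^{\otimes(q-1)}$ on the complementary factors, which the paper asserts without detail.
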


\begin{proof}
Write a minimal mode decomposition
\[
  T=\sum_{a=1}^{d}u_a\otimes T_a,
  \qquad U=\Span\{u_1,\ldots,u_d\},
\]
with the $T_a$ linearly independent.  Full row rank makes both
$u\mapsto uM$ and the corresponding tensor-power map injective.  Hence
\[
  TM^{\otimes q}
  =\sum_{a=1}^{d}(u_aM)\otimes
       (T_aM^{\otimes(q-1)})
\]
is again a minimal mode decomposition, with block support $UM$.
\end{proof}

The preceding support lemmas reduce the gadget closure to an incidence
problem among realized rays and planes, while full row rank carries these
supports without loss of dimension into exact matchgate block row spaces.
We next use pure-spinor geometry to ask when such Gaussian pieces must lie
inside one low-rank matchgate row space.  The following hull, decoder, and
common-annihilator lemmas turn the relevant incidence relations into
rank-four or rank-eight common-cover certificates.

\begin{lemma}[Rank-two Gaussian hull]
\label[lemma]{lem:gaussian-hull}
Let $R_1,R_2\le\C^{\bits^t}$ be row spaces of exact matchgate matrices.
Suppose
\[
  \dim R_1=\dim R_2=2,
  \qquad R_1\ne R_2,
  \qquad R_1\cap R_2\ne0.
\]
Then there is a two-input exact matchgate matrix $P$ of rank four such that
\[
  R_1+R_2\subseteq\row(P).
\]
\end{lemma}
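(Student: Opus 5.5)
The plan is to reduce, by a Gaussian change of coordinates, to the case where the common line is the vacuum, and then write down the hull explicitly in the pivot chart.

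\emph{Step 1: the intersection is a common parity endpoint.} Each $R_i$ is the row space of a rank-two exact matchgate matrix. By \cref{sec:matchgate-matrix-preliminaries}, $R_i=R_i^+\oplus R_i^-$ with one-dimensional parity endpoints, so each $R_i$ is parity-invariant. An intersection of parity-invariant subspaces is parity-invariant. Since $R_1\ne R_2$ are planes with $R_1\cap R_2\ne0$, the intersection is a single line. A parity-invariant line is parity-homogeneous, so $R_1\cap R_2=R_1^{\epsilon}=R_2^{\epsilon}=\C u$ for one parity $\epsilon$. Each endpoint is the image of a basis vector of the rank-two canonical form under a Gaussian transformation, so $u$ is a pure spinor.

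\emph{Step 2: move $u$ to the vacuum.} The space $\Ann(u)$ is maximal isotropic, as is $\Ann(e_{0^t})=\Span\{f_i^*\}$. By the Witt extension theorem there is $g\in\mathrm O(W\oplus W^*)$ with $g\Ann(u)=\Ann(e_{0^t})$. By \eqref{eq:clifford-intertwining}, its Pin lift satisfies $U_g u\in\C e_{0^t}$; the odd case uses a determinant-reversing $g$, which amounts to one fixed-mode flip. By \cref{lem:ordered-clifford-realization}(b), $U_g$ and $U_g^{-1}$ are invertible exact matchgate matrices. Multiplying a matchgate matrix on the right by $U_g^{\trans}$ is ordered planar composition, using \cref{lem:ordered-clifford-realization}(a), so it preserves exactness and rank. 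Hence we may assume $u=e_{0^t}$. At the end we transport the hull $P$ back by the inverse transformation, which preserves both the containment and the rank four.

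\emph{Step 3: normal form and explicit hull.} Take a one-input realizing matrix $A_i$ of $R_i$. Choose its pivot at the vacuum output coordinate, after an input flip if necessary. The pivot-chart discussion of \cref{sec:matchgate-matrix-preliminaries} applies because the pivot output row is the vacuum, so $D_{\rm out}=0$. Removing $A_{\rm in}$ by invertible row operations then shows
\[
  R_i=\row\bigl(\exp(\xi B_i\eta)\bigr)=\Span\{e_{0^t},w_i\},
\]
where $w_i=\sum_j (B_i)_j f_j\in W$ is a one-particle vector. Because $R_1\ne R_2$, the vectors $w_1,w_2$ are linearly independent. Let $B$ be the $2\times t$ matrix with rows $w_1,w_2$. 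The matrix of $\exp(\xi^{\trans}B\eta)$ is exact by the principal-Pfaffian chart. Its rows are
\[
  e_{0^t},\quad w_1,\quad w_2,\quad w_1\wedge w_2,
\]
and by \eqref{eq:exterior-compound-rank} it has rank $2^{\rank B}=4$. It therefore contains $R_1+R_2$. Transporting back by Step 2 gives the required $P$.

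\emph{Main obstacle.} I expect the main difficulty to be Step 3's normal form, namely that a rank-two Gaussian row space containing the vacuum is spanned by the vacuum and a one-particle vector. This needs care with the sign and reversal conventions of \eqref{eq:matchgate-matrix-convention}. I would first try to derive it directly from the pivot chart as sketched above. The fallback is to use the ordered canonical form: a Gaussian $U$ with $R_i=\Span\{U e_{0^t},U f_1\}$ that fixes the vacuum line maps $f_1$ to $U f_1\in w+\C e_{0^t}$ with $w\in W$. The odd-endpoint case may also require checking that the flip-composed lift is still covered by \cref{lem:ordered-clifford-realization}.
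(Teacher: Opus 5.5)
Your proposal is correct and follows essentially the paper's route. You identify $R_1\cap R_2$ as a common parity endpoint spanned by a pure vector and move it to the vacuum by an invertible output matchgate transformation. You then use the pivot chart with $D_{\rm out}=0$ to get $R_i'=\Span\{e_\varnothing,a_i\}$ with one-particle $a_i$, and finally take the exterior-compound matrix of $\exp(\xi^{\trans}B_0\eta)$ with rows $a_1,a_2$ and transport it back.

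The differences are cosmetic. The paper obtains the vacuum-sending transformation from the vector case of Xia's canonical form; your Witt-extension and Pin-lift argument is the one it uses in the proof of \cref{thm:connection-prime}. The paper also allows an arbitrary $s_i$-input realizer, whereas your one-input realizer needs the rank-two canonical form to justify its existence.
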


\begin{proof}
Let parity refer to the $t$ ordered output wires.  Xia's rank-two
canonical form has one nonzero row-space dimension in each output parity,
so
\[
  R_i=R_i^+\oplus R_i^-,
  \qquad \dim R_i^+=\dim R_i^-=1.
\]
The intersection is parity invariant.  Since two distinct planes with
nonzero intersection meet in a line, choose a nonzero parity-homogeneous
$u\in R_1\cap R_2$.  In either original matchgate matrix an actual pinned
input row spans that parity component, so $u$ is, up to a nonzero scalar,
an exact matchgate vector.

Apply an invertible output matchgate transformation $C$ sending
$u$ to the vacuum $e_\varnothing$.  This is the vector case of Xia's
canonical form \cite[Theorem~4.1]{Xia2016}.  Put $R_i'=R_iC$ and choose a
matchgate matrix $A_i$ with $s_i$ input wires and row space $R_i'$.  After
input flips and a global rescaling, an actual input row is the pivot row
$e_\varnothing$.  In the pivot-Pfaffian chart of
\cref{sec:matchgate-matrix-preliminaries}, write its $s_i$ input and $t$
output Grassmann variables as $\xi$ and $\eta$.  Let $A_{{\rm in},i}$,
$B_i$, and $D_i$ denote,
respectively, the input--input, input--output, and output--output blocks of
the skew quadratic parameter matrix.  Up to the pivot scalar, the matchgate
matrix has generating function
\[
  \exp\!\left(
    \frac12\xi^{\trans}A_{{\rm in},i}\xi
    +\xi^{\trans}B_i\eta
    +\frac12\eta^{\trans}D_i\eta
  \right).
\]
The pivot row is exactly the output vacuum, so its two-output coordinates
(the coefficients of $\eta_p\eta_q$) vanish and $D_i=0$.  Put
\[
  q_i(\xi):=\frac12\xi^{\trans}A_{{\rm in},i}\xi.
\]
Exterior multiplication by $\exp(-q_i)$ is an invertible linear operator
on the input coefficient space, with inverse multiplication by
$\exp(q_i)$.  Applied coefficientwise in $\eta$, its ordinary coefficient
matrix performs invertible row operations and therefore preserves the row
space.  Since $q_i$ and $\xi^{\trans}B_i\eta$ are even exterior elements,
\[
  \exp(-q_i)\wedge
  \exp\bigl(q_i+\xi^{\trans}B_i\eta\bigr)
  =\exp(\xi^{\trans}B_i\eta).
\]
Thus no unproved boundary transformation is being used: the resulting
coefficient matrix is independently an exact matchgate matrix by the
principal-Pfaffian chart and has the same row space as $A_i$.  It is the
exterior-compound matrix of $B_i$, whose rank is $2^{\rank B_i}$.  Since
$\rank A_i=2$, the cross matrix has rank one, and therefore
\[
  R_i'=\Span\{e_\varnothing,a_i\}
\]
for a nonzero one-particle vector $a_i$.  The two $a_i$ are independent
because $R_1'\ne R_2'$.

Use $a_1,a_2$ as the rows of a $2$-by-$t$ cross matrix $B_0$.  The
coefficient matrix $P_0$ of $\exp(\xi^{\trans}B_0\eta)$ is an exact
two-input, $t$-output matchgate matrix by the Grassmann chart, and its
exterior-compound row space is
\[
  \row(P_0)=
  \Span\{e_\varnothing,a_1,a_2,a_1\wedge a_2\}
\]
and rank four.  It contains $R_1'+R_2'$.  Composing the ordered outputs
with $C^{-1}$ gives the exact matchgate matrix $P=P_0C^{-1}$ and
\[
  \row(P)\supseteq(R_1'+R_2')C^{-1}=R_1+R_2.
\]
\end{proof}

We next make explicit the exact-labelled interface between a Gaussian row
space and a smaller common base.

\begin{lemma}[Full-row-rank matchgate decoder]
\label[lemma]{lem:matchgate-decoder}
If $P$ is an $r$-input, $t$-output exact matchgate matrix of rank $2^r$,
then there is a $t$-input, $r$-output exact matchgate matrix $D$ such that
\[
  PD=I_{2^r}.
\]
\end{lemma}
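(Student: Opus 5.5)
The plan is to reduce to the standard inclusion $J_{r,t}$ through the ordered canonical form, and then to invert each factor separately. Since $P$ is nonzero with $\rank P=2^r$, the matchgate matrix rank theorem \cite[Theorem~4.1]{Xia2016} says that its ordered canonical form has exactly $d=r$ active input--output wires. As $P$ has only $r$ inputs, every input wire is active. I would write this canonical form as
\[
  P=C_{\rm in}\,J_{r,t}\,C_{\rm out},
\]
where $C_{\rm in}\in\C^{2^r\times2^r}$ is an invertible input matchgate transformation and $C_{\rm out}\in\C^{2^t\times2^t}$ is an invertible output matchgate transformation. Any nonzero scalar is absorbed into $C_{\rm in}$, and the $t-r$ inactive output modes are the pinned ones. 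If the canonical form is only stated with the active outputs at some other $r$ positions and some inactive modes pinned to $1$, I would absorb this into $C_{\rm out}$. The pins to $1$ are removed by fixed-mode flips. The repositioning uses the Pin lift of the corresponding signed mode permutation of $W\oplus W^*$. Both are exact by \cref{lem:ordered-clifford-realization}(b) and the flip construction in \cref{sec:matchgate-matrix-preliminaries}.

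The candidate decoder is
\[
  D:=C_{\rm out}^{-1}\,J_{r,t}^{\trans}\,C_{\rm in}^{-1}.
\]
Its factors are handled as follows.
\begin{itemize}
  \item $J_{r,t}^{\trans}$ is a $t$-input, $r$-output exact matchgate matrix by \cref{lem:ordered-clifford-realization}(a).
  \item $C_{\rm in}^{-1}$ and $C_{\rm out}^{-1}$ are exact matchgate transformations by the cited inverse statement for the ordered canonical form. Alternatively, when they arise as Pin lifts, this follows from \cref{lem:ordered-clifford-realization}(b).
\end{itemize}
In the convention \eqref{eq:matchgate-matrix-convention}, ordinary matrix multiplication is ordered planar gluing of an output block to an input block. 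Hence $D$ is an exact $t$-input, $r$-output matchgate matrix. The identity is then the direct computation
\[
  PD=C_{\rm in}J_{r,t}C_{\rm out}C_{\rm out}^{-1}J_{r,t}^{\trans}C_{\rm in}^{-1}
  =C_{\rm in}\bigl(J_{r,t}J_{r,t}^{\trans}\bigr)C_{\rm in}^{-1}=I_{2^r},
\]
which uses $J_{r,t}J_{r,t}^{\trans}=I_{2^r}$ as recorded in \cref{sec:matchgate-matrix-preliminaries}.

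The main obstacle is the first step. It requires the canonical form to be a genuine product of exact invertible transformations with $J_{r,t}$ in the middle, in the fixed ordered convention, with no hidden unsigned wire permutation. If the cited canonical form does not supply this cleanly, I would build it directly from the pivot chart. After input and output flips and pivot normalization, $P$ equals an exterior-compound matrix times invertible row operations. The rank hypothesis forces the cross block $B\in\C^{r\times t}$ to have rank $r$. I would then choose an isometry $g$ of $W\oplus W^*$ acting on $W$ by a linear map taking the rows of $B$ to $f_1,\ldots,f_r$, together with the contragredient action on $W^*$. Its Pin lift $U_g$ is exact by \cref{lem:ordered-clifford-realization}(b), and it carries $\row(P)$ onto $\row(J_{r,t})$. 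In this version $C_{\rm out}$ is a product of flips and $U_g$, and the input-side factor is whatever invertible $2^r\times2^r$ matrix relates the two row bases. That factor would still need to be shown exact. I would do this by noting that it equals $P\,C_{\rm out}^{-1}J_{r,t}^{\trans}$, which is a composition of exact matrices, and it is invertible by the rank count.
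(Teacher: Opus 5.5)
Your main argument is the paper's proof. You take Xia's ordered canonical form, note that $J_{r,t}^{\trans}$ is exact by the ordered-transpose lemma, and set $D=C_{\rm out}^{-1}J_{r,t}^{\trans}C_{\rm in}^{-1}$, using $J_{r,t}J_{r,t}^{\trans}=I_{2^r}$; the paper writes the same canonical form as $APC=J_{r,t}$ with $A=C_{\rm in}^{-1}$ and $C=C_{\rm out}^{-1}$, so $D=CJ_{r,t}^{\trans}A$. The pivot-chart fallback is not needed, and it would be incomplete as sketched. It silently assumes the output--output block $D_{\rm out}$ vanishes, which holds only after first sending a row to the vacuum, and it would still need the cited inverse theorem to make $C_{\rm in}^{-1}$ exact.
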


\begin{proof}
Xia's ordered canonical form \cite[Theorem~4.1]{Xia2016} gives invertible
exact input and output matchgate transformations $A,C$ such that
\[
  APC=J_{r,t}=:J,
\]
where $J$ is the standard inclusion that copies the $r$ input modes to
$r$ active output modes and fixes the remaining $t-r$ output modes.  In
the ordered matrix convention fixed above, $JJ^{\trans}=I_{2^r}$, and the
standard projection $J^{\trans}$ is exact by
\cref{lem:ordered-clifford-realization}.  Concretely it is the reflected
nested-edge realization: its $r$ output ports inherit precisely the required
reversed physical order, the $r$ active wires are paired by the reflected
unit edges, and the other $t-r$ input wires carry the reflected zero-pins.
Therefore
\[
  D=CJ^{\trans}A
\]
is an exact ordered matchgate composition and
\[
  PD=A^{-1}JJ^{\trans}A=I_{2^r}.
\]
\end{proof}

\begin{lemma}[Exact-labelled cover compression]
\label[lemma]{lem:exact-cover-compression}
Let $M$ be a domain base with $t$ Boolean wires per block.  Suppose $P$ is
an exact matchgate matrix of rank $2^r$, with $r$ input wires and the same
$t$ ordered output wires, such that
\[
  \row(M)\subseteq\row(P).
\]
Then every valid presentation using $M$ has a presentation of the same
domain tensors, with the same labels, arities, and port orders, using a
width-$r$ base.
\end{lemma}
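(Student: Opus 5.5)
The plan is to use the decoder of \cref{lem:matchgate-decoder} to pass every label through the smaller Gaussian row space. Apply that lemma to $P$ to get a $t$-input, $r$-output exact matchgate matrix $D$ with $PD=I_{2^r}$. Since $\row(M)\subseteq\row(P)$, there is a unique matrix $C\in\C^{|D|\times2^r}$ with $M=CP$; uniqueness holds because $P$ has full row rank $2^r$. We then take $C$ as the new width-$r$ base. Explicitly, $C=MD$, since $MD=CPD=C$. We keep the same left family $\cF$ and replace each Boolean preimage $H$ of block arity $b$ by
\[
  H':=P^{\otimes b}H.
\]
This gives the induced right tensor
\[
  C^{\otimes b}H'=(CP)^{\otimes b}H=M^{\otimes b}H,
\]
so the domain tensors, labels, arities and port orders are literally unchanged. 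The induced language $\cF\mid C\cH'$ therefore coincides with $\cF\mid M\cH$, and exact labelled equivalence is trivial.

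What remains is validity. On the left, we need $FC^{\otimes a}=FM^{\otimes a}D^{\otimes a}\in\MG_{ra}$. Each $D$ is an exact $t$-input, $r$-output matchgate matrix, and $FM^{\otimes a}$ is an exact matchgate on $a$ consecutive $t$-blocks. Gluing each block to the input side of a copy of $D$ is ordered planar contraction. By the matrix convention \eqref{eq:matchgate-matrix-convention}, together with the reflected-leaf argument of \cref{sec:base-presentation}, this contraction computes exactly $(FM^{\otimes a})D^{\otimes a}$ in block-major order. Closure of exact matchgates under tensor product and ordered contraction then gives the claim.

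On the right, $H'=P^{\otimes b}H$ is obtained by attaching to each $t$-block of $H$ the output side of one copy of $P$. This is the same kind of contraction with the roles of rows and columns exchanged. I would realize it by applying \cref{lem:ordered-clifford-realization}(a) to $P$, so that $P^{\trans}$ is an exact $t$-input, $r$-output matrix, and then running the same block-by-block gluing argument as on the left.

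The main obstacle is the bookkeeping of planar orientation. Within each block, the physical reversal $\rho$ at glued interfaces must compose correctly so that the algebraic products $FM^{\otimes a}D^{\otimes a}$ and $P^{\otimes b}H$ are the literal ordered planar contractions, with no stray wire permutation. I would check this one block at a time. In each check I would use the identity that output-to-input gluing is matrix multiplication under \eqref{eq:matchgate-matrix-convention}, and rely on the fact that cyclic rerooting preserves the wire order within a block. With the conventions already fixed, the rest is routine.
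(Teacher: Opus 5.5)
Your proposal is correct and follows the paper's proof essentially verbatim: you factor $M=CP$ (the paper's $N$), lift each right preimage to $P^{\otimes b}H$, and certify the new left lifts as $(FM^{\otimes a})D^{\otimes a}$ using the decoder with $PD=I_{2^r}$. The only differences are cosmetic: the detour through $P^{\trans}$ on the right is unnecessary, because gluing the output side of $P$ to a block of $H$ is already ordered composition in the convention \eqref{eq:matchgate-matrix-convention}; and your explicit appeal to the reflected-leaf argument on the left spells out an orientation check that the paper leaves implicit.
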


\begin{proof}
Write $M=NP$, with $N$ having $2^r$ columns, and let $D$ be the decoder in
\cref{lem:matchgate-decoder}.  For a right Boolean preimage $H$, put
$b:=\barity(H)$ and
$H'=P^{\otimes b}H$.  Matchgate closure under ordered planar composition
gives $H'\in\MG_{rb}$, and associativity gives, for the same right label,
\[
  M^{\otimes b}H=N^{\otimes b}H'.
\]
For a left label $F$, put $a:=\arity(F)$.  Its old lift is
\[
  FM^{\otimes a}=(FN^{\otimes a})P^{\otimes a}.
\]
Attach $D$ to every consecutive output block.  Exact matchgate closure and
$PD=I_{2^r}$ give
\[
  (FM^{\otimes a})D^{\otimes a}=FN^{\otimes a},
\]
so the new left lift is an exact matchgate.  Thus $N$ is the required
width-$r$ base.  This decoder form of Xia's cover-compression argument
\cite[Theorem~5.2]{Xia2016} is performed separately for every label.  No
label-dependent scalar, arity change, or port permutation occurs.
\end{proof}

\begin{lemma}[Isotropic-kernel cover]
\label[lemma]{lem:isotropic-kernel-cover}
Let $K\le W\oplus W^*$ be an isotropic subspace of dimension $t-r$, where
$0\le r\le t$.  Then there is an $r$-input, $t$-output exact matchgate
matrix $P_K$ of rank $2^r$ such that
\[
  \row(P_K)
  =\{u\in\Lambda^\bullet W: z\cdot u=0\text{ for every }z\in K\}.
\]
\end{lemma}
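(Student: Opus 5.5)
Since $K$ is isotropic, we extend it to a maximal isotropic subspace and transport a standard model by a Clifford lift. The standard model is $K_0:=\Span\{f_{r+1}^*,\ldots,f_t^*\}$. It is isotropic of dimension $t-r$, because $\langle\phi,\phi'\rangle=0$ for covectors. Its common annihilated space is the joint kernel of the contractions $\iota_{f_i^*}$ for $i>r$. That kernel is exactly $\Lambda^\bullet\Span\{f_1,\ldots,f_r\}$, the span of the $e_x$ with $\supp(x)\subseteq[r]$, i.e.\ of words whose last $t-r$ bits are $0$. This space is the row space of the standard inclusion $J_{r,t}$ (first $r$ bits copied, remaining wires pinned to zero). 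So for $K_0$ the claim holds with $P_{K_0}=J_{r,t}$, which has rank $2^r$.

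For general $K$, we use the Witt extension theorem on the nondegenerate space $W\oplus W^*$. Any linear isomorphism $K_0\to K$ is an isometry between totally isotropic subspaces, since the form vanishes on both. Hence it extends to some $g\in\mathrm O(W\oplus W^*)$ with $gK_0=K$. Let $U_g$ be a Pin lift. By the intertwining relation \eqref{eq:clifford-intertwining}, for $z_0\in K_0$ we have $(gz_0)\cdot U_g(u)=U_g(z_0\cdot u)$. Since $U_g$ is invertible, $u$ is annihilated by all of $K_0$ if and only if $U_g(u)$ is annihilated by all of $K=gK_0$. Therefore the $K$-annihilated subspace equals $U_g$ applied to $\row(J_{r,t})$.

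It remains to express this as a row space of an exact matchgate matrix in the paper's conventions. Vectors act on the right of row vectors in the convention $\row(\cdot)$, so the relevant matrix is $P_K:=J_{r,t}U^{\trans}$, where $U$ is the coefficient matrix of $U_g$. This matrix has row space $\{U_g(v):v\in\row(J_{r,t})\}$, because $vU^{\trans}$ has coordinates $(U_g v)_x$. By \cref{lem:ordered-clifford-realization}(b), $U$ is an invertible exact $t$-input, $t$-output matchgate matrix. By part (a), $U^{\trans}$ is also exact. Ordered composition of exact matchgate matrices is ordinary matrix multiplication and stays exact, so $P_K$ is an exact $r$-input, $t$-output matchgate matrix. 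Its rank is $\rank J_{r,t}=2^r$, since $U^{\trans}$ is invertible. The edge cases are immediate: $r=t$ gives $K=0$ and the full space, realized by $J_{t,t}$ composed with $U^{\trans}$; $r=0$ is covered by the same argument.

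The main obstacle is the bookkeeping of the transpose convention, i.e.\ checking that applying $U_g$ to column coordinates corresponds to right multiplication of row vectors by $U^{\trans}$ rather than by $U$. This is why part (a) of \cref{lem:ordered-clifford-realization} is needed alongside part (b). If the sign or order conventions instead produce $U$ itself, the argument is unchanged, since both $U$ and $U^{\trans}$ are exact.
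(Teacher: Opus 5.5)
Your proof is correct and follows essentially the paper's route. You transport the standard model $K_0=\Span\{f_{r+1}^*,\ldots,f_t^*\}$, whose joint kernel is $\row(J_{r,t})$, by a Witt-extended isometry and its Pin lift, and then compose $J_{r,t}$ with the transposed lift using both parts of \cref{lem:ordered-clifford-realization}. The only difference is cosmetic: you take $gK_0=K$ and use $J_{r,t}U^{\trans}$, whereas the paper takes $gK=K_0$ and uses $J_{r,t}(U_g^{\trans})^{-1}$, so you never need exactness of the inverse lift.
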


\begin{proof}
Write
\[
  \widehat L_K:=\{u\in\Lambda^\bullet W:z\cdot u=0
      \text{ for every }z\in K\}.
\]
By Witt extension, choose $g\in \mathrm O(W\oplus W^*)$ carrying $K$ to
\[
  K_0:=\Span\{f_{r+1}^*,\ldots,f_t^*\}\le W^*.
\]
Let $U_g$ be a Pin lift, acting on column spinors.  The intertwining identity
\eqref{eq:clifford-intertwining} shows that
\[
  U_g\widehat L_K=\widehat L_{K_0}.
\]
Put $C:=U_g^{\trans}$.  Both $C$ and
$C^{-1}=(U_g^{-1})^{\trans}$ are exact square matchgate transformations by
\cref{lem:ordered-clifford-realization}.  If $u$ is written as a coefficient
column, then
\[
  u^{\trans}C=(U_g u)^{\trans},
\]
so, in row coordinates,
$\widehat L_K C=\widehat L_{K_0}$.

The Clifford actions of $K_0$ are the standard annihilation operators, and
their joint kernel is exactly
$\Lambda^\bullet\Span(f_1,\ldots,f_r)$, of dimension $2^r$.  In the fixed
ordered coordinates this is $\row(J_{r,t})$.  Therefore
\[
  P_K:=J_{r,t}C^{-1}
\]
is an exact $r$-input, $t$-output matchgate matrix of rank $2^r$, and in
fact
\[
  \row(P_K)=\widehat L_{K_0}C^{-1}=\widehat L_K.
\]
This also realizes directly the common-annihilator dimension formula used
here; compare \cite[Theorem~1(1)]{Batista2014}.
\end{proof}

With support monotonicity and the Gaussian hull and compression machinery in
place, we return to the realized support geometry and perform the final
separation in \cref{fig:conceptual-roadmap}.  A ray-only closure forces
complete factorization; sufficient motion among planes or pure rays produces
a rank-four or rank-eight common cover; and the remaining plane-bearing
configuration is a one-plane monomial flag.  In the flag case incidence no
longer forces collapse, so we instead derive a uniform all-arity
Boolean-core-times-rays normal form.

\begin{theorem}[Connection-prime collapse]
\label{thm:connection-prime}
Let $(\cF,\cH,M)$ be a valid qutrit presentation, let
$\mathcal B=M\cH$ be its induced right family, and consider the all-left
gadget closure of $\cF\mid\mathcal B$.  Assume $\rank M=3$, $\mathcal B$ is
support-essential and every port of every domain tensor in
$\cF\cup\mathcal B$ is deficient.  (The displayed rank hypothesis is
redundant under right support-essentiality, but makes the domain of
\cref{def:monomial-flag} explicit.)  If this closure is connection-prime,
then $M$ has a common matchgate cover of rank at most eight.  Hence the
presentation collapses exactly and labelwise to width at most three.  If
two distinct planes occur in $\mathcal S_2$, the cover has rank four and
the width is at most two.
\end{theorem}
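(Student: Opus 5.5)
Our goal is a common matchgate cover of rank four or eight, after which \cref{lem:exact-cover-compression} gives the labelwise collapse to width two or three. The cover must contain $\row(M)=V^*M$. The route is to show that the transformed realized supports $SM$, for $S\in\mathcal S_1\cup\mathcal S_2$, span all of $V^*M$, and then to place all of them inside one Gaussian row space. For the spanning step we use connection-primality: the realized supports span $V^*$ by definition, and $u\mapsto uM$ is injective since $\rank M=3$. Hence $V^*M=\sum_{S}SM$. By \cref{lem:transformed-block-support} and the rerooting convention of \cref{sec:support-terminology}, each realized ray maps to a pure matchgate ray. Each realized plane maps to a rank-two Gaussian row space with parity endpoints. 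So everything reduces to covering a configuration of Gaussian pieces that together span a three-dimensional space.

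\emph{Case 1: two distinct planes $P_1\ne P_2$ lie in $\mathcal S_2$.} In $V^*\cong\C^3$ they meet in a line, so $P_1M$ and $P_2M$ are distinct rank-two Gaussian row spaces with nonzero intersection. \cref{lem:gaussian-hull} then gives a rank-four cover of $P_1M+P_2M$. Since $P_1+P_2=V^*$, this sum equals $\row(M)$, which yields width two.

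\emph{Case 2: $\mathcal S_2=\{P\}$.} Since there is no monomial flag, one of two things happens. Either there is a realized ray inside $P$ other than $P^\pm$, or there are two distinct realized rays $L_1,L_2\not\subset P$. (Some transverse ray must exist, because the supports span $V^*$.) We build the cover through common annihilators and \cref{lem:isotropic-kernel-cover}. Write $R=PM=R^+\oplus R^-$, and take a pure vector $v$ spanning $L M$ for a transverse ray $L$. Every vector in $R^+$, $R^-$ and $v$ is pure, so each has a maximal isotropic annihilator of dimension $t$. A vector $u$ lies in $R$ when a rank-two matchgate matrix has it as a row. In the canonical form, $R$ is $\Lambda^\bullet$ of one active mode times a fixed pure vector, so the common annihilator $\Ann(R^+)\cap\Ann(R^-)$ is isotropic of dimension $t-1$. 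The pure vector $v$ is transverse to $R$. We would show that $\Ann(v)\cap\Ann(R^+)\cap\Ann(R^-)$ has dimension at least $t-3$. Its kernel then contains $R+\C v=\row(M)$ and, by \cref{lem:isotropic-kernel-cover}, gives a rank-eight cover.

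The dimension estimate is the step we expect to be the main obstacle. Two maximal isotropics need not meet in codimension one, so a generic transverse $v$ could have $\dim(\Ann(v)\cap\Ann(R))$ far below $t-3$. This is exactly where the failure of the flag condition must enter. If a realized ray $\ell\subset P$ differs from $P^\pm$, then $\ell M$ is pure and parity-mixed. Choosing a contracted gadget in which $\ell$ and $L$ sit at two ports, the rank-two transverse structure should force the pure pencil spanned by $\ell M$ and $vM$ to lie in the spinor variety. Quadric generation of the spinor ideal, restricted to a projective line with three pure points, then makes the whole line pure. Two pure spinors spanning a pure pencil have annihilators meeting in codimension one. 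This gives $\dim\ge t-1$ for that pair, and adding the opposite endpoint costs at most two more.

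In the subcase with two transverse rays $L_1,L_2$, the same pencil argument is applied to $L_1M,L_2M$ and an endpoint. We would first try to get the needed incidence from a two-port gadget joining the two rays, using \cref{lem:boundary-monotonicity} to control its supports. If the incidence cannot be extracted directly, the fallback is a Gaussian-hull argument in the style of \cref{lem:gaussian-hull}. Move a pure vector to the vacuum, write the other pieces in a Grassmann chart, and bound the rank of the cross matrix $B_0$ by three, which would give the rank-eight exterior-compound row space directly.
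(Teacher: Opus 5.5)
\textbf{Verdict: genuine gap.} Your two-plane case is exactly the paper's argument. For $\mathcal S_2=\{P\}$ you picked the right tools: quadric generation of the spinor ideal, a pure pencil, common annihilators, and \cref{lem:isotropic-kernel-cover}. But the step that actually produces the pure pencil is missing.

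\emph{The first subcase is empty.} Your alternative ``a realized ray $\ell\subset P$ other than $P^\pm$'' cannot occur. The vector $\ell M$ is an exact matchgate vector, hence parity-homogeneous. The only parity-homogeneous lines in $PM=R^+\oplus R^-$ are $R^\pm$, so ``pure and parity-mixed'' is a contradiction, not a case. Failure of the flag therefore means exactly that there is a second realized ray $L'\notin\{P^+,P^-,L\}$ lying outside $P$.

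\emph{Collinearity is never shown.} You do not prove that $LM$, $L'M$ and an endpoint lie on one projective line. The two-port gadget you suggest is unnecessary and is not analysed, and the Grassmann-chart fallback is not carried out. The missing idea is a parity count in the three-dimensional space $\row(M)=PM+LM$:
\begin{itemize}
\item Let $\sigma$ be the sign for which $P^\sigma M$ has the parity of $LM$, and take $u\in P^\sigma M$, $v\in LM$, $w\in P^{-\sigma}M$.
\item The $\sigma$-parity part of $\row(M)$ is the plane $\Span(u,v)$, while the opposite part is only the line $\C w$.
\item The pure ray $L'M$ is parity-homogeneous, and it is not $\C w$ since $L'\ne P^{-\sigma}$. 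Hence $L'M=\C(au+bv)$ with $ab\ne0$.
\end{itemize}
This puts three distinct pure points on $\mathbb P(\Span(u,v))$. Only then does quadric generation make that line a pure pencil.

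\emph{The dimension count is wrong.} Since $u$ and $v$ have the same parity, $\Ann(u)$ and $\Ann(v)$ lie in the same family of maximal isotropics and meet in even codimension. The isotropic $(t-2)$-space supplied by the pure-pencil classification bounds the codimension by two, and $u\ne v$ projectively excludes zero. Hence $\dim(\Ann(u)\cap\Ann(v))=t-2$, not $t-1$.

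For example, $e_\varnothing$ and $f_1\wedge f_2$ span a pure pencil, yet $\Ann(e_\varnothing)\cap\Ann(f_1\wedge f_2)=\Span(f_3^*,\ldots,f_t^*)$. With your $t-1$, combined with your correct $\dim(\Ann(R^+)\cap\Ann(R^-))=t-1$, you would even get a $(t-2)$-dimensional common annihilator and hence a rank-four cover. That is false for $\Span(e_\varnothing,f_1\wedge f_2,f_3)$, whose common annihilator has dimension $t-3$.

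The correct count goes as follows. The spaces $\Ann(u)\cap\Ann(v)$ and $\Ann(u)\cap\Ann(w)$ have dimensions $t-2$ and $t-1$, and both lie in the $t$-dimensional $\Ann(u)$. They therefore meet in dimension at least $t-3$. Any $(t-3)$-dimensional subspace of that intersection is isotropic and annihilates $\row(M)$, which gives the rank-eight cover.

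\emph{Small widths.} \cref{lem:isotropic-kernel-cover} with $r=3$ needs $t\ge3$, so small $t$ must be handled separately. The paper does this for $t\le3$ by noting that the identity $J_{t,t}$ is already a common cover of rank $2^t\le8$.
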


\begin{proof}
The stated rank condition also follows independently from right
essentiality, since all induced right mode supports lie in
$\operatorname{col}(M)$ and together span $\C^3$.  Now
\cref{lem:transformed-block-support} identifies every realized domain
support $U$ with the matchgate block support $UM$.  If distinct
$P,Q\in\mathcal S_2$ occur, then $PM$ and $QM$ are distinct rank-two
matchgate row spaces in the three-dimensional space $\row(M)$.  They meet
in a line.  The Gaussian-hull lemma puts their sum, hence $\row(M)$, in a
rank-four matchgate row space.  This is a common cover, and
\cref{lem:exact-cover-compression} gives width at most two.

Suppose instead that $\mathcal S_2=\{P\}$.  Connection-primality supplies a
realized line $L\not\subset P$, whose image $LM$ is a pure matchgate ray.
Since there is no monomial flag, there is another realized line $L'$
different from $P^+,P^-$ and $L$; its image is likewise a pure matchgate
ray.  A realized line inside $P$ can only be $P^+$ or $P^-$, because every
exact matchgate vector has one parity.  Thus $L'$ lies outside $P$.

If $t\le3$, the standard matrix $J_{t,t}=I_{2^t}$ has row space
$\C^{\bits^t}$ and rank $2^t\le8$.  It is a
common cover, so \cref{lem:exact-cover-compression} proves the conclusion.
Assume henceforth that $t\ge4$.

The space
\[
  \row(M)=PM+LM
\]
is parity invariant.  Let $\sigma\in\{+,-\}$ be the sign for which
$P^\sigma M$ has the same parity as $LM$, and write $-\sigma$ for the
opposite sign.  Choose nonzero generators
\[
  u\in P^\sigma M,\qquad v\in LM,\qquad w\in P^{-\sigma}M.
\]
The first parity component is the plane $\Span(u,v)$, while the other is
the line $\C w$.  The pure ray $L'M$ must have the first parity, since the
opposite component contains only $P^{-\sigma}M\subset PM$.  The choice of
$L'$ therefore implies
\[
  L'M=\C(au+bv),
  \qquad a,b\in\C^\times.
\]
Thus the projective line through the three distinct pure points
$[u],[v],[au+bv]$ contains three pure points.  The homogeneous vanishing
ideal of the spinor variety is
generated by quadrics \cite[Section~2.4]{Manivel2009}.  Every restricted defining
quadric has three zeros on this projective line and therefore vanishes
identically.  The whole line is a pure pencil, so the pure-pencil
classification of lines on the spinor variety
\cite[Example~4.12]{LandsbergManivel2003} supplies a fixed isotropic
$(t-2)$-space $K_{uv}$ contained in the annihilator of every point of the
line.  The two maximal isotropics $\Ann(u)$ and $\Ann(v)$ are distinct and
belong to the same spinor component.  In one component the codimension of
their intersection in either maximal isotropic is even
\cite[Section~2.2]{Manivel2009}.  Containment of
$K_{uv}$ makes that codimension at most two, while distinctness excludes
zero; it is therefore exactly two.  Hence
\[
  \dim(\Ann(u)\cap\Ann(v))=t-2.
\]

We justify the second annihilator dimension through an explicit Clifford
isometry.  Put $R:=PM$.  By Witt extension, choose
$g\in\mathrm O(W\oplus W^*)$ with $g\Ann(u)=W^*$, and let $U_g$ be a Pin
lift.  In row coordinates set $C:=U_g^{\trans}$.  The intertwining identity
\eqref{eq:clifford-intertwining} shows that $uC$ is a nonzero multiple of
the vacuum $e_\varnothing$, and
\cref{lem:ordered-clifford-realization} shows that $C$ and $C^{-1}$ are
exact matchgate transformations.  Hence $RC$ is a rank-two exact matchgate
row space containing the vacuum.  The pivot-chart argument in the proof of
\cref{lem:gaussian-hull} applies verbatim and gives
\[
  RC=\Span\{e_\varnothing,\alpha\}
\]
for a nonzero one-particle vector $\alpha\in W$.  A Pin lift either
preserves parity globally or swaps it globally.  Since $u$ and $w$ span the
two opposite-parity lines of $R$, the vector $wC$ spans $\C\alpha$.  Now
$\Ann(e_\varnothing)=W^*$, and
\[
  \Ann(e_\varnothing)\cap\Ann(\alpha)
  =\{\phi\in W^*: \phi(\alpha)=0\},
\]
a hyperplane of dimension $t-1$.  The same intertwining identity identifies
this intersection with
$g(\Ann(u)\cap\Ann(w))$.  Since $g$ is an isometry, it follows that
\[
  \dim(\Ann(u)\cap\Ann(w))=t-1.
\]
Both intersections lie in the $t$-dimensional maximal isotropic space
$\Ann(u)$.  Their intersection has dimension at least $t-3$ and
annihilates every vector in $\row(M)$.  Choose a
$(t-3)$-dimensional subspace
\[
  K\le\Ann(u)\cap\Ann(v)\cap\Ann(w).
\]
It is isotropic because $K\le\Ann(u)$.  Every vector in $\row(M)$ is
annihilated by $K$, so \cref{lem:isotropic-kernel-cover} with $r=3$ gives a
rank-eight exact matchgate matrix whose row space contains $\row(M)$.  This
is a common cover, and
\cref{lem:exact-cover-compression} gives width at most three
\cite[Theorem~5.2]{Xia2016}.
\end{proof}

The exceptional case has an all-arity normal form.  We first decode one
rank-two block.

\begin{lemma}[Rank-two block decoding]
\label[lemma]{lem:rank-two-decoding}
Let $R\le\C^{\bits^t}$ be the row space of a rank-two exact matchgate
matrix, and write
\[
  R=R^+\oplus R^-,
  \qquad \dim R^+=\dim R^-=1,
\]
where $R^+$ is even and $R^-$ is odd.  There are generators
$g_0\in R^+$, $g_1\in R^-$ and one invertible output matchgate transformation $C$,
depending only on $R$, such that
\[
  g_0C=e_{0^t}^{\trans},
  \qquad
  g_1C=e_{10^{t-1}}^{\trans}.
\]
Let $m\ge0$.  If an exact $mt$-wire matchgate tensor $G$ belongs to
$R^{\otimes m}$ and
\[
  G=\sum_{x\in\bits^m}h(x)
       g_{x_1}\otimes\cdots\otimes g_{x_m},
\]
then $h$ is an exact $m$-wire Boolean matchgate signature.
\end{lemma}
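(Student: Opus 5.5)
We split the proof into two parts: first we build the transformation $C$, and then we decode an arbitrary $m$.

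\emph{Constructing $C$.} Since $R$ is the row space of a rank-two exact matchgate matrix, each parity line $R^\pm$ is spanned by an actual pinned input row. Hence a nonzero $g_0\in R^+$ is, up to scale, an exact (pure) matchgate vector. This is the same argument used in the proof of \cref{lem:gaussian-hull}. Xia's vector canonical form \cite[Theorem~4.1]{Xia2016}, or equivalently Witt extension with a Pin lift and \cref{lem:ordered-clifford-realization}, gives an invertible output matchgate transformation $C_1$ with $g_0C_1=e_{0^t}^{\trans}$.

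The pivot-chart argument from the proof of \cref{lem:gaussian-hull} then shows $RC_1=\Span\{e_\varnothing,\alpha\}$ for a nonzero one-particle vector $\alpha\in W$. The Pin lift either preserves parity globally or swaps it; our choice sends the even vector to the vacuum, so it preserves parity. Consequently $R^-C_1=\C\alpha$.

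Next we choose $h\in\mathrm{GL}(W)$ with $h\alpha=f_1$. The induced isometry $h\oplus (h^{-1})^{*}$ of $W\oplus W^*$ has an even lift $\Lambda^\bullet h$. This lift fixes the vacuum and sends $\alpha\mapsto f_1=e_{10^{t-1}}$. By \cref{lem:ordered-clifford-realization}, its transpose $C_2$ is an exact invertible transformation.

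We put $C=C_1C_2$ and rescale $g_0,g_1$ so that the two displayed identities hold exactly. The construction depends only on $R$.

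\emph{Decoding.} Given $G\in R^{\otimes m}$ with the stated expansion, we attach $C$ to each of the $m$ blocks. This is ordered planar composition of exact matchgates, so $G' := G\,C^{\otimes m}$ is an exact $mt$-wire matchgate. By the identities above,
\[
  G'=\sum_{x}h(x)\,e_{x_1 0^{t-1}}\otimes\cdots\otimes e_{x_m0^{t-1}}.
\]
Now pin the last $t-1$ wires of every block to $0$; exact matchgates are closed under pinning. The result is the $m$-wire tensor $x\mapsto h(x)$, which is therefore in $\MG_m$. The edge cases $m=0$ (a scalar, $h\in\MG_0=\C$) and $t=1$ are immediate.

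\emph{Main obstacle.} The delicate step is the second normalization, $\alpha\mapsto f_1$, done exactly in the prescribed wire order. The risk is a hidden wire permutation or sign that would change the claimed Boolean coordinate. I would avoid this by using the even $\Lambda^\bullet h$ lift and invoking \cref{lem:ordered-clifford-realization}(b), which already handles the ordered realization. I would also check that the pinning positions, namely the non-first wires of each block, match the logical order left fixed by the convention \eqref{eq:matchgate-matrix-convention}.
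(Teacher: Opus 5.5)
Your proof is correct. Its decoding half matches the paper exactly: attach $C$ to every block, then pin the $t-1$ inactive wires of each block to zero.

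The difference is in how you obtain $C$. The paper cites Xia's rank-two canonical form, which sends $R$ onto $\row(J_{1,t})=\Span\{e_{0^t}^{\trans},e_{10^{t-1}}^{\trans}\}$ in one step. It then composes with a flip of the active output mode if the two parity lines land interchanged.

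You instead build $C$ from tools already established in the paper:
\begin{enumerate}
  \item the vector canonical form (or a Witt--Pin lift) to send the even endpoint to the vacuum;
  \item the pivot-chart argument of \cref{lem:gaussian-hull}, giving $RC_1=\Span\{e_\varnothing,\alpha\}$;
  \item the $\mathrm{GL}(W)$ lift $\Lambda^\bullet h$. This is a nonzero scalar multiple of the Spin lift of $h\oplus(h^{-1})^*\in\mathrm{SO}(W\oplus W^*)$, hence exact by \cref{lem:ordered-clifford-realization}(b). You should state the ``up to scalar'' explicitly.
\end{enumerate}
This is essentially the chain the paper itself runs in the proof of \cref{thm:connection-prime}.

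Your route buys self-containment and automatic parity bookkeeping. Sending the even line to the vacuum forces the odd line onto the one-particle line, so no corrective flip is needed. This global parity preservation holds for any invertible exact transformation, since its $2t$-wire signature has a single parity, so it does not depend on $C_1$ being a Pin lift. The paper's route buys brevity, but it rests entirely on the cited rank-two canonical form.
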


\begin{proof}
Xia's rank-two canonical form \cite[Theorem~4.1]{Xia2016}, in the terminology
of \cref{sec:matchgate-matrix-preliminaries}, gives an invertible output
matchgate transformation that sends $R$ to the row space of $J_{1,t}$,
after fixed output flips if needed.  Thus it sends $R$ to
\[
  \Span\{e_{0^t}^{\trans},e_{10^{t-1}}^{\trans}\}.
\]
The two parity lines map to the displayed coordinate lines.  If they are
interchanged, compose with a flip of the active output mode.  Rescale the
generators to obtain the stated equalities.

If $m=0$, the assertion is the convention $\MG_0=\C$.  Assume $m\ge1$.

Apply the same transformation to every consecutive block of $G$.
Matchgate signatures are closed under this blockwise composition, and the
result is
\[
  \sum_{x\in\bits^m}h(x)
       e_{x_10^{t-1}}^{\trans}\otimes\cdots\otimes
       e_{x_m0^{t-1}}^{\trans}.
\]
Pin the $t-1$ inactive wires of every block to zero, keeping the inherited
cyclic order on the active wires.  The remaining signature is exactly
$h$, so matchgate closure under pinning proves $h\in\MG_m$.
\end{proof}

\begin{lemma}[Rank-one stripping and flag normal form]
\label[lemma]{lem:flag-normal-form}
Assume $q\ge1$ and $M$ has row rank three, and let
$T\in(V^*)^{\otimes q}$ be nonzero with $TM^{\otimes q}\in\MG_{qt}$.

If every mode support of $T$ is a line, then $T$ is completely
decomposable.  More generally, suppose a plane $P\le V^*$ has the property
that $PM$ is the row space of a rank-two exact matchgate matrix, let
$P^+,P^-$ be its pullback parity endpoints, and suppose a line
$L\not\subset P$ satisfies
\[
  S_i(T)\in\{P,P^+,P^-,L\}
  \qquad\text{for every port }i.
\]
Choose the generators $g_0,g_1$ supplied by
\cref{lem:rank-two-decoding} for $R=PM$, and choose
$p_0\in P^+$ and $p_1\in P^-$ so that $p_aM=g_a$ for $a\in\bits$.  Put
$\iota(e_a)=p_a$ for $a\in\bits$.  If
\[
  I=\{i\in[q]:S_i(T)=P\},
\]
then there are nonzero $q_j$ spanning $S_j(T)$ for $j\notin I$ and an
exact Boolean matchgate $h_T\in\MG_{|I|}$ such that, in the original port
order,
\begin{equation}
  T=\sum_{\sigma\in\bits^I}h_T(\sigma)
       \bigotimes_{j=1}^{q}v_j(\sigma),
  \qquad
  v_j(\sigma)=
  \begin{cases}
    p_{\sigma(j)},&j\in I,\\
    q_j,&j\notin I.
  \end{cases}
  \label{eq:flag-normal-form}
\end{equation}
Here $h_T$ uses the cyclic order inherited by the ports in $I$.  If
$I=\varnothing$, it is a nullary scalar.  Necessarily $|I|=0$ or
$|I|\ge2$; in the latter case every mode of $h_T$ has rank two.  Each
$q_jM$, as well as $p_0M,p_1M$, is an exact pure matchgate vector.
\end{lemma}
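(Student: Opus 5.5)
The plan is to fix one port at a time, choose a basis of $V^*$ adapted to $P^+,P^-,L$, and expand $T$ in the basis of the support at each port. Then I will show that the coefficient tensor, after the decoding of \cref{lem:rank-two-decoding}, is an exact matchgate.

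\textbf{Line-supported ports.} If $S_i(T)$ is a line $\C q_i$, then the mode-$i$ flattening has rank one. Hence $T=q_i\otimes T_i'$ with $T_i'$ nonzero. Iterating over all line-supported ports, which is legitimate because the mode supports of $T_i'$ at the remaining ports equal those of $T$, gives
\[
T=\Bigl(\bigotimes_{j\notin I}q_j\Bigr)\otimes T_I ,
\]
with the factors placed in the original port order. Here $T_I$ is a tensor on the ports in $I$ whose every mode support is $P$. If $I=\varnothing$, this already proves complete decomposability, including the first assertion of the lemma.

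\textbf{Lifted rays are pure.} Each $q_jM$ is an exact pure matchgate vector. To see this, note that $TM^{\otimes q}$ factors as $(q_jM)\otimes(\text{rest})$ with the rest nonzero. Pinning the other blocks against a suitable Boolean vector, or equivalently using the block-support identification of \cref{lem:transformed-block-support} together with the rerooting convention of \cref{sec:support-terminology}, exhibits $q_jM$ as a nonzero multiple of a contraction of exact matchgates. For $p_0M=g_0$ and $p_1M=g_1$, purity follows directly from \cref{lem:rank-two-decoding}.

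\textbf{Decoding the plane-supported ports.} To extract $T_I$ as an exact matchgate, I would contract each block $j\notin I$ of $TM^{\otimes q}$ with a fixed Boolean covector $c_j$ satisfying $(q_jM)\cdot c_j=1$. Such a covector can be taken to be a coordinate pin $e_x$ at a nonzero coordinate $x$ of $q_jM$, which keeps the contraction within exact matchgates. The result is $T_IM^{\otimes|I|}$, and it is an exact matchgate on the $I$ blocks in the inherited cyclic order. Every block support of it is $PM=R$, so it lies in $R^{\otimes|I|}$. Expanding $T_I=\sum_\sigma h(\sigma)\bigotimes_{j\in I} p_{\sigma(j)}$ and applying \cref{lem:rank-two-decoding}, the Boolean coefficient tensor $h$ lies in $\MG_{|I|}$. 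Tensoring back with the $q_j$ yields \eqref{eq:flag-normal-form}.

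\textbf{Remaining claims.} $T_I$ has every mode support equal to the plane $P$, so every mode of $h_T$ has rank two. For $|I|=1$, the single mode of $T_I$ would have support equal to the span of the vector $T_I$ itself, which is a line and not $P$. This contradicts $S_i(T)=P$, so $|I|\ne1$.

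\textbf{Main obstacle.} The step I expect to be hardest is the pinning of line blocks. One has to check that contracting a block with a coordinate pin is a legal ordered planar operation on interior blocks, not just on the first block. Pinning individual wires is closed under matchgate operations, so I will pin the $t$ wires of the block one by one. I also have to confirm that removing the block does not disturb the cyclic order of the remaining blocks; it only deletes a consecutive segment.
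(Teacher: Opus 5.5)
Your proposal is correct and follows the paper's proof essentially step for step. You strip the line-supported ports, get purity of each $q_jM$ by pinning the complementary blocks at a nonzero coordinate, pin the line-supported blocks to recover a nonzero multiple of $T_IM^{\otimes|I|}$ as an exact matchgate, and decode it in the endpoint basis via \cref{lem:rank-two-decoding}; your explicit argument that $|I|=1$ is impossible (the single plane-supported port would carry the line $\C T_I$) makes precise a point the paper's proof leaves implicit.
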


\begin{proof}
If $S_i(T)$ is a line, minimality of the mode support gives
$T=q_i\otimes T'$ after placing port $i$ first only in the notation.
Tensoring by a nonzero vector does not change any other mode support, so
the operation may be repeated at every rank-one port.  If every port has
rank one, induction on $q$ gives complete decomposability.

For any stripped factor $q_j$, pin all other Boolean wires of the nonzero
lift $TM^{\otimes q}$ to a joint coordinate on which the residual factor
is nonzero.  The remaining block is a nonzero scalar multiple of $q_jM$,
so $q_jM$ is an exact matchgate vector.  The vectors $p_0M,p_1M$ are the
parity endpoints of a rank-two matchgate row space and are exact matchgate
vectors by canonical form.

In the second situation, stripping the ports outside $I$ leaves a core
$T_I\in P^{\otimes|I|}$.  For each $j\notin I$, choose a bit string $y_j$
with $(q_jM)(y_j)\ne0$.  Pinning the corresponding Boolean block of
$TM^{\otimes q}$ to $y_j$ leaves a nonzero scalar multiple of
$T_IM^{\otimes|I|}$.  Thus the core lift is an exact matchgate.  Expand
$T_I$ uniquely in the basis $(p_0,p_1)$ of $P$; its coefficient tensor is
$h_T$.  Applying \cref{lem:rank-two-decoding} with $R=PM$ proves
$h_T\in\MG_{|I|}$.  Finally, injectivity of $p\mapsto pM$ shows that a
rank-one mode of $h_T$ would make the corresponding mode support of $T$ a
line rather than $P$.  Hence every retained mode has rank two.
\end{proof}

We now prove the trichotomy stated in \cref{sec:main-results}.

\begin{proof}[Proof of \cref{thm:structural-trichotomy}]
The rank assumption is in fact also forced by right essentiality, because
every induced right mode support lies in $\operatorname{col}(M)$.  By
\cref{lem:boundary-monotonicity}, every nonzero boundary support in the
all-left closure has dimension one or two.  Since the primitive left
family occurs among the one-vertex gadgets, left essentiality gives
\begin{equation}
  \sum_{S\in\mathcal S_1\cup\mathcal S_2}S=V^*.
  \label{eq:closure-essential}
\end{equation}

If $\mathcal S_2=\varnothing$, every port of every nonzero boundary tensor
has rank one, so \cref{lem:flag-normal-form} gives alternative~\textup{(I)}.
Suppose $\mathcal S_2\ne\varnothing$.  If no monomial flag exists, the
closure is connection-prime by \eqref{eq:closure-essential}, and
\cref{thm:connection-prime} proves alternative~\textup{(II)}.  Otherwise
a monomial flag $(P,L)$ exists and gives the support inclusions in
alternative~\textup{(III)}.  Every positive-arity boundary tensor has an
exact lift by planar contraction of the transformed primitive matchgates, so
\cref{lem:flag-normal-form} applies with the same plane, rays, and encoding;
a nullary boundary tensor is already a scalar.
The alternatives are mutually exclusive and exhaustive according to
whether $\mathcal S_2$ is empty and, when nonempty, whether a monomial flag
exists.

For the final assertion, take the displayed presentation
$(\cF_k,\cH_k,M_k)$ from
\cref{prop:valid-domain-three-presentation}.  Every exposed edge of an
all-left gadget is incident with either a unary pin, whose support is one
of $\C e_0^*,\C e_1^*,\C e_2^*$, or a port of $X_{01}$, whose support is
\[
  P=\Span(e_0^*,e_1^*).
\]
Boundary-support monotonicity confines the resulting support to one of
those lines or to $P$.  A realized line inside $P$ maps under $M_k$ to a
pure matchgate ray inside the parity-split row space $PM_k$ and hence is
one of its two parity endpoints $\C e_0^*,\C e_1^*$.  Since $X_{01}$
realizes $P$ and $\delta_2$ realizes $L=\C e_2^*$, the displayed
presentation lies in alternative~\textup{(III)}.  Part~\textup{(v)} of
\cref{thm:main} makes $\mu_k$ tend to infinity, and
\cref{prop:controlled-ranks,prop:two-center-coupling} supply the remaining
operational properties.  This also proves the final flag-placement assertion
of \cref{thm:main}.
\end{proof}

The same geometry has a useful binary-context consequence.

\begin{corollary}[Binary-context alternative]
\label[corollary]{cor:binary-context-alternative}
Let $(\cF,\cH,M)$ be a valid qutrit presentation, let
$\mathcal B=M\cH$ be support-essential, assume $\rank M=3$, and consider the all-left gadget
closure of $\cF\mid\mathcal B$.  Suppose one realized rank-two plane $P$ and
one realized line $L\not\subset P$ span $V^*$.  As recorded in
\cref{sec:support-terminology}, $LM$ is then a pure matchgate ray.  Choose nonzero
generators $p_+\in P^+$, $p_-\in P^-$, and $\ell\in L$; the ordered triple
$(p_+,p_-,\ell)$ is an adapted ray basis.  A \emph{binary context} means the
boundary tensor of a connected all-left gadget with two ordered boundary
ports.  Then at least one of the following holds:
\begin{enumerate}[label=\textup{(\alph*)}]
  \item $M$ has a rank-four common matchgate cover;
  \item $M$ has a rank-eight common matchgate cover;
  \item every binary context is partial monomial
  in the adapted ray basis, meaning that its $3$-by-$3$ domain matrix has
  at most one nonzero entry in every row and column.
\end{enumerate}
Consequently, one non-partial-monomial binary context forces width at most
three.
\end{corollary}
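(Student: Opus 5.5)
The plan is to reuse the case split from the proof of \cref{thm:connection-prime} and apply it to the supports of one binary context. Fix a connected all-left gadget with two ordered boundary ports and nonzero boundary tensor $T\in(V^*)^{\otimes2}$. By \cref{lem:boundary-monotonicity}, each of its two mode supports is a realized ray or plane, and therefore lies in $\mathcal S_1\cup\mathcal S_2$. If $\mathcal S_2$ contains a plane $Q\ne P$, then $PM$ and $QM$ are distinct rank-two Gaussian row spaces inside the three-dimensional space $\row(M)$. They meet in a line, and \cref{lem:gaussian-hull} gives a rank-four cover, which is alternative (a). If some realized line $L'$ lies outside $\{P^+,P^-,L\}$, then the pure-pencil and common-annihilator argument in the second half of the proof of \cref{thm:connection-prime} applies verbatim: it uses only $P$, $L$, $L'$ and $\row(M)=PM+LM$. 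That argument yields a $(t-3)$-dimensional isotropic common annihilator and, through \cref{lem:isotropic-kernel-cover}, a rank-eight cover, which is alternative (b). The case $t\le3$ is handled by the identity matrix, as in that proof. Note that these two subcases rely only on the existence of realized supports, not on any particular binary context.

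It therefore remains to treat the situation in which $(P,L)$ is a monomial flag. In this case I would apply \cref{lem:flag-normal-form} with $q=2$ to the binary context $T$. The normal form leaves only two shapes. First, $I=\varnothing$: then $T=q_1\otimes q_2$ with each $q_j$ spanning one of $P^+,P^-,L$. In the adapted basis $(p_+,p_-,\ell)$ this is a matrix with a single nonzero entry, hence partial monomial. Second, $|I|=2$: then $T=\sum_{\sigma}h_T(\sigma)\,p_{\sigma_1}\otimes p_{\sigma_2}$ with $h_T\in\MG_2$, and the lemma says $|I|=1$ cannot occur. Matchgate parity forces $h_T$ to be supported either on $\{00,11\}$ or on $\{01,10\}$. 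In the adapted basis this gives a diagonal or anti-diagonal $2\times2$ block on $(p_+,p_-)$ and zeros in the $\ell$ row and column, which is again partial monomial. One point to check is that $p_0,p_1$ from \cref{lem:rank-two-decoding} are generators of $P^+,P^-$. Consequently, the adapted basis differs from $(p_0,p_1,\ell)$ only by rescaling, and rescaling preserves the partial-monomial pattern. This gives alternative (c).

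For the consequence, suppose a non-partial-monomial binary context exists. Then (c) fails, so (a) or (b) holds, and \cref{lem:exact-cover-compression} yields an exact labelwise presentation of width at most three. The main obstacle is making sure the case split is genuinely exhaustive without assuming the full bi-essential and port-deficient hypotheses of \cref{thm:connection-prime}. The corollary assumes only right essentiality and $\rank M=3$. I therefore need to confirm that the rank-eight argument uses only the existence of the realized line $L'\notin\{P^+,P^-,L\}$ together with $\row(M)=PM+LM$, and never invokes port deficiency. I also need to confirm that a realized line inside $P$ must be $P^{\pm}$, which follows from parity of exact matchgate vectors. Finally, a binary support might be all of $V^*$ if primitive deficiency is not assumed. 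In that case the context is automatically not in flag form, and I would handle it by observing that it forces either a second plane or an extra ray through a minimal mode decomposition. If this cannot be done, I would add port deficiency as the implicit standing assumption of this section.
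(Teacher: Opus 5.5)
\textbf{Gap: rank-three binary contexts are never excluded.} Your global split is sound as far as it goes.
\begin{itemize}
\item A second realized plane gives (a) via \cref{lem:gaussian-hull}.
\item An extra realized ray $L'\notin\{P^+,P^-,L\}$ gives (b) by the pure-pencil and common-annihilator argument of \cref{thm:connection-prime}. That argument really uses only $P$, $L$, $L'$, parity, and $\row(M)=PM+LM$, not port deficiency.
\item In the remaining monomial-flag case, \cref{lem:flag-normal-form} with $q=2$ gives partial monomiality.
\end{itemize}
The problem is the case you flagged but did not close. Nothing in your argument rules out a binary context whose mode supports are all of $V^*$. Such a context violates the hypothesis $S_i(T)\in\{P,P^+,P^-,L\}$ of the flag normal form, so alternative (c) is not established.

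Your proposed repair does not work as stated. A rank-three context has support $V^*$, which is not an element of $\mathcal S_1\cup\mathcal S_2$. A minimal mode decomposition of it produces no new realized ray or plane, since realized supports must come from actual connected all-left gadgets. Adding port deficiency as a hypothesis would prove a weaker statement than the corollary, which assumes only right essentiality and $\rank M=3$.

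\textbf{Missing idea: the matchgate rank theorem.} Let $A$ be the $3$-by-$3$ domain matrix of a binary context, with $A_{ij}=T(i,j)$. Its lift $TM^{\otimes2}$ is the matrix $M^{\trans}AM$. After the fixed output reversal, which is only a column permutation, this is an exact one-block-input, one-block-output matchgate matrix. Because $M$ has full row rank, its rank is exactly $\rank A$. A nonzero exact matchgate matrix has rank $2^d$, so $\rank A\in\{0,1,2\}$ and rank three never occurs. Hence every nonzero binary context has both mode supports of dimension one or two, and these lie in $\mathcal S_1\cup\mathcal S_2$. Your flag-case argument then goes through, and rank zero is trivially partial monomial. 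This is exactly how the paper's proof begins.

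\textbf{Comparison of routes.} After this step the paper argues locally rather than globally.
\begin{itemize}
\item For rank one, it pins blocks to show both local rays map to pure rays, and invokes the rank-eight argument if either is noncoordinate.
\item For rank two, it applies the Gaussian hull if either support plane differs from $P$, and uses parity otherwise.
\end{itemize}
Your global route through the trichotomy machinery is equally valid once rank three is excluded. The paper's local route has the mild advantage that the cover is built from the supports of the given non-partial-monomial context itself.
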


\begin{proof}
Let $A$ be the domain matrix of a binary context, expressed in the chosen
adapted basis.  The stated full row rank also follows from
support-essentiality of $\mathcal B$, because every induced right mode
support is contained in $\operatorname{col}(M)$.  Its Boolean lift is an exact matchgate matrix, and this full
row rank preserves the matrix rank of $A$.  Since $\rank A\le3$ while a
nonzero exact matchgate matrix has power-of-two rank, the only possibilities
are ranks zero, one, and two.  Let $\sigma\in\{+,-\}$ be the sign for which
$P^\sigma M$ has the same parity as $LM$.

If $\rank A=1$, write $A=cxy^{\trans}$ with
$c\in\C^\times$ and nonzero coordinate columns $x,y\in\C^3$.  Full row
rank of $M$ makes both $xM$ and $yM$ nonzero.  Choose
$\beta\in\bits^t$ with $(yM)(\beta)\ne0$ and pin the second Boolean block of
the exact lift to the coordinate $\beta$.  The remaining block is the
nonzero scalar multiple $c(yM)(\beta)\,xM$, and hence $\C(xM)$ is a pure
matchgate ray.  Choosing $\alpha$ with $(xM)(\alpha)\ne0$ and pinning the
first block proves the same for $\C(yM)$.  Thus the local rays $\C x$ and
$\C y$ map to pure matchgate rays.
Call a ray \emph{noncoordinate} when it is not one of
$\C p_+,\C p_-,\C\ell$.  If either local ray is noncoordinate, parity places
it as a third pure
point on the same-parity line through $P^\sigma$ and $L$.  The pure-pencil
and common-annihilator argument in \cref{thm:connection-prime} gives a
rank-eight cover.  Otherwise $A$ is a scalar multiple of a matrix unit---a
matrix with one nonzero coordinate---and is partial monomial.

If $\rank A=2$, its row and column mode supports $S_1(A),S_2(A)$ are planes
that map to rank-two matchgate row spaces.  If either plane differs from $P$,
\cref{lem:gaussian-hull} gives a rank-four cover.  Otherwise both supports
equal $P$.  Expanding the lift in the opposite-parity endpoint rays
$P^+,P^-$, global matchgate parity allows only the two diagonal entries or
only the two antidiagonal entries.  Hence $A$ is partial monomial.  Rank
zero is trivial.
\end{proof}

\section{Discussion and Future Directions}
\label{sec:discussion}

The main results separate two sources of representation size.  Local domain
rank and port rank control the geometry visible on one edge, but they do not
control how much labelled information must be coordinated across many ports.
Support mobility eliminates that information by producing a small common
Gaussian cover.  A fixed plane-plus-ray flag leaves residual right-side data
on which the minimum exact width can grow.  The natural next step is to make
this distinction intrinsic to the language rather than to one displayed
presentation.

\subsection{From a support trichotomy to a language-level width theory}

The central open problem is a pointwise characterization of minimum exact
width.  The residual right tables defined from a flag presentation are useful
coordinates, but are not themselves invariants: another presentation may
change the flag, the domain, or the table.  One seeks a
presentation-independent invariant that yields a bounded-width realization
when small and quantitative lower bounds, valid for every exact presentation,
when large.  The labelled star-table obstruction is one such lower-bound
certificate for our family, but not yet a general characterization.

Two boundary cases sharpen the question.  Alternative~\textup{(I)} of
\cref{thm:structural-trichotomy} forces every nonzero positive-arity all-left
boundary tensor to factor, yet the minimum exact width of languages in this
branch is unclassified.  Is there a constant $C$ such that every language
admitting an alternative~\textup{(I)} qutrit presentation has minimum exact
width at most $C$, or is there such a family whose minimum widths tend to
infinity because of right-side data?  At the other boundary,
\cref{cor:binary-context-alternative} shows that one non-partial-monomial
binary context forces width at most three.  Is there a function $c(a,m)$ such
that, for qutrit presentations with primitive arity at most $a$ and at most
$m$ labels, failure of ray separability or flag confinement is always
witnessed by a planar context with at most $c(a,m)$ vertices and boundary
ports?  For rational or explicitly represented algebraic presentations, can
the remaining universal confinement conditions be decided from the primitive
signatures?

\subsection{Bounded arity and the source of unbounded information}

Our family uses a fixed number of labels and stores its growing table in the
increasing arity of $B_k$.  Does bounded arity alone force a bound $h(q,a)$ on
the minimum exact width of every exactly matchgate-realizable finite labelled
language on a $q$-element domain whose labels have arity at most $a$?  If not,
does a uniform bound $h(q,a,m)$ hold once the number of labels is also fixed?
A counterexample to this latter statement would require a storage mechanism
different from the growing-arity table used here.

For rational or integer languages, an effective version is equally
important.  With domain size $q$, arity bound $a$, label bound $m$, and
coefficient bit length $B$, can one give a useful quantitative or computable
bound $h(q,a,m,B)$?  Merely knowing that a finite maximum exists after all
four parameters are fixed would not explain the representation theory; the
goal is a structural and effective bound.

\subsection{Robustness under broader equivalence notions}

Exact labelled equivalence preserves the value of every ordered labelled
planar instance under the labelwise correspondence.  A first relaxation
would allow each label to
be replaced by a uniformly bounded exact planar gadget while retaining a
common base.  One can then ask whether the star-table lower bound survives,
perhaps after changing its sampling contexts.  More general gadget
interreducibility may alter arities and duplicate labels, so a width measure
must account for the size and uniformity of the substitutions as well as the
number of base wires.

Polynomial-time interreducibility is a still broader question and requires a
specified uniform input model for growing languages and coefficients.  The
present theorem makes no claim under either bounded-gadget equivalence or
polynomial-time interreducibility: it is a representation lower bound, not a
complexity lower bound.  Establishing which part of the obstruction is
invariant under bounded gadgets would be the most direct next test.

\subsection{Explicit and quantitative obstructions}

The main obstruction uses a positive-integer table chosen by Zariski
avoidance, but the proof supplies neither a simple formula nor a useful height
bound.  The transcendental companion has a closed formula and the same width
lower bound, at the cost of leaving the algebraic coefficient setting.  A
natural target is a uniformly generated positive-integer or rational family
with per-entry bit length polynomial in $k$ and construction time polynomial
in the total output length.

There is also a large quantitative gap.  The current construction gives
\[
  \Omega\!\left(\frac{2^{k/2}}{k}\right)
  \le \mu_k \le 2^k+3.
\]
It is unknown whether the dimension obstruction can be strengthened, whether
the block-Pfaffian realization can be compressed, or whether the true width
lies at an intermediate scale.  Resolving this gap would distinguish the
intrinsic representational cost of the table from the overhead of the present
universal interpolation.

\subsection{Beyond qutrits and intrinsic compression}

The qutrit case is special because every deficient nonzero mode support is a
ray or a plane.  Larger domains admit intermediate-dimensional supports and
richer incidence patterns among Gaussian row spaces.  A higher-domain theory
would likely organize these supports through Grassmannian or spinorial
incidence data and determine which kinds of mobility force a
domain-controlled common cover.  For each fixed domain size, does this
incidence geometry admit a finite stratification analogous to
\cref{thm:structural-trichotomy}?  If such stratifications exist, do their
types eventually stabilize as the domain grows, or do genuinely new
exceptional strata continue to appear?

From the free-fermionic viewpoint, the same question asks when a common exact
Gaussian tensor representation admits simultaneous mode compression.  This
is an algebraic reformulation, not a claim about physical qubit compression,
but it may suggest invariants that are invisible in the original base
coordinates.

Thus the trichotomy identifies support mobility as the local mechanism that
forces compression, while the construction shows that a fixed flag can
preserve labelled information of arbitrarily large exact cost.  The remaining
challenge is to recognize that information intrinsically---from the language
rather than from a chosen presentation.

\noindent\textbf{AI Disclosure:}
We used OpenAI Codex to assist with developing and checking mathematical
arguments, reviewing the literature, revising the exposition and organization,
and preparing the \LaTeX{} manuscript.  The tool materially affected
Sections~1--11.  More details on the affected proof architecture and technical
components can be found in Section~1.3.  The authors verified the correctness
and originality of all content, including references.

\end{document}